\algnewcommand\algorithmicinput{\textbf{Input:}}
\algnewcommand\INPUT{\item[\algorithmicinput]}
\algnewcommand\algorithmicret{\textbf{Return:}}
\algnewcommand\RETURN{\item[\algorithmicret]}
\newcommand{\abs}[1]{\ensuremath{\left\vert#1\right\vert}}
\newcommand\ZuWeis{\mathrel{\mathop:\!\!=}} 
\newcommand\WeisZu{\mathrel{=\!\!\mathop:}} 
\newcommand{\R}{\ensuremath{\mathds{R}}}  
\newcommand{\Pp}{\ensuremath{\textbf{P}}}  
\newcommand{\argmax}{\operatorname{argmax}} 
\newcommand{\argmin}{\operatorname{argmin}} 
\newcommand{\cH}{\mathcal{H}}
\newcommand{\cN}{\mathcal{N}}
\newcommand{\indE}{\mathds{1}} 
\newcommand*{\e}{\textup{e}}
\newcommand{\cpset}{\Sigma}
\newcommand{\Sna}{\hat{S}_{n,\alpha}}
\newcommand{\qna}{q_n(\alpha)}
\newcommand{\pen}{P_{\ell,n}}
\newcommand{\mdl}{\Lambda}
\newtheorem{example}{Example}[section]
\newtheorem{theorem}{Theorem}[section]
\newtheorem{lemma}{Lemma}[section]
\newtheorem{definition}{Definition}[section]
\newtheorem{remark}{Remark}[section]
\begin{document}

\begin{center}
\begin{minipage}{.8\textwidth}
\centering 
\LARGE Multiscale quantile segmentation\\[0.5cm]

\normalsize
\textsc{Laura Jula Vanegas$^\ast$, Merle Behr$^\ddagger$ and Axel Munk$^{\ast,\dag}$}\\[0.1cm]
Institute for Mathematical Stochastics$^\ast$,\\
University of G\"ottingen,\\
Max Planck Institute for Biophysical Chemistry$^\dag$,\\
G\"ottingen, Germany,\\
and\\
Department of Statistics$^\ddagger$,\\
University of California at Berkeley,
Berkeley, USA\\
Email: \verb+{ljulava,munk}@math.uni-goettingen.de, behr@berkeley.edu+

\end{minipage}
\end{center}

\begin{abstract}
We introduce a new methodology for analyzing serial data by quantile regression assuming that the underlying quantile function consists of constant segments.
The procedure does not rely on any distributional assumption besides serial independence. 
It is based on a multiscale statistic, which allows to control the (finite sample) probability for selecting the correct number of segments $S$ at a given error level, which serves as a tuning parameter. 
For a proper choice of this parameter, this tends exponentially fast to the true $S$, as sample size increases. 
We further show that the location and size of segments are estimated at minimax optimal rate (compared to a Gaussian setting) up to a log-factor.  
Thereby, our approach leads to (asymptotically) uniform confidence bands for the entire quantile regression function in a fully nonparametric setup.
The procedure is efficiently implemented using dynamic programming techniques with double heap structures, and software is provided.
Simulations and data examples from genetic sequencing and ion channel recordings confirm the robustness of the proposed procedure, which at the same time reliably detects changes in quantiles from arbitrary distributions with precise statistical guarantees.
\end{abstract}

\textbf{Keywords:} Change-points, Double heap, Dynamic programming, Multiscale methods, Quantile regression, Robust segmentation.\\

\textit{2010 Mathematics Subject Classification:} 62G08, 62G15, 62G30, 62G35, 90C39.

\section{Introduction}\label{sec:intro}
The analysis of serial data with presumably abrupt underlying distributional changes, for example, in its mean, median, or variance, is a long-standing issue and relevant to a magnitude of applications, e.g. to econometrics and empirical finance \citep{preuss2015, shen2016, russell2019}, evolutionary and cancer genetics \citep{liu2013, zhang2007, jonas2016} or neuroscience \citep{cribben2017}, to mention a few (see Section \ref{subsec:rel} for a more comprehensive discussion).
%
%
%
The present work 
proposes a new methodology for this task, denoted as \textit{Multiscale Quantile Segmentation (MQS)} which on the one hand, is extremely robust as it does not rely on any distributional assumption (apart from independence) and on the other hand, still has high detection power with (even non-asymptotic) statistical guarantees.
More precisely, our methodology is based on quantile segments, i.e., quantile regression functions which are modeled as right-continuous piecewise constant functions $\vartheta$ with finitely but arbitrary many (unknown) segments $S$.
We stress that even in such situations where the quantiles are not piecewise constant, this may serve as a reasonable proxy to cartoonize the quantile function in a simple but meaningful way, resulting in a ``quantilogram'' similar in spirit to Tukey's regressogram \citep{tukey1961}.
To fix our setting, we assume that the underlying regressor (e.g., time) is in the interval $[0,1)$ and is sampled equidistantly at $n$ sampling points $x_i\ZuWeis (i-1)/n$. 
As our main results are nonasymptotic, extensions to general (ordered) sampling domains and non-equidistant sampling points are immediate. 
All such segment functions $\vartheta:[0,1)\rightarrow\R$ are then comprised in the space
\begin{equation}\label{eq:Sset} 
\cpset=\left\{\vartheta =\sum_{s=1}^S\theta_s \indE_{[\tau_{s-1},\tau_{s})} : \, \theta_s\neq\theta_{s+1},\right. \\
\left. 0=\tau_0<\tau_1<...<\tau_{S}=1,\, S<\infty\right\}.
\end{equation}
Hence, our quantile function consist of $S$ (unknown) distinct segments with unknown segment values $\theta_s\in\R$ and segment lengths $\tau_s-\tau_{s-1}$ (see Figure \ref{fig:sex} for illustration). 
%
\medskip
\begin{mod}(\textsc{Quantile Segment Regression model})\label{mod:model1}
Let $\beta\in(0,1)$ and $\vartheta_\beta \in \cpset$ a segment function. In the QSR-model one observes $n$ independent random variables $Z_i$ at equidistant sampling points $x_{i,n}=x_i=(i-1)/n$ for $i=1,\ldots n$ such that its $\beta$-quantiles are given as
\begin{align}\label{eq:model1}
\vartheta_{\beta}(x_i) = \inf \{ \theta :\; \Pp(Z_i \leq \theta) = \beta \} \quad\mbox{ for  }i=1,\ldots,n.
\end{align}
\end{mod}
%
Note that for a particular observed signal $Z_i$, $i = 1, \ldots, n$, different quantiles $\beta \neq \beta^\prime$ will, in general, correspond to different segment functions $\vartheta_\beta \neq \vartheta_{\beta^\prime} \in \Sigma$ in (\ref{eq:Sset}).
Here we focus on finding the corresponding segment function for a particular, but arbitrary, given quantile $\beta$, e.g., the median ($\beta = 0.5$). 
Screening for several $\beta$-values then will allow to specify in which part of the distribution significant changes occur.
This is in contrast to non-parametric distributional segmentation (see Section \ref{subsec:rel} for references), where one would aim to find changes in the distribution per se, i.e. a segment change is detected when (at least) some of the quantile functions changes.
Therefore, we do not consider $\beta$ as a tuning parameter, but rather as a user-specific input, that depends on the particular task of interest.

\begin{example}\label{ex:intro}
Figure \ref{fig:sex} shows a data example from an ion channel recording experiment performed by the Steinem lab (Insitute of Organic and Biomolecular Chemistry, University of G\"ottingen) (see more details in Section \ref{subsec:ion}).
Using the patch-clamp technique \citep{sakmann1995b}, one can measure the current flow of ions transported through an individual channel in a cell's membrane. 
Figure \ref{fig:sex} shows data from the bacterial porin PorB, an outer membrane porin of \textit{Neisseria meningitidis} -- a pathogenic bacteria well known for being the agent of epidemic meningitis \citep{virji2009}.
Over time, the ion channel changes its gating behavior by closing and reopening its pore.
This leads to a piecewise constant current flow structure, shown as black dots in Figure \ref{fig:sex}.
The opening and closing behavior are of immediate physiological interest, and we can consider this within the QSR-model, where in this data example $n = 2,801$.
By targeting different quantiles, one can recover different properties of the channel.
For example, the median $\beta = 0.5$ provides a measure of average current flow and can thus quantify the channel's overall configuration.
The second row in Figure \ref{fig:sex} shows in red our MQS segmentation for $\beta = 0.5$ and $\alpha = 0.01$, with fully automatic estimated number of segments $\hat{S}=10$ (recall (\ref{eq:Sset})). 

In addition, upper and lower tail quantiles provide more information on the variability of the channel behavior. The third and fourth rows in Figure \ref{fig:sex} show in light red our MQS segmentation for $\beta = 0.25$ and $\beta = 0.75$, respectively.
Moreover, the first row in Figure \ref{fig:sex} shows together the MQS' estimates for the 0.25, 0.5, and 0.75-quantiles. 
We refer to this visualization of the three quantiles as the \textit{multiscale segment boxplot} (MSB).
This shows that the variance on the lower state is lower than on the higher state, a common characteristic of many channels, known as open channel noise, see \cite{sakmann1995b}, which is well known to aggravate data analysis. By targeting different quantiles, MQS provides a direct approach for its quantification. 
Besides estimates for the respective quantile segmentation, our procedure also comes with a precise uncertainty quantification.
This is illustrated with confidence intervals for the segment locations (blue lines) and confidence bands for the segment function (grey areas).
We stress that these confidence statements do not rely on any distributional assumptions for the data generating process (besides independence), which is in contrast to other available approaches for this purpose, see Section \ref{subsec:rel}.
This makes the MQS procedure particularly useful in practice, where parametric distributional assumptions are often problematic.
For example, although a Gaussian assumption is widely used for ion channel data, see e.g., \cite{pein2017a}, performing a Shapiro-Wilk's \citep{shapiro1965} test on the first segment of the present ion channel data rejects with p-value = 0.0016.
\begin{figure}[th!]
\centering
\includegraphics[width=0.95\textwidth]{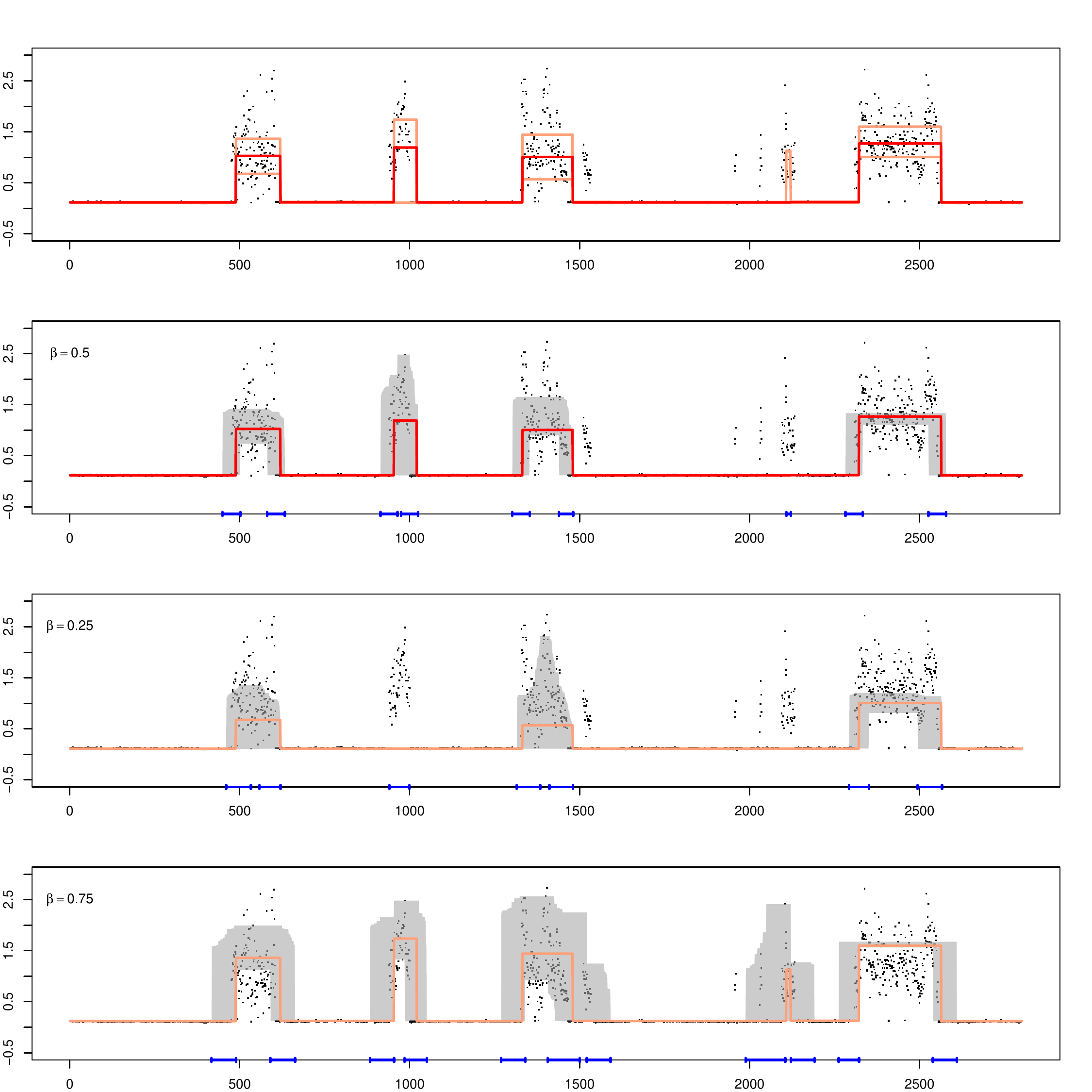}
\caption{\small First row: Observations $Z_1,\ldots,Z_n$, $n=2801$, from ion channel data recorded at the Steinem lab (Insitute of Organic and Biomolecular Chemistry, University of G\"ottingen),
and the multiscale segment boxplot (MSB), with estimates (MQSE) for the median (red line), the 0.25- and 0.75-quantiles (light red lines), at nominal level $\alpha = 0.01$, see (\ref{eq:1statg}). 
Subsequent rows: MQSE (red and light red), together with  $99\%$ simultaneous confidence bands (gray area) and simultaneous confidence intervals for the change-point locations (blue intervals), for $\beta=0.5,0.75,0.25$, respectively. 
}
\label{fig:sex}
\end{figure}
\end{example}
As illustrated in Example \ref{ex:intro}, the aim of this work is to provide a statistical methodology for multiscale quantile segmentation (MQS) in the general QSR-model. 
In particular, we stress that MQS is robust to arbitrary distributional changes that keep the respective quantile under consideration unchanged, including changes of variance, as demonstrated in Example \ref{ex:intro}.
See also Appendix \ref{sec:example2}, which further illustrates this robustness with some synthetic simulated data example.
For each $\beta\in(0,1)$, MQS provides estimates and confidence statements for:
\begin{enumerate}
\item the number $S$ of segments $S(\vartheta_\beta)$,
\item the segment locations $\tau_1,\ldots,\tau_{S-1} \in [0,1)$,
\item and, based on these, the segment values, i.e., the quantiles $\theta_1,\ldots,\theta_S \in \R$.
\end{enumerate}
Our approach is based on a simple transformation: 
Given a candidate segment function $\vartheta \in \cpset$ (which will depend on data $Z_i$), we consider the transformed binary (pseudo) data
\begin{equation}\label{eq:transformtion}
W_i=W_i(Z_i,\vartheta(x_i)) \ZuWeis \begin{cases}
       0& \mbox{  if  }\vartheta(x_i)-Z_i<0,\\
			 1& \mbox{  if  }\vartheta(x_i)-Z_i\geq 0.
			\end{cases}
\end{equation} 
Note that if and only if the candidate function $\vartheta$ equals the true underlying regression function $\vartheta_\beta$ in the QSR-model, then $W_1,\ldots,W_n$ are i.i.d.\ Bernoullis with success probability $\beta$.
Based on this, our methodology ``tests" in a multiscale fashion any possible candidate function in $\cpset$ to be valid: it selects a segment quantile function which does not contradict the i.i.d.\ Bernoulli assumption and, among those, has the smallest number of segments.

To this end, the unknown number and locations of segments of $\vartheta_\beta$ will be detected by a certain multiscale statistic $T_n(Z,\vartheta) = T_n(W(Z,\vartheta), \vartheta)$ (see (\ref{eq:mts})), which combines on intervals $[x_i,x_j]$ where $\vartheta|_{[x_i,x_j]}$ is constant, with $1 \leq i \leq j \leq n$, the corresponding log-likelihood ratio tests for the hypothesis testing problems
\begin{equation}\label{eq:introLocHyp}
H_{ij}: W_i, \ldots,W_j \overset{i.i.d.}{\sim} B(\beta) \quad \text{vs.} \quad K_{ij}: W_i, \ldots,W_j \overset{i.i.d.}{\sim} B(\beta^\prime)\text{ with }\beta^\prime \neq \beta,
\end{equation}
where $B(\beta)$ denotes a Bernoulli distribution with success probability $\beta$ (see (\ref{eq:mts}) for the precise definition of $T_n$). 
In a first step MQS determines the number of segments from the data given such $\beta\in(0,1)$. 
For a given threshold $q \in \R$ only depending on $\beta$, $n$, and the nominal level $\alpha$ (to be specified later), the estimated number of segments $\hat{S}$ is then the solution of a (non-convex) optimization problem with convex constrains given by the multiscale statistic $T_n$, namely,
\begin{align}\label{eq:hatK}
\hat{S}=\hat{S}(q) \ZuWeis \inf_{\vartheta \in \cpset } \# J(\vartheta)\quad \text{   s.t   } \quad T_n(W(Z,\vartheta), \vartheta) \leq q,
\end{align}
where $J(\vartheta)$ denotes the set of segments and $\# J(\vartheta)$ the number of segments of $\vartheta \in \cpset$.
That is, we choose the smallest number of segments such that the multiscale test $T_n(W(Z,\vartheta), \vartheta) \leq q$ still accepts. As, in a certain sense, this is a model selection step, we call $\hat{S}$ the MQS selector of S.
In a second step, we then constrain all candidate segment quantile functions to those with $\hat{S}$ segments, that is, to the set
\begin{align}\label{eq:H}
\cH(q) \ZuWeis \{\vartheta \in \cpset : \#J(\vartheta) = \hat{S}(q) \text{ and } T_n(W(Z,\vartheta), \vartheta) \leq q\}.
\end{align}
The MQS estimate (MQSE) $\hat{\vartheta}$ is then a particular simple segment function in $\cH(q)$ based on the Wald-Wolfowitz runs tests (see Section \ref{sec:theory} for details and Figure \ref{fig:sex} for illustration).
\subsection{Main results: Theory}

For the MQSE $\hat{\vartheta}$, with $\vartheta_\beta$ being the true underlying regression function in the QSR-model, we immediately get from (\ref{eq:hatK}) that 
$\mathbb{P}\left(\hat{S}(q)>S\right)\leq \mathbb{P}\left(T_n\left(Z,\vartheta_\beta\right)\leq q\right).$
In Section \ref{subsec:theoMS} we show that $T_n(Z,\vartheta_\beta)$ can be bounded in distribution with a random variable $M_n = M_{n,\beta}$ that only depends on $\beta$ and $n$ and hence does not depend on $\vartheta_\beta$ or any other characteristics of the underlying distribution of the observations $Z_i$, namely
\begin{equation}
    M_n\ZuWeis \max_{1\leq i\leq j\leq n}\sqrt{2T_i^j(X,\beta)}-\pen,
\label{eq:mn}
\end{equation}
where $X=(X_1,\ldots,X_n)$ are i.i.d. Bernoulli distributed with mean $\beta$, and a penalization $\pen$ depending only on $n$ and $\ell = j-i+1$, see (\ref{eq:mts}). Therefore, for given $n$ and $\beta$ the finite sample-quantiles of $M_n$ can be computed by Monte-Carlo simulations in a universal manner. 
Hence, choosing $q_n=q_{n}(\alpha)$ as the $(1-\alpha)$-quantile of $M_n$ gives us control for the overestimation error of the number of segments included in our final estimator by a desired level $\alpha$, namely
\begin{equation}
\mathbb{P}\left(\Sna>S\right)\leq\alpha,\label{eq:1statg}
\end{equation}
where $\Sna\ZuWeis\hat{S}\left(q_n(\alpha)\right)$. 
In Theorem \ref{theo:overestimation} we present a refinement of this fact.
Note that $q_{n}(\alpha)$ depends on $\beta$, however, asymptotically this dependency vanishes, see Remark \ref{rem:limitDistM}.
\begin{theorem}[Overestimation error]\label{theo:overestimation}
Consider the QSR-model. For $q = q_n(\alpha)$ as in (\ref{eq:hatK}) and (\ref{eq:quantile}) the MQS-selector $\hat{S}_{n,\alpha}:=\hat{S}(q)$ satisfies uniformly over all segment functions $\vartheta\in\cpset$
\begin{equation}\label{eq:introOverest}
\mathbb{P}\left(\Sna>S+s\right)\leq \alpha^{\lfloor s/2\rfloor+1},
\end{equation}
where $\lfloor \cdot\rfloor := \max\{m\in\mathbb{Z}:m\leq x\}$, for all $x\in\mathbb{R}$, i.e. the floor function.
\end{theorem}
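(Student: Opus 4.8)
The plan is to reduce the overestimation event to a statement about a single i.i.d.\ Bernoulli$(\beta)$ sequence and then to extract a power of $\alpha$ from independence across disjoint blocks. Put $W:=W(Z,\vartheta_\beta)$, which by the observation following (\ref{eq:transformtion}) is i.i.d.\ Bernoulli$(\beta)$, and abbreviate $g([x_i,x_j]):=\sqrt{2T_i^j(W,\beta)}-\pen$ (with $\ell=j-i+1$) for the penalized local statistic, so that $\max_{1\le i\le j\le n}g([x_i,x_j])$ has the law of $M_n$ from (\ref{eq:mn}). Call $[x_i,x_j]$ a \emph{bad} interval if $\vartheta_\beta$ is constant on it and $g([x_i,x_j])>\qna$; by construction $T_n(Z,\vartheta_\beta)>\qna$ iff a bad interval exists.

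The core step shows that on $\{\Sna>S+s\}$ there exist $s+1$ pairwise disjoint bad intervals, via a \emph{data-invariant refinement} argument. Fix any finite set $P$ of locations lying strictly inside true segments of $\vartheta_\beta$; add these as change-points and assign to the resulting pieces values that alternate, within each true segment, between its true level $\theta_t$ and $\theta_t+\varepsilon$, where $\varepsilon>0$ is chosen small enough that no observation $Z_i$ of the piece in question lies in $(\theta_t,\theta_t+\varepsilon]$ and that consecutive levels (including at the outer boundaries with neighbouring true segments) stay distinct; this is possible since only finitely many $Z_i$ are involved. This produces $\tilde\vartheta_P\in\cpset$ with at most $S+|P|$ segments, satisfying $W(Z,\tilde\vartheta_P)=W(Z,\vartheta_\beta)=W$ pointwise, and such that every interval on which $\tilde\vartheta_P$ is constant is also an interval on which $\vartheta_\beta$ is constant. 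Consequently $T_n(W(Z,\tilde\vartheta_P),\tilde\vartheta_P)$ equals the maximum of $g$ over the constant intervals of $\vartheta_\beta$ that contain no point of $P$ in their interior; if $P$ ``stabs'' every bad interval this maximum is $\le\qna$, so $\tilde\vartheta_P$ meets the constraint in (\ref{eq:hatK}) and hence $\Sna\le\#J(\tilde\vartheta_P)\le S+|P|$. Taking $|P|=s$ shows that $\{\Sna>S+s\}$ is contained in the event that no $s$ points stab all bad intervals; by the elementary interval-stabbing duality on the line (minimum stabbing number equals maximum number of pairwise disjoint intervals) this forces the existence of $s+1$ pairwise disjoint bad intervals.

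It remains to bound the probability of the latter event. Dropping the ``$\vartheta_\beta$ constant on it'' restriction only enlarges it, so it suffices to bound $\Pp(\mathcal D_{s+1})$, where $\mathcal D_k$ is the event that the i.i.d.\ sample $W$ contains $k$ pairwise disjoint intervals with $g>\qna$. I would estimate $\mathcal D_k$ by a renewal argument: let $j^\ast$ be the smallest index $j$ such that some interval ending at $x_j$ has $g>\qna$ — a stopping time for $(W_i)_i$ — and rerun the search on $W_{j^\ast+1},\dots,W_n$; disjointness forces the next interval to start after $x_{j^\ast}$, and given $\{j^\ast=t\}$ the tail $(W_{t+1},\dots,W_n)$ is again i.i.d.\ Bernoulli$(\beta)$ of length $n-t\le n$ and independent of the past. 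Since for every $m\le n$ the maximum of $g$ over intervals of a length-$m$ i.i.d.\ Bernoulli$(\beta)$ block is stochastically dominated by $M_n$ (realize the block as a prefix of a length-$n$ one), each restart succeeds with conditional probability at most $\Pp(M_n>\qna)\le\alpha$. Iterating, $\Pp(\mathcal D_{s+1})\le\alpha^{s+1}\le\alpha^{\lfloor s/2\rfloor+1}$, which is the claim.

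The delicate part is the first step, i.e.\ turning the verbal ``stabbing'' picture into rigour: one must check that $\varepsilon$ can be chosen to respect simultaneously the data-invariance and the distinct-level requirement of $\cpset$ at every newly created boundary (including pieces abutting neighbouring true segments and repeated splits inside one true segment), and that a change-point can actually be placed strictly inside any bad interval — which requires bad intervals to have scale $\ell\ge2$; one verifies that scale-$1$ intervals can never be bad in the relevant range of $\beta$ and $n$, because $\pen$ at scale $1$ already exceeds $\sqrt{-2\log(\beta\wedge(1-\beta))}$. Granting this, the remaining probabilistic content is exactly the short renewal estimate above, and the exponent $\lfloor s/2\rfloor+1$ stated in the theorem is in fact already what an even cruder pairing of the disjoint bad intervals would yield, hence comfortably implied.
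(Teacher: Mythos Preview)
Your proof follows the same two-step strategy as the paper's: (i) show that overestimation by $s$ forces many pairwise disjoint ``bad'' sub-intervals of constant pieces of $\vartheta_\beta$ on which the local test rejects, and (ii) bound the probability of the latter by a stopping-time/renewal argument on the i.i.d.\ Bernoulli$(\beta)$ sequence $W(Z,\vartheta_\beta)$. Your renewal step is exactly the paper's auxiliary Theorem~A.1 --- their stopping times $\zeta_k(q)$ are your successive first-hit indices $j^\ast$, and the i.i.d.\ restart plus stochastic domination by $M_n$ is identical.

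Where you diverge is in the reduction (i). The paper subtracts the true signal to pass to a constant truth and then asserts, without much detail, that $\tilde S_{n,\alpha}\ge 2s+2$ forces $s{+}1$ disjoint violated intervals. You instead refine $\vartheta_\beta$ directly via a data-invariant $\varepsilon$-perturbation, adding exactly one segment per stabbing point, and invoke the interval-piercing duality explicitly. This is cleaner and in fact yields the sharper implication $\Sna>S+s\Rightarrow s{+}1$ disjoint bad intervals, hence the stronger bound $\alpha^{s+1}$ rather than $\alpha^{\lfloor s/2\rfloor+1}$ --- \emph{provided} singleton intervals are never bad. Your justification for that, namely $\sqrt{2\log(en)}>\sqrt{-2\log(\beta\wedge(1-\beta))}$, fails when $\beta\wedge(1-\beta)<1/(en)$; combined with large $\alpha$ (so that $q_n(\alpha)$ is small or negative) this can make scale-$1$ intervals bad, and then no added change-point can prevent $[x_i,x_i]$ from being a constant piece of the refined candidate. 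So your tighter exponent has a genuine, if narrow, gap. The paper's weaker exponent is precisely what survives the fix: replace each stabbing point by a one-sample segment (two extra change-points) with value chosen so that $W_i$ makes $T_i^i\le\log 2$; this kills all bad intervals containing that sample, scale-$1$ included, at the cost of doubling the segment count and recovering exactly $\lfloor s/2\rfloor+1$.
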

This means, in addition to controlling the overall error to include at least one extra segment ($s=0$), the error of overestimating the number of segments by more than $s$ decays exponentially fast. This reveals that segments detected by MQS are, with very high probability, indeed present in the signal. Therefore, (\ref{eq:introOverest}) guarantees that $\alpha$ controls the false positives in a strong family-wise error sense.
The overestimation bound in Theorem \ref{theo:overestimation} is complemented by an explicit bound for the underestimation error, i.e. to miss a segment, see Theorem \ref{cor:easyunder}.
Together, this allows for precise fine-tuning of the error of a wrong number of detected segments via the choice of the error level $\alpha$.
Clearly, such an underestimation bound for $\mathbb{P}(\Sna<S)$ has to depend on some characteristics of the function $\vartheta_\beta$, specifically on the length and height of the jumps, as no method can detect arbitrary small changes for a fixed number of data.
Moreover, note that even a large jump of $\vartheta_\beta$ is not identifiable from the QSR-model, if it does not induce a sufficiently large jump in the respective distribution functions, as the following example shows.
\begin{example}\label{ex:into2}
For sufficiently large $L >0$ and small $\epsilon > 0$ consider random variables $X, Y$ such that $\Pp(X = L) = 0.5$, $\Pp(X = -L) = 0.5 - \epsilon$, $\Pp(X = -L + \epsilon) = \epsilon$ and $\Pp(Y = - L) = 0.5-\epsilon$, $\Pp(Y = L) = 0.5$, $\Pp(Y = L- \epsilon) = \epsilon$. Then, if half of the sample comes from $X$, i.e., $Z_1,\ldots,Z_{n/2} \;\sim \; X$ and the other half from $Y$, i.e., $Z_{n/2 + 1}, \ldots, Z_n \;\sim\; Y$ as in the QSR-model, the underlying median regression function $\vartheta_{0.5}$ has a large jump of size $2 L - 2 \epsilon$ at $\tau = 0.5$. However, for sufficiently small $\epsilon$ (depending on $n$), because the probability of sampling $-L+\epsilon$ or $L-\epsilon$ is very small, this jump is not detectable from the observations $Z_1,\ldots,Z_n$.
See Figure \ref{fig:detex} for illustration.
The following definition specifies situations where a quantile segment is detectable.
\end{example}
\begin{figure}[h!]
\centering
\includegraphics[width=\textwidth]{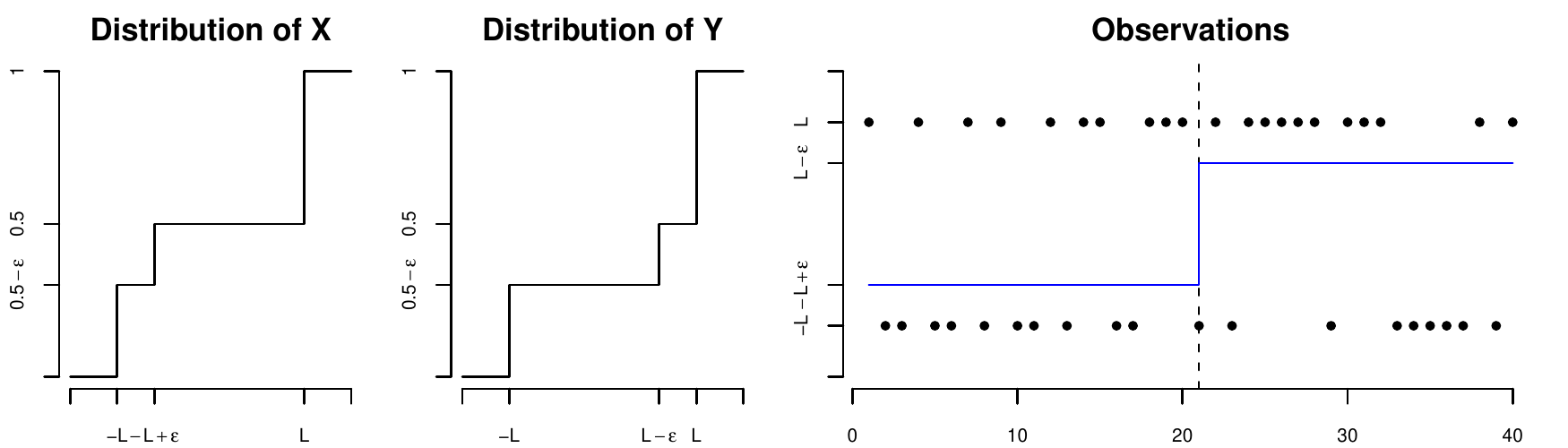}
\caption{\small Left: Cumulative distribution function of $X$ and $Y$ as in Example \ref{ex:into2} of the main text. Right: Independent observations $Z_1,\ldots,Z_{20}\sim X$ and $Z_{21},\ldots,Z_{40}\sim Y$ together with a median regression function (blue line) as in the QSR-model. Here $n=40$, $L=1$, and $\epsilon = 0.01$.
}
\label{fig:detex}
\end{figure}
\begin{definition}\label{def:quantijump}
For a distribution function $F$ and $\beta\in(0,1)$, let $\theta_\beta \ZuWeis \inf\{\theta  :  F(\theta) \geq \beta\}$ be the $\beta$-quantile. Then the \textit{quantile jump function} $\xi_{F,\beta} \colon \mathbb{R}\rightarrow[0,1]$ is defined as
\[\xi_{F,\beta}(\delta)=\abs{F(\theta_\beta+\delta)-\beta}.\]
\end{definition}
The quantile jump function quantifies how much a \textit{quantile jump} in the QSR-model influences data around a jump. 
The bound for the underestimation error for the QSR-Model naturally depends on the minimal length of constant segments and on the minimal quantile jump for the distributions $F_1,\ldots,F_n$ of $Z_1,\ldots, Z_n$, that is,
\begin{equation}\label{eq:LambdaXi}
\Xi \ZuWeis \min_{s=1,\dots,S}  \min\left\{\xi_{F_{\min},\beta}(\theta_{s}-\theta_{s-1}), \xi_{F_{\max},\beta}(\theta_{s-1}-\theta_{s})\right\}\;\mbox{and}\;
\mdl \ZuWeis \min_{s=1,\dots,S}\abs{\tau_{s} - \tau_{s-1}}.
\end{equation}
where $F_{\text{min}}\ZuWeis \min\{F_1,\ldots,F_n\}$ and $
F_{\text{max}}\ZuWeis \max\{F_1,\ldots,F_n\}$ (pointwise).
In Theorem \ref{theo:underestimation} we show a general exponential bound for underestimating the number of segments by the MQSE, which provides the following result.
\begin{theorem}\label{cor:easyunder}
Consider the QSR-model, $\Sna$ as in (\ref{eq:1statg}), and $\Lambda,\Xi$ as in (\ref{eq:LambdaXi}). Then, for $\qna>0$ the $1-\alpha$ quantile of $M_n$ in (\ref{eq:mn}), it holds that
\begin{equation}\mathbb{P}\left(\Sna<S\right)\leq 4(S-1)\textup{e}^{-n\Lambda\Xi^2}\left[\textup{e}^{2\sqrt{n\Lambda}\Xi\left(\qna/\sqrt{2}+\sqrt{\log(\sfrac{2\textup{e}}{\Lambda})}\right)}+1\right].\label{eq:2statg}
\end{equation}
\end{theorem}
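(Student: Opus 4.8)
The plan is to obtain this as the concrete specialisation of the general underestimation estimate announced in Theorem~\ref{theo:underestimation}, and the route I would take has three ingredients: (i) reduce the event $\{\Sna<S\}$ to a purely \emph{local} event living on a single, carefully chosen interval; (ii) show that on that interval the transformed data $W$ cannot be consistent with an i.i.d.\ $B(\beta)$ sample; and (iii) turn this into the displayed exponential bound by a Bernoulli deviation inequality followed by a union bound over the change-points. The quantities $\mdl$ and $\Xi$ from~(\ref{eq:LambdaXi}) will enter precisely at the interval-selection and the mis-specification steps, respectively.

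\emph{Reduction.} By the definition of the MQS selector in~(\ref{eq:hatK}), on the event $\{\Sna<S\}$ there exists $\vartheta\in\cpset$ with at most $S-1$ segments and $T_n(W(Z,\vartheta),\vartheta)\le\qna$. Since $T_n$ is the penalised maximum, over all intervals $[x_i,x_j]$ on which $\vartheta$ is constant, of the local (generalised) log-likelihood-ratio statistics $T_i^j$ for the hypotheses $H_{ij}$ (cf.\ (\ref{eq:mts}) and~(\ref{eq:mn})), the inequality $T_n(W(Z,\vartheta),\vartheta)\le\qna$ forces $\sqrt{2\,T_i^j(W(Z,\vartheta))}\le\qna+\pen$, with $\ell=j-i+1$, on every such constant interval. \emph{Finding a long, mis-specified interval.} The candidate $\vartheta$ has at most $S-2$ interior jumps, whereas $\vartheta_\beta$ has the $S-1$ change-points $\tau_1,\dots,\tau_{S-1}$, which by~(\ref{eq:LambdaXi}) are pairwise at distance $\ge\mdl$; hence the $S-1$ open windows $(\tau_s-\mdl/2,\tau_s+\mdl/2)$ are pairwise disjoint (and lie in $(0,1)$), so at least one of them, say the $s$-th, contains no jump of $\vartheta$. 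On it $\vartheta\equiv c$ for some data-dependent level $c$, and it splits into a left part lying inside the true segment $s$ (where every $Z_i$ has $\beta$-quantile $\theta_s$) and a right part inside the true segment $s+1$ (where the $\beta$-quantile is $\theta_{s+1}\neq\theta_s$), each carrying a number of sampling points of order $n\mdl$. By~(\ref{eq:transformtion}), the $W_i$ on the left, resp.\ right, part are independent Bernoulli with means $F_i(c)$; using the monotonicity of the quantile jump function of Definition~\ref{def:quantijump}, the envelopes $F_{\min}\le F_i\le F_{\max}$, and a case distinction according to the position of $c$ relative to $\theta_s$ and $\theta_{s+1}$, one argues that at least one of the two parts has all its Bernoulli means at distance $\ge\Xi$ from $\beta$.

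\emph{Bernoulli deviation and union bound.} Fix that part, an interval $[x_i,x_j]$ with $\ell\asymp n\mdl/2$ on which the empirical mean $\hat p$ of $W_i,\dots,W_j$ has $\abs{\mathbb{E}\hat p-\beta}\ge\Xi$. On the other hand, the reduction step gives $\sqrt{2T_i^j}\le\qna+\pen$, and since $T_i^j=\ell\,d_{\mathrm{KL}}(\hat p\,\|\,\beta)$ together with Pinsker's inequality $d_{\mathrm{KL}}(\hat p\,\|\,\beta)\ge 2(\hat p-\beta)^2$ yields $\abs{\hat p-\beta}\le(\qna+\pen)/(2\sqrt\ell)$. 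Thus $\hat p$ must deviate from its mean by at least $\Xi-(\qna+\pen)/(2\sqrt\ell)$, an event of probability at most $2\exp\{-2\ell(\Xi-(\qna+\pen)/(2\sqrt\ell))^2\}$ by Hoeffding's inequality for sums of independent bounded variables. Inserting $\ell\asymp n\mdl/2$ and $\pen\le\sqrt{2\log(2\e/\mdl)}$, expanding the square, and summing over the at most $S-1$ choices of $s$, the two parts and the two signs of the deviation produces the prefactor $4(S-1)$ and the factor $\e^{-n\mdl\Xi^2}\big[\e^{2\sqrt{n\mdl}\,\Xi(\qna/\sqrt2+\sqrt{\log(2\e/\mdl)})}+1\big]$ in~(\ref{eq:2statg}). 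One harmless technicality: $c$ is data-dependent and ranges over a continuum, but $W_i=\indE\{Z_i\le c\}$ only changes at the order statistics of the $Z_i$ on the interval, so the supremum over $c$ is over at most $\ell+1$ values and is absorbed into the bound (alternatively, it is controlled uniformly through the empirical distribution function on the interval).

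I expect the genuinely delicate step to be the mis-specification claim in the second paragraph, point (c): from the bare count $\#J(\vartheta)\le S-1$ one must extract one interval that is \emph{simultaneously} jump-free for $\vartheta$, of length of order $n\mdl$, and such that the induced Bernoulli means provably sit at distance $\ge\Xi$ from $\beta$ on at least one side of $\tau_s$. The last requirement is where the exact form of $\Xi$ matters: one needs the correct bookkeeping of the flat level $c$ against the two true quantiles $\theta_s,\theta_{s+1}$ and against the envelopes $F_{\min},F_{\max}$ in~(\ref{eq:LambdaXi}), exploiting that $\xi_{F,\beta}$ is monotone in the displacement on each side of $0$. Once that interval and the lower bound $\ge\Xi$ are in hand, the reduction step is immediate from~(\ref{eq:hatK}) and the deviation step is a routine (if somewhat lengthy) Bernoulli tail computation.
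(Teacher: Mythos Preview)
Your overall architecture---localise around each change-point, force the candidate $\vartheta$ to be flat on a window of length $\asymp n\Lambda$ straddling some $\tau_s$, then use Pinsker $+$ Hoeffding on the resulting mis-specified Bernoulli block---matches the paper's route via Theorem~\ref{theo:underestimation}. Your union bound over $s$ is also equivalent to the paper's product-then-linearise step $(1-x)^{S-1}\ge 1-(S-1)x$.

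The gap is exactly where you flag it: the data-dependent level $c$. Your primary suggestion---discretise over the $\ell+1$ order statistics and ``absorb'' the factor---does not deliver~(\ref{eq:2statg}); it picks up an extra multiplicative factor $\sim n\Lambda$ that is not present there. Your parenthetical alternative (``controlled uniformly through the empirical distribution function'') is in fact the argument, not a side remark, and it is what generates the \emph{second} exponential term---the ``$+1$'' in the bracket---which a single Hoeffding bound followed by expanding the square cannot produce. Concretely, the paper splits according to whether $c$ lies below or above the midpoint $(\theta_s+\theta_{s+1})/2$ and, in each case, looks at the half-window on the \emph{opposite} side. Say $c\ge\theta_s^-+|\delta_s|/2$ and consider the half-window $I_s^-$ with true quantile $\theta_s^-$. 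Since $c\mapsto\overline W(Z,c)$ is nondecreasing and $\overline W\mapsto T_{I_s^-}$ is increasing on $[\beta,1]$, on the event that the empirical $\beta$-quantile $\zeta_{I_s^-}$ of that half-window lies at or below the midpoint one has $T_{I_s^-}(W(Z,c))\ge T_{I_s^-}(W(Z,\text{midpoint}))$ for every such $c$; the supremum over $c$ thus reduces to the \emph{deterministic} choice $c=$ midpoint, and Pinsker $+$ Hoeffding (Theorem~\ref{theo:falsemultiscale}) yields the first exponential term. The complementary event $\{\zeta_{I_s^-}>\text{midpoint}\}$ is controlled by a DKW-type bound (Theorem~\ref{theo:quantileaprox}) and contributes $\exp(-n\Lambda\Xi^2)$; after the expansion $(a-b)_+^2\ge a^2-2ab$ this becomes the ``$+1$''. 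Relatedly, your deterministic claim that ``at least one of the two parts has all its Bernoulli means at distance $\ge\Xi$ from $\beta$'' overshoots: when $c$ sits at the midpoint, the displacement from the true quantile on either side is only $|\delta_s|/2$, so the guaranteed gap is the quantile-jump function evaluated at half the jump, not at the full jump---another reason the argument has to pass through the midpoint reduction rather than a direct case distinction on $c$.
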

Note that whenever $\beta \to 0$ or $\beta \to 1$ then $\Xi \to 0$ and hence, the bound on the r.h.s. in (\ref{eq:2statg}) becomes trivial. This reflects the fact that for very high or very small quantiles it is arbitrarily difficult to capture changes from finitely many samples.
Combining (\ref{eq:1statg}) and (\ref{eq:2statg}), we obtain an explicit bound for estimating the number of segments correctly, depending on $\alpha(q_n) = \mathbb{P}(M_n>q_n)$, $n$, $\Lambda$ and $\Xi$ in (\ref{eq:LambdaXi}), namely,
\begin{equation}
\mathbb{P}\left(\Sna=S\right)\geq 1-\alpha(q_n)- 4(S-1)\textup{e}^{-n\Lambda\Xi^2}\left[\textup{e}^{2\sqrt{n\Lambda}\Xi\left(q_n/\sqrt{2}+\sqrt{\log(\sfrac{2\textup{e}}{\Lambda})}\right)}+1\right],\label{eq:conbound}
\end{equation}
uniformly over all possible segment functions in $\cpset$ satisfying (\ref{eq:LambdaXi}).
As $\alpha(q_n)$ converges to zero as $q_n \to \infty$ (see Section \ref{sec:theory}), the MQS selector is consistent, that is $\mathbb{P}\left(\Sna=S\right)\rightarrow 1$ as $n\rightarrow\infty$,
whenever the threshold parameters are chosen such that  $q_n \to \infty$ and $q_n/\sqrt{n}\rightarrow 0$. Moreover, $q_n$ can be chosen such that the r.h.s of (\ref{eq:conbound}) is maximized leading to exponentially fast selection consistency.
In Theorem \ref{theo:vanishingrates} we refine these consistency results to the situation where $\Xi = \Xi_n$ and $\Lambda = \Lambda_n$ vanish as $n\to\infty$.
We give a sharp condition on $\Xi_n,\Lambda_n,q_n$ under which MQS consistently estimates the number of segments $S$ and show that (up to constants) these conditions cannot be improved, in general. 

With the choice of $q_n=q_n(\alpha)$ as in (\ref{eq:1statg}), the multiscale approach of MQS directly yields (asymptotically) honest simultaneous confidence bands (i.e. uniformly over all possible segment functions in $\cpset$ with minimal scale $\Lambda$ and minimal distribution jump $\Xi$) for the underlying regression function $\vartheta_\beta$ and simultaneous confidence intervals for the change-points $\tau$ via the set $\cH(q_n)$ in (\ref{eq:H}), see last three plots of Figure \ref{fig:sex} and Theorem \ref{theo:confidenceset} for further details.
Besides model selection consistency and confidence statements for all quantities, the MQS procedure also attains minimax optimal estimation rates (up to a log-factor) for the change-point\ locations. In Theorem \ref{theo:cprate} we show that whenever $q_n = o(\sqrt{\log(n)})$ and $\Lambda_n^{-1} = o(n/\log(n))$ then for any $\Xi_0>0$
\begin{equation}\label{eq:intoEstRate}
\sup_{\substack{\vartheta\in\cpset \\ \Lambda>\Lambda_n, \Xi>\Xi_0}}\mathbb{P}\left(\max_{\tau\in J(\vartheta)}\min_{\hat{\tau}\in J(\hat{\vartheta}(q))}|\hat{\tau}-\tau|> \frac{\log(n)}{n} \frac{1}{\Xi^2} \right) \to 0,
\end{equation}
where the minimax rate is lower bounded by the sampling rate $1/n$ and hence, the rate in (\ref{eq:intoEstRate}) is optimal (up to the log-factor).
\subsection{Implementation}
It has been exploited for a long time that global optimization procedures to detect segment changes in parametric models (mainly Gaussian), can be computed efficiently using dynamic programming, see e.g., \citep{bellman1954,friedrich2008, boysen2009, killick2012, davies2012,frick2014, zou2014, pein2017a, haynes2017, celisse2018, wang2018}. 
However, the exact computation of the nonparametric quantile segments of the MQSE in (\ref{eq:hatK}) leads to an additional computational burden compared to the case where the underlying jump signal corresponds to a parameter of a specific data distribution. 
Whereas the parametric case typically involves updating a running mean, which can be done in $\mathcal{O}(1)$ time, the quantile case involves updating the empirical quantile, which depends on the ordering of the data and hence, is computationally more involved.
We incorporate an efficient running quantile computation as described in \citep{astola1989}, which is based on double heap structures. 
Thereby, the worst case computation time of MQSE only increases by a log-factor compared to the parametric case, as e.g., discussed in \citep{frick2014}, being of the order $\mathcal{O}(n^2\log n)$.
However, in many situations, depending on the underlying signal, the actual complexity will be almost linear.
We give more details on the implementation of MQSE in Section \ref{sec:implementation}.
An R package \texttt{mqs} is available at \url{https://github.com/ljvanegas/mqs}.

\subsection{Simulation results and data examples}
In Section \ref{sec:simulations} we explore MQS in a comprehensive simulation study, including a comparison with several other state of the art segmentation methods, which have been designed to be robust, namely R-FPOP \citep{fearnhead2017}, WBS \citep{fryzlewicz2014}, HSMUCE \citep{pein2017a}, QS \citep{eilers2005}, NWBS \citep{padilla2019}, and NOT \citep{baranowski2019}. 
NOT has two contrast functions for different types of robustness, namely "pcwsConstMeanHT" (abbreviated HT) and "pcwsConstMeanVar" (abbreviated VAR) for heavy-tailed noise and changes in variance, respectively.  
In order to compare the detection power of MQSE in a benchmark scenario, we also compare with SMUCE \citep{frick2014} which is tailored to i.i.d. normal error.
A major finding is that while other methods only work well in some specific cases, MQS reliably detects changes in the quantiles for arbitrary distributions. 

Figure \ref{fig:rob} illustrates a benchmark scenario which shows data from different distributional regimes together with the MQS, SMUCE, R-FPOP, WBS, HSMUCE, QS, NOT, and NWBS estimates (red lines) for the median. 
In order to show robustness of the MQSE to changes in distribution we choose data from 3 different distributional regimes. 
The first 350 observations are drawn from a normal distribution with variance 2, the next 1190 from a t distribution with 1 d.f. and variance 0.1 and the last 945 from a $\chi^2$ distribution with 1 d.f. and variance 0.1. 
MQS turns out to have good detection power (see second regime), to be robust to heavy tails and change in variance (see first and second regime), and more generally to arbitrary distributional variations and skewness (see third regime).
In contrast, SMUCE (specifically designed for Gaussian, homogeneous noise) is very sensitive to changes in variance and heavy-tailed distributions, adding a lot of artificial changes.
WBS and NOT(VAR) are more robust to high variances, but also add artificial changes in the presence of outliers.
R-FPOP is robust to heavy-tailed distributions, but is susceptible to changes in variance, adding artificial changes in the first 350 observations.
On the contrary, HSMUCE, QS, and NOT(HT) suffer from oversmoothing and miss important data features.
In this case, MQSE is capable of exploring the concentration of points around the median from the heavy-tailed $t$ (1 d.f.) distribution, despite low signal to noise ratio.
NWBS is very robust to different noise families, but adds change-points in every change of distribution function, in this case in the first change from the normal to the t distribution.
Moreover, we found NWBS (and QS) to be several magnitudes slower than the other methods, for instance, for this data example NWBS took $64s$ and MQS $0.64s$.

\begin{figure}
\centering
\includegraphics[width=\textwidth]{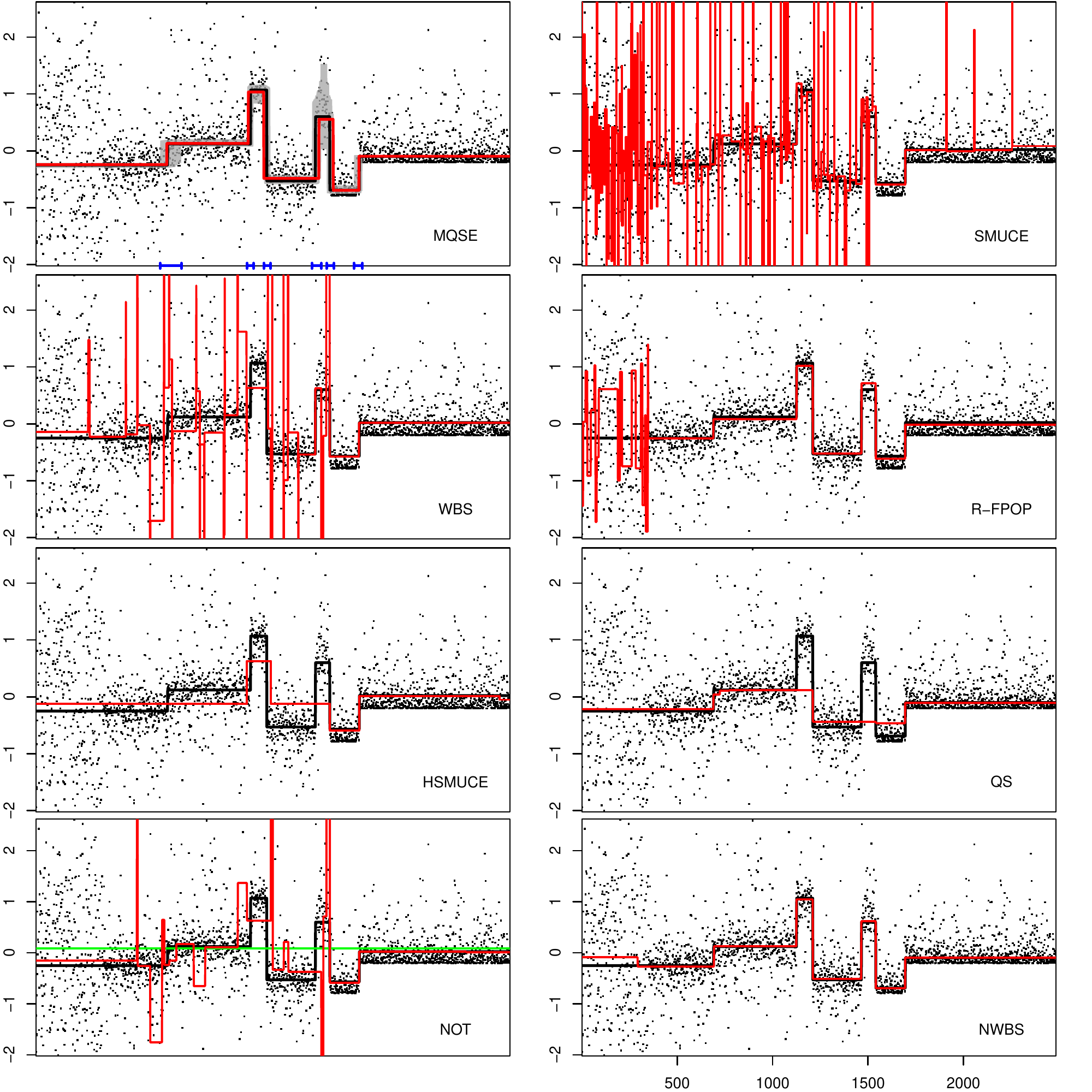}
\caption{\small
Observations (black dots) with different distributional regimes (see details in main text and Appendix \ref{sec:example2}) and true median (black lines).
Left to right, top to bottom (red lines): MQSE for median with confidence band (grey area) and confidence intervals (blue intervals), $\alpha = 0.1$; SMUCE \citep{frick2014}, WBS \citep{fryzlewicz2014}; R-FPOP \citep{fearnhead2017}; HSMUCE \citep{pein2017a}; QS \citep{eilers2005}; NOT (HT, green line) and NOT (VAR, red line) \citep{baranowski2019}; NWBS \citep{padilla2019}.}
\label{fig:rob}
\end{figure}
\begin{figure}
\centering
\includegraphics[width=\textwidth]{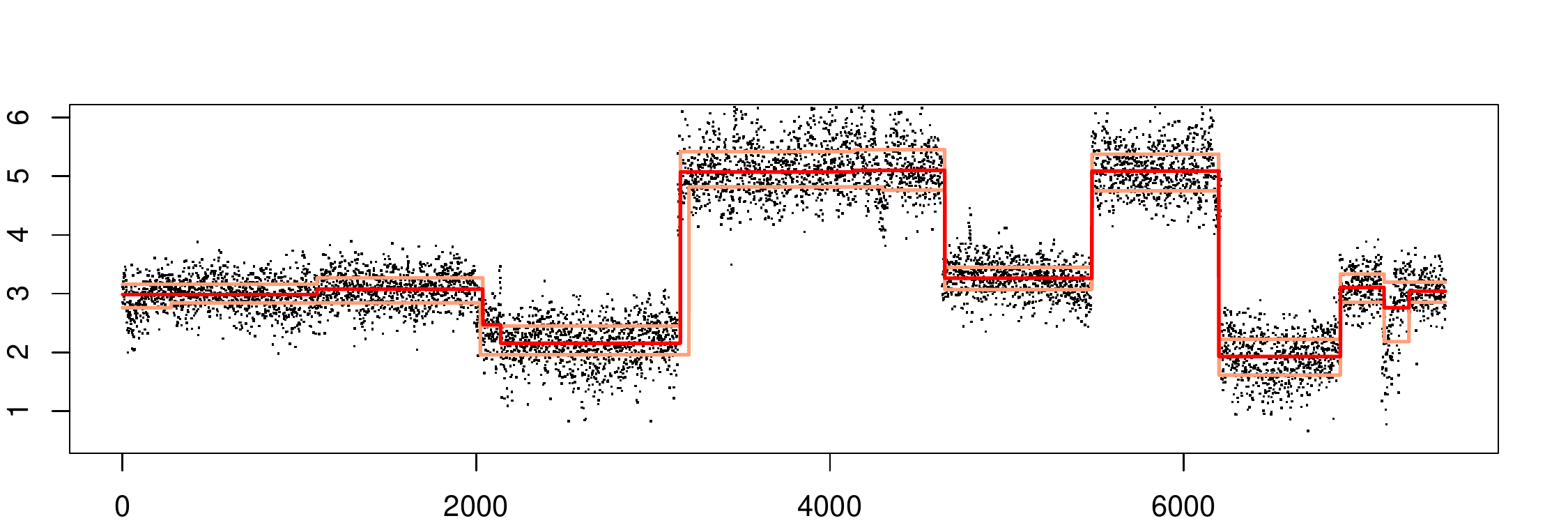}
\caption{ \small Preprocessed WGS data (black dots) of cell line LS411 from colorectal cancer and MQS multiscale box plot with $\alpha = 0.01$ for the underlying CNA's (red and salmon lines). Sequencing was performed by Complete Genomics in collaboration with the Wellcome Trust Centre for Human Genetics at the University of Oxford.}
\label{fig:cnacomp}
\end{figure}

Recall that MQS does not just provide some reconstruction of the underlying $\beta$-quantile, but it also comes with precise statistical guarantees, such as honest confidence statements.
This is particularly valuable for many real data examples, where there is uncertainty about the precise observational distribution.
We demonstrate this in Figure \ref{fig:cnacomp} (see also Figure \ref{fig:wgsmulex}) which shows an example from cancer genetics \cite{behr2018}, where one aims to detect copy-number aberrations (changes in the number of copies of certain regions of the genome) in tumor DNA (see Section \ref{sec:app} for details and a further data set on ion channel recordings).
While Figure \ref{fig:wgsmulex} shows that most of the other methods include artificial jumps in their reconstruction (which in this data example are known to be not present in the underlying signal and result from sequencing artifacts, see Section \ref{subsec:cnaapp} for details), MQS reliably recovers most of the copy-number aberrations correctly.
Moreover, copy-number aberrations are known to often omit heterogeneity on different segments, and MQS adds a powerful visualization tool for this via the multiscale box plot, see Figure \ref{fig:cnacomp}. 

\subsection{Related work}\label{subsec:rel}
This work extends previous methodology developed in \citep{frick2014} on change-point regression under a parametric model (e.g., for a Gaussian mean signal) to a semi-parametric model.
A major and novel feature of our approach is that it comes with statistical guarantees without making any such parametric assumptions.
In particular, our confidence statements for all quantities and minimax optimal estimation rates, expand those offered in \cite{frick2014} for parametric models to quantile segmentation.
The technical challenges in such a semi-parametric model are tackled by a novel exponential inequality (see (\ref{eq:2statg})) based on the quantile jump function (see Definition \ref{def:quantijump}), which is fundamental to our theory.
Algorithmically, we are able to exploit a double heap structures for the optimization problem in (\eqref{eq:hatK}), which overall only leads to a loss of a $\log(n)$ factor in runtime compared to the parametric situation in \citep{frick2014}.
The transformation (\ref{eq:transformtion}) extends the use of residual signs of \cite{dumbgen1998} and \cite{davies2001} and was already mentioned in \citep{frick2014}, however, without any theoretical analysis or efficient implementation.

The estimation of step functions with unknown number and location of segments are widely discussed problems; we mention \citep{ fearnhead2006, spokoiny2009,boysen2009, harchaoui2010, jeng2010, killick2012, niu2012, siegmund2013, matteson2014, du2016, gao2019, fryzlewicz2018} for a variety of methods and related statistical theory.
Methods for segment regression problems that offer statistical guarantees in general restrict those to (minimax optimal) estimation rates which typically concern the mean function and assume normality, see e.g., \citep{harchaoui2010, cai2012, fryzlewicz2014,li2016, pein2017a, baranowski2019}. 
In contrast, we target the quantile function (see e.g., \cite{koenker2005} for a survey) and do not assume a specific distributional model. One scenario related to our work is nonparametric distribution segmentation, see e.g. \citep{zou2014,chu2019,padilla2019}. In this case, the aim is to pick up any distributional changes, not only in a particular quantile. 
Conceptually, mostly related to our work are methods which explicitly aim to provide robust methodology for segmentation.
From a general perspective, segment regression may be considered as a particular case of a high dimensional linear model, with a very specific design matrix.
\cite{belloni2011} consider sparse high dimensional quantile regression and show that minimizing the asymmetric absolute deviation loss together with $L_1$ penalization yields almost optimal estimation rates (in $L_1$ loss).
However, their results require certain regularity conditions on the design matrix (similar to restricted eigenvalue conditions) which are not fulfilled for the specific design which corresponds to a multiscale segmentation setting.
Other methods are concerned with particular types of robustness. 
For example \cite{fearnhead2017, baranowski2019} consider the case of heavy-tailed symmetric distributions.
\cite{eilers2005,li2007} consider minimizing absolute loss with an $L_1$-penalty for quantile regression for the segment regression setting, while not providing any theoretical results.
More generally, \cite{dumbgen2009} consider quantile regression via minimizing a convex loss with a total variation penalty and provide certain consistency results.
Finally, \cite{aue2014,aue2017} consider quantile segmentation in a time series context based on a minimum description length criterion to select the number of segments and show consistency of their method, whereas in our (simpler) setting we obtain exponentially fast selection consistency, see (\ref{eq:conbound}).

\section{Theory}\label{sec:theory}
In Section \ref{subsec:theoMS} we introduce the MQS methodology in more detail and in Section \ref{subsec:theoCons} we present our results on model selection consistency, confidence statements, and minimax optimal estimation rates for MQS. Proofs are postponed to Appendix A. We start with a discussion of the model assumptions.

At first glance, it might seem restrictive that the QSR-model requires for the underlying $\beta$-quantile regression function $\vartheta_\beta$ that $\Pp(Z_i \leq \vartheta_\beta(x_i)) = \beta$ (and not, more generally, that $\vartheta_\beta(x_i) = \inf \{ \theta : \Pp(Z_i \leq \theta ) \geq \beta \}$).
However, such assumption is unavoidable for efficient (i.e., with non-trivial power) control of overestimation for the number of segments, as it is achieved by the proposed method, without assuming a specific observational distribution.
%
To see this, note that without making any distributional assumptions, for given independent observations $Z_1,\ldots,Z_n$ and candidate quantiles $\vartheta(x_1),\ldots ,\vartheta(x_n)$, all information from $Z_i$ about the candidate $\vartheta$ is captured in the transformation $W_i = W_i(Z_i, \vartheta(x_i))$, $i = 1,\ldots,n$ in (\ref{eq:transformtion}).
Now assume that in the QSR-model the condition in (\ref{eq:model1}) is replaced by 
\begin{align}\label{eq:generalQuantile}
\vartheta_{\beta}(x_i) = \inf \{ \theta :\; \Pp(Z_i \leq \theta) \geq \beta \}.
\end{align} 
Consider some (discontinuous) distribution function $F$ for which (\ref{eq:generalQuantile}) and (\ref{eq:model1}) differ, such that there exists $\theta_0 \in \R$ with $F(\theta_0) = 1$ and $\lim_{x\nearrow \theta_0} F(x) < \beta$.
Consider observations in the QSR-model with $Z_1,\ldots,Z_n \overset{i.i.d.}{\sim} F$.
Then, the true $\beta$-quantile $\vartheta_\beta$ is constant with $\vartheta_\beta \equiv \theta_0$ and for any data $Z_i$ the transformation in (\ref{eq:transformtion}) with truth $\vartheta_\beta$ yields $W_i = W_i(Z_i, \theta_0) = 1$ for $i = 1,\ldots,n$.
However, for any other data $Z_i$ the constant candidate $\vartheta \equiv \theta_0$ with $\theta_0 \geq \max(Z_i)$, results in exactly the same transformation $W_i = W_i(Z_i, \theta_0) = 1$ for $i = 1,\ldots,n$.
Consequently, if one allows in the QSR-model for generalized quantiles as in (\ref{eq:generalQuantile}) controlling segment overestimation as in Theorem \ref{theo:overestimation} appears too ambitious, as it rules out any reasonable estimator, i.e., this can only be achieved when setting $S=0$ always.
%
%
%
%
In fact, quantile estimation of non continuous distributions is well known to require more specific model assumptions in general, see e.g.\ \citep{machado2005}. The MQS procedure allows quantile regression for discrete distributions, as long as the assumptions of the QSR-model hold.

\subsection{Multiscale procedure}\label{subsec:theoMS}
Given $\beta$, observations $Z=(Z_1,\ldots,Z_n)$ from the QSR-model and a function $\vartheta\in \cpset$ with $S$ segments as in (\ref{eq:Sset}) (which may depend on $n$), the MQS methodology is based in a first step on a multiscale test to decide whether or not $\vartheta$ is a good candidate for the underlying unknown quantile function $\vartheta_\beta$. 
To this end, fix an interval $[x_i,x_j]\subseteq[\tau_{s-1},\tau_{s})$ for some $1\leq s\leq S$, that is, an interval where the candidate $\vartheta$ is constant with value $\theta_s$. 
When the true quantile function $\vartheta_\beta$ is constant on $[x_i,x_j]$ as well, the decision problem of whether or not $\vartheta$ coincides with the truth on the interval $[x_i,x_j]$ translates to the testing problem 
\begin{align}
\begin{split}\label{eq:1loctest}
H_{ij}  :\mathbb P(Z_k\leq \theta_s)=\beta \quad  \forall \; i\leq k\leq j \quad \text{vs.} \quad
K_{ij}  :\mathbb P(Z_k\leq \theta_s)\neq\beta \quad \forall \; i\leq k\leq j.
\end{split}
\end{align}
Equivalently, one can write (\ref{eq:1loctest}) using the transformed data in (\ref{eq:transformtion}) as the hypothesis testing problem in (\ref{eq:introLocHyp}).
The log-likelihood-ratio test for the hypothesis testing problem (\ref{eq:introLocHyp}) and (\ref{eq:1loctest}), respectively, is given by the test statistic 
\begin{align}
\begin{aligned}
T_i^j(W) &= T_i^j\left(W\left(Z,\theta_s\right)\right)=
\log \left(\frac{\sup_{\beta^\prime}\prod_{l = i}^j f_{\beta^\prime}(W_l)}{\prod_{l = i}^j f_\beta(W_l)} \right)\\
 &= (j-i+1)\left(\overline{W}_i^j\log\left(\frac{\overline{W}_i^j}{\beta}\right)+\left(1-\overline{W}_i^j\right)\log\left(\frac{1-\overline{W}_i^j}{1-\beta}\right)\right),\\
\end{aligned} 
 \label{eq:lrs}
\end{align}
where $f_\beta(x) = \beta^x (1-\beta)^{(1-x)}$ denotes the probability mass function of the Bernoulli distribution and $\overline{W}_i^j \ZuWeis (j-i+1)^{-1}\sum_{l=i}^jW_l$.
That is, the test is then of the form
\begin{equation}\label{eq:localTests}
\Phi_{ij}\left(Z\right)=
\begin{cases}
 0& \mbox{ if } \sqrt{ 2 T_i^j(W(Z,\theta_s))}\leq q_{i,j},\\
 1&\mbox{ otherwise,}
\end{cases}
\end{equation}
where the threshold $q_{i,j}:=q_{i,j}(\alpha)$ determines the level $\alpha$ of the test (note that, under $H_{ij}$ the distribution of $T_i^j(W)$ is independent of $\vartheta_\beta$).
However, we do not know where and on which scales the true $\vartheta_\beta$ is constant and 
thus, we have to consider all intervals on all different scales simultaneously.
This means that a candidate function $\vartheta$ is accepted if and only if it gets accepted by all local tests $\Phi_{ij}$ in (\ref{eq:localTests}), for appropriately chosen thresholds $q_{ij}$.
%
%
%
%
More precisely, define the penalized multiscale statistic (as a functional on $\cpset$) as
\begin{equation} T_n(Z,\bullet)\ZuWeis \max_{\substack{1\leq i\leq j\leq n\\ \bullet \mbox{ is constant on }[x_i,x_j]}}\sqrt{2T_i^j(W(Z,\bullet),\beta)}-\pen.\label{eq:mts}
\end{equation}
with 
$\pen \ZuWeis \sqrt{ 2 \log\left( {\textup{e} \, n}/{\ell}\right)}$,
where $\ell=j-i+1$ denotes the scale and $\textup{e} = \exp(1)$.
For some given threshold $q = q_n(\alpha)$, a candidate function $\vartheta$ is accepted if and only if $T_n(Z,\vartheta) \leq q_n(\alpha)$. This means that we choose local thresholds $q_{ij}(\alpha)$ in (\ref{eq:localTests}) as in \citep{dumbgen2001,dumbgen2008, frick2014} of the form 
\begin{equation}\label{eq:qij}
q_{ij}(\alpha) = q_n(\alpha) +  \pen.
\end{equation}
%
%
\begin{remark}
A heuristic reasoning for the particular penalization $\pen$ in (\ref{eq:mts}) is as follows.
%
%
In order to put different scales on equal footing, one has to chose larger thresholds for smaller scales. 
On the other hand, it follows from Wilk's theorem (see \citep{frick2014} for a more precise argument) that for sufficiently large intervals the local statistics $\sqrt{2 T_i^j}$ are well approximated by the absolute value of a standard normal. 
%
%
As the maximum of $\ell/n$ independent standard normals grows as $\sqrt{2 \log(\ell/n)}$, the particular choice in (\ref{eq:qij}) ensures that $T_n(Z,\bullet)$ in (\ref{eq:mts}) remains finite even when $n \to \infty$ for any fixed $\alpha \in (0,1)$.
\end{remark}

Note that for $X=(X_1,\ldots,X_n)$ i.i.d. Bernoulli distributed with mean $\beta$ and $\vartheta = \vartheta_\beta$, the statistic $T_n(Z,\vartheta_\beta)$ in (\ref{eq:mts}) follows the same distribution as
\[\max_{\substack{1\leq i\leq j\leq n\\ \vartheta_\beta\mbox{ is constant on }[x_i,x_j]}}\sqrt{2T_i^j(X,\beta)}-\pen,\]
%
%
%
which
is bounded by $M_n$ in (\ref{eq:mn}) in probability.
The distribution of $M_n$ does not depend on any characteristics of the unknown $\vartheta_\beta$ and hence, we can determine its quantiles via Monte-Carlo simulations. 
Thus, we can define $q_n(\alpha)$ as the $(1-\alpha)$-quantile of $M_n$, i.e.,
\begin{equation}
q_n(\alpha) \ZuWeis \inf\{q\; : \; \mathbb{P}(M_n\leq q) \geq 1-\alpha\}.\label{eq:quantile}
\end{equation}
This choice guarantees that the true quantile function $\vartheta_\beta$ gets accepted by the above testing procedure with probability at least $1-\alpha$, which leads to confidence statement for all quantities of $\vartheta_\beta$, see Theorem \ref{theo:confidenceset}.

\begin{remark}\label{rem:limitDistM}
It is shown in \citep{frick2014, dumbgen2001} that $M_n$ converges in distribution to an almost surely finite random variable $M$ (which is independent of $\beta$) and hence, $\limsup_n q_n(\alpha) < \infty$ for any $\alpha \in (0,1)$ . 
In particular, for large $n$, one may choose the parameter $q(\alpha)$ as a quantile of $M$, which can be simulated once and then stored, lowering the overall computation time.
\end{remark}
Finally, we define the MQS estimator (MQSE) $\hat{\vartheta}$ as an element in $\cH(q_n(\alpha))$ from (\ref{eq:H}) that minimizes a particular cost function.
One cost function that has also been considered in several other works \citep{eilers2005,li2007,dumbgen2009,belloni2011,lee2018} is the asymmetric absolute deviation loss. 
That is, 
\begin{equation}\hat{\vartheta} \coloneqq\argmin_{\hat{\vartheta}\in\mathcal{H}(q)}\sum_{i=1}^n\left(Z_i-\hat{\vartheta}(x_i)\right)\left(\beta-\mathbbm{1}_{\{Z_i<\hat{\vartheta}(x_i)\}}\right).\label{eq:koenker}
\end{equation}
In Appendix \ref{subsec:theoChoice}, we give details on an alternative choice which is based on the Wald-Wolfowitz runs statistic \citep{wald1940}.
We stress that all our theoretical results, detailed in the following, hold for any estimator in $\cH(q_n(\alpha))$ and are thus, completely independent of this choice.
Both cost functions, the one in (\ref{eq:koenker}) and in Appendix \ref{subsec:theoChoice}, are available in our R package implementation.
In simulations in Appendix \ref{subsec:theoChoice} we find that they typically perform comparably, with the latter slightly outperforming (\ref{eq:koenker}) for higher quantiles. 

\subsection{Consistency results and confidence statements}\label{subsec:theoCons}
From the construction of the MQSE $\hat{\vartheta}$ it follows with $q_n(\alpha)$ as in (\ref{eq:quantile}), that the corresponding number of segments does not exceed the true number of segments at given error level $\alpha$, as described in equation (\ref{eq:1statg}) and Theorem \ref{theo:overestimation}.
%
%
As already argued in Section \ref{sec:intro}, any bound on the underestimation error must depend on further characteristics of the underlying quantile function $\vartheta_\beta$ and on the respective quantile jumps. To this end, let
\begin{equation}\label{eq.lambdaXi}
\xi_s \ZuWeis \min\left\{\xi_{F_{\min},\beta}(-\delta_s),\xi_{F_{\max},\beta}(\delta_s)\right\} \quad s = 1,\ldots,S, 
\end{equation}
with $\delta_s=\theta_{s}-\theta_{s-1}$, $F_{\min}\ZuWeis \min\{F_1,\ldots,F_n\}$, and $F_{\max}\ZuWeis \max\{F_1,\ldots,F_n\}$ and
\begin{equation}\label{eq:gamma}
\gamma_{n,s}(q)=\left(1-2\exp\left(-\frac{\left(\sqrt{2n\lambda_s}\,\xi_{s}-q-\sqrt{2\log(2\e/\lambda_s)}\right)_+^2}{2}\right)-2\exp\left(-n\lambda_s\,\xi_{s}^2\right)\right)^2.
\end{equation}

\begin{theorem}[Underestimation error]\label{theo:underestimation}
Consider the QSR-model and $\hat{S}(q)$ as in (\ref{eq:hatK}). Then, for $q>0$ 
\[\mathbb{P}\left(\hat{S}(q)\geq S\right)\geq \prod_{s=1}^{S}\gamma_{n,s}(q).\]
\end{theorem}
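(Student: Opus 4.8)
The plan is to lower-bound the complementary event $\{\hat{S}(q)\ge S\}$ by an intersection of independent ``local identifiability'' events attached to the true change-points, each of probability at least a factor matching $\gamma_{n,s}(q)$, and then multiply. \emph{Combinatorial reduction.} On $\{\hat{S}(q)<S\}$ there is, by (\ref{eq:hatK}), a function $\vartheta\in\cpset$ with at most $S-1$ segments, hence at most $S-2$ change-points, and $T_n(W(Z,\vartheta),\vartheta)\le q$. Partition most of $[0,1)$ into the $S-1$ pairwise disjoint blocks $U_s\ZuWeis[\tfrac{\tau_{s-1}+\tau_s}{2},\tfrac{\tau_s+\tau_{s+1}}{2})$, $s=1,\dots,S-1$, each containing only the true change-point $\tau_s$ in its interior, and split $U_s=L_s\cup R_s$ at $\tau_s$ into its parts inside $[\tau_{s-1},\tau_s)$ and $[\tau_s,\tau_{s+1})$, of lengths at least $\tfrac12\dl_s$ and $\tfrac12\dl_{s+1}$ respectively; at these scales the penalty $\pen$ in (\ref{eq:mts}) equals $\sqrt{2\log(2\e/\dl_s)}$. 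Since $\vartheta$ has at most $S-2$ change-points and the $U_s$ are disjoint, some block $U_{s^\ast}$ contains no change-point of $\vartheta$, so $\vartheta\equiv\theta^\ast$ on $U_{s^\ast}$ and the grid intervals spanning $L_{s^\ast}$ and $R_{s^\ast}$ are intervals on which $\vartheta$ is constant.

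\emph{Deterministic inclusion and independence.} Writing $\overline{W}^{L_s}(\theta)$ for the mean of $W_i(Z_i,\theta)$ over indices $i$ with $x_i\in L_s$, let $\Theta_s^L$ be the set of candidate values $\theta$ with $\abs{\mathbb{E}\,\overline{W}^{L_s}(\theta)-\beta}\ge\xi_s$, and let $\Theta_s^R$ be the analogue on $R_s$. Because the QSR-model forces $\Pp(Z_i\le\vartheta_\beta(x_i))=\beta$ rather than merely (\ref{eq:generalQuantile}), the pointwise envelopes $F_{\min},F_{\max}$ also have $\beta$-quantile equal to the true segment value on each segment, and $\xi_s$ is calibrated through the quantile jump function precisely so that the ``$\xi_s$-safe zones'' of $L_s$ (around $\theta_s$) and of $R_s$ (around $\theta_{s+1}$) are disjoint, i.e.\ $\Theta_s^L\cup\Theta_s^R=\R$. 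Put
\[
G_s\ZuWeis\{\forall\theta\in\Theta_s^L:\ \sqrt{2T^{L_s}(W(\cdot,\theta),\beta)}-\pen>q\}\cap\{\forall\theta\in\Theta_s^R:\ \sqrt{2T^{R_s}(W(\cdot,\theta),\beta)}-\pen>q\}.
\]
On $\bigcap_sG_s$ the value $\theta^\ast$ lies in $\Theta_{s^\ast}^L\cup\Theta_{s^\ast}^R$, so $T_n(W(Z,\vartheta),\vartheta)$ picks up the corresponding interval and exceeds $q$, contradicting $T_n(W(Z,\vartheta),\vartheta)\le q$; hence $\bigcap_sG_s\subseteq\{\hat{S}(q)\ge S\}$ deterministically. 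Since $G_s$ depends only on the observations indexed by the disjoint block $U_s$, the $G_s$ are independent, so $\mathbb{P}(\hat{S}(q)\ge S)\ge\prod_{s=1}^{S-1}\mathbb{P}(G_s)$; matching the per-block bounds against the indexing in (\ref{eq:gamma}) (a bookkeeping relabeling of the $S-1$ blocks against the $S$ segment-indices), it remains to prove $\mathbb{P}(G_s)\ge\gamma_{n,s}(q)$.

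\emph{Per-block bound — the main obstacle.} As $G_s$ is an intersection of two events on the disjoint index sets $L_s$ and $R_s$, it suffices to bound each below by $1-2\exp(-(\sqrt{2n\dl_s}\,\xi_s-q-\sqrt{2\log(2\e/\dl_s)})_+^2/2)-2\exp(-n\dl_s\xi_s^2)$, whose product is $\gamma_{n,s}(q)$. Fix $L_s$: monotonicity of $\theta\mapsto W_i(Z_i,\theta)$ makes $\overline{W}^{L_s}(\cdot)$ nondecreasing, so $\Theta_s^L$ is a union of two rays whose endpoints have $\mathbb{E}\,\overline{W}^{L_s}=\beta\mp\xi_s$; together with convexity of $w\mapsto w\log(w/\beta)+(1-w)\log((1-w)/(1-\beta))$ this reduces the failure event $\{\exists\theta\in\Theta_s^L:\sqrt{2T^{L_s}}\le q+\pen\}$ to, at each ray-endpoint, a ``wrong-side'' event (the empirical mean overshoots $\beta$) and, via Pinsker's inequality $\sqrt{2T^{L_s}}\ge2\sqrt{\lvert L_s\rvert}\,\abs{\overline{W}^{L_s}-\beta}$ and $\lvert L_s\rvert$ of order $\tfrac12 n\dl_s$, a ``too-small'' event $\{\abs{\overline{W}^{L_s}-\beta}\le(q+\pen)/\sqrt{2n\dl_s}\}$, i.e.\ a one-sided deviation of the Binomial mean by $\xi_s-(q+\pen)/\sqrt{2n\dl_s}$ from its expectation. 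Hoeffding's inequality bounds the first by $\exp(-n\dl_s\xi_s^2)$ and the second by $\exp(-(\sqrt{2n\dl_s}\,\xi_s-q-\pen)_+^2/2)$, the two rays yielding the two factors $2$. The delicate part is exactly this passage — transferring the Binomial tail to the likelihood-ratio statistic with the constants of (\ref{eq:gamma}) intact, handling the supremum over the continuum of values $\theta$ via the monotone structure, and accommodating non-identical $F_i$ within a segment (whence the envelopes $F_{\min},F_{\max}$ in $\xi_s$) — this is the novel exponential inequality advertised after (\ref{eq:2statg}). Finally, Theorem \ref{cor:easyunder} follows from $\prod_s(1-a_s)\ge1-\sum_s a_s$ together with $\xi_s\ge\Xi$ and $\dl_s\ge\mdl$.
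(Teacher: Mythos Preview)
Your overall architecture --- disjoint neighborhoods of the change-points, a pigeonhole argument forcing the too-sparse candidate $\vartheta$ to be constant across one of them, independence of the resulting local events, and a Pinsker/Hoeffding bound per block --- is exactly the paper's route. The paper uses symmetric intervals $I_s=(\tau_s-\lambda_s/2,\tau_s+\lambda_s/2]$ with $\lambda_s=\min\{\tau_s-\tau_{s-1},\tau_{s+1}-\tau_s\}$ rather than your midpoint-to-midpoint blocks $U_s$, but that is cosmetic.

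There is, however, a genuine gap in your covering claim $\Theta_s^L\cup\Theta_s^R=\mathbb{R}$. With $\xi_s$ defined, as in (\ref{eq.lambdaXi}), through the \emph{full} jump $\delta_s$, the safe zones are in general \emph{not} disjoint. Take $\beta=0.5$, uniform errors of width $2$ centered at the true value on each side, and a jump $\theta_{s+1}-\theta_s=1$: then $\xi_s=0.5$, but at the midpoint value $\theta=(\theta_s+\theta_{s+1})/2$ one has $|\mathbb{E}\overline{W}^{L_s}(\theta)-\beta|=|\mathbb{E}\overline{W}^{R_s}(\theta)-\beta|=0.25<\xi_s$, so $\theta\notin\Theta_s^L\cup\Theta_s^R$. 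Your assertion that ``$\xi_s$ is calibrated \ldots\ precisely so that the safe zones are disjoint'' therefore fails.

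The paper closes this gap differently: it splits the candidate \emph{values} deterministically at the midpoint, observing that any constant $\hat\theta$ satisfies $\hat\theta\le\theta_s^+-\delta_s/2$ or $\hat\theta\ge\theta_s^-+\delta_s/2$. On the event that the empirical quantile on $I_s^-$ lies in $[\theta_s^-,\theta_s^-+\delta_s/2]$ (controlled by Theorem~\ref{theo:quantileaprox}, which supplies the term $2\exp(-n\lambda_s\xi_s^2)$), monotonicity of $\theta\mapsto\overline{W}(\theta)$ together with convexity of the local likelihood reduces the infimum over all $\hat\theta\ge\theta_s^-+\delta_s/2$ to the single boundary value $\theta_s^-+\delta_s/2$, at which Theorem~\ref{theo:falsemultiscale} (your Pinsker/Hoeffding step) applies with the \emph{half}-jump $\delta_s/2$. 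This deterministic midpoint split is what makes ``every candidate is bad on some side'' rigorous; your ray-endpoint reduction is the right idea afterwards, but it needs this split (and the accompanying empirical-quantile event) in front of it, and the exponent then carries $\xi_{\cdot,\beta}(\delta_s/2)$ rather than the full-jump quantity.
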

From Theorem \ref{theo:underestimation} we obtain that for any fixed $q>0$ and $\Lambda,\Xi > 0$ in (\ref{eq:LambdaXi}) the probability of underestimating the number of segments vanishes exponentially fast, see Theorem $\ref{cor:easyunder}$. Additionally, we obtain from Remark \ref{rem:limitDistM} and Theorem \ref{theo:underestimation} for $\Lambda_0>0$ and $\Xi_0>0$, that for any $\alpha \in (0,1)$ as $n\rightarrow\infty$
\begin{align*}
   \inf_{\substack{\vartheta\in\cpset: \\ \Lambda>\Lambda_0,\Xi
   >\Xi_0 }} \mathbb{P}\left(\vartheta_\beta\in\mathcal{H}(q_n(\alpha))\right)&\geq \mathbb{P}\left( T_n(Z,\vartheta_\beta) \leq q_n(\alpha) \right)-\mathbb{P}\left(\hat{S}(q)<S\right)\\
    &\geq 1-\alpha - \mathbb{P}\left(\hat{S}(q)<S\right)\geq 1-\alpha+o(1)
\end{align*}
and, therefore, we can state the following theorem.
\begin{theorem}[Confidence statements]\label{theo:confidenceset}
Consider the QSR-model and $\hat{S}(q)$ as in (\ref{eq:hatK}), $q_n(\alpha)$ as in (\ref{eq:quantile}), $\Lambda$ and $\Xi$ as in (\ref{eq:LambdaXi}). Then, the set $\cH(q_n(\alpha))$ in (\ref{eq:H}) constitutes an asymptotically honest $(1-\alpha)$-confidence band for $\vartheta_\beta$ in the QSR-model uniformly over segment functions with minimal $\Lambda_0, \Xi_0 > 0$, i.e.,
\[\lim_{n\rightarrow\infty}\inf_{\substack{\vartheta\in\cpset: \\ \Lambda>\Lambda_0,\Xi
   >\Xi_0 }} \mathbb{P}(\vartheta\in\cH(q_n(\alpha)))\geq 1-\alpha.\]
   
\end{theorem}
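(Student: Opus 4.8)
The plan is to combine the overestimation guarantee already available from the construction of $\hat S(q_n(\alpha))$ with the uniform underestimation bound of Theorem \ref{theo:underestimation}, together with the event that the true quantile function $\vartheta_\beta$ itself passes the multiscale test. Observe first that by the very definition of $\cH(q)$ in (\ref{eq:H}), the event $\{\vartheta_\beta \in \cH(q_n(\alpha))\}$ occurs precisely when $\vartheta_\beta$ is a feasible point of the optimization (\ref{eq:hatK}), i.e.\ $T_n(Z,\vartheta_\beta)\le q_n(\alpha)$, \emph{and} the selector picks exactly the true number of segments, $\hat S(q_n(\alpha)) = S$. So the first step is the decomposition
\[
\mathbb{P}\bigl(\vartheta_\beta \in \cH(q_n(\alpha))\bigr) \;\ge\; \mathbb{P}\bigl(T_n(Z,\vartheta_\beta)\le q_n(\alpha)\bigr) \;-\; \mathbb{P}\bigl(\hat S(q_n(\alpha)) < S\bigr),
\]
using that on the event $\{T_n(Z,\vartheta_\beta)\le q_n(\alpha)\}$ feasibility of $\vartheta_\beta$ forces $\hat S(q_n(\alpha))\le S$, so that $\hat S(q_n(\alpha))\ne S$ can only happen through underestimation. (This is exactly the two-line display already written just before the theorem statement.)

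The second step bounds the first term from below by $1-\alpha$. By the stochastic domination argument of Section \ref{subsec:theoMS}, $T_n(Z,\vartheta_\beta)$ is bounded in distribution by $M_n$ in (\ref{eq:mn}), and $q_n(\alpha)$ is defined in (\ref{eq:quantile}) as the $(1-\alpha)$-quantile of $M_n$; hence $\mathbb{P}(T_n(Z,\vartheta_\beta)\le q_n(\alpha))\ge \mathbb{P}(M_n\le q_n(\alpha))\ge 1-\alpha$. Crucially this holds for \emph{every} $\vartheta_\beta\in\cpset$, since the distribution of $M_n$ depends only on $n$ and $\beta$, so the bound is already uniform over the class.

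The third step controls the underestimation term uniformly. Here I invoke Theorem \ref{theo:underestimation}: for the fixed level $\alpha$ and the corresponding threshold $q = q_n(\alpha)$,
\[
\mathbb{P}\bigl(\hat S(q_n(\alpha)) < S\bigr) \;\le\; 1 - \prod_{s=1}^{S}\gamma_{n,s}(q_n(\alpha)).
\]
On the class $\{\vartheta\in\cpset : \Lambda>\Lambda_0,\ \Xi>\Xi_0\}$ each factor $\gamma_{n,s}$ is, by (\ref{eq:gamma}), monotone in $\lambda_s$ and $\xi_s$ and can be bounded below using $\lambda_s\ge\Lambda>\Lambda_0$ and $\xi_s\ge\Xi>\Xi_0$; together with Remark \ref{rem:limitDistM} — which gives $\limsup_n q_n(\alpha)<\infty$, so $q_n(\alpha) = o(\sqrt{n})$ — the argument of the first exponential in (\ref{eq:gamma}) tends to $+\infty$ and both exponentials vanish, uniformly over the class. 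This is essentially the content of Theorem \ref{cor:easyunder} specialized to $\Lambda>\Lambda_0$, $\Xi>\Xi_0$, giving $\sup \mathbb{P}(\hat S(q_n(\alpha))<S) \to 0$. Feeding this into the decomposition yields $\inf \mathbb{P}(\vartheta_\beta\in\cH(q_n(\alpha))) \ge 1-\alpha - o(1)$, and taking $\liminf_n$ finishes the proof.

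The main obstacle is the uniformity in the third step: one must check that the lower bound on $\prod_{s=1}^S\gamma_{n,s}$ degrades gracefully as $S$ is allowed to grow with $n$, or else the theorem is only stated for fixed $S$ (which is the regime of the statement — $S$ is a property of the fixed $\vartheta\in\cpset$). Since the class in the theorem fixes only $\Lambda_0,\Xi_0$ and not $S$, and $S\le 1/\Lambda_0$ is automatically bounded, the product is over at most $\lceil 1/\Lambda_0\rceil$ factors, each $\to 1$ uniformly, so $1-\prod_s\gamma_{n,s}\to 0$ uniformly; this boundedness of $S$ via $\Lambda_0$ is the one point that needs to be made explicit. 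Everything else is bookkeeping on top of results already established.
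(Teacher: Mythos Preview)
Your proposal is correct and follows essentially the same approach as the paper: the paper's proof is the two-line display immediately preceding the theorem statement, which is precisely your decomposition $\mathbb{P}(\vartheta_\beta\in\cH)\ge \mathbb{P}(T_n(Z,\vartheta_\beta)\le q_n(\alpha))-\mathbb{P}(\hat S<S)\ge 1-\alpha-o(1)$, with the $o(1)$ coming from Theorem~\ref{theo:underestimation} and Remark~\ref{rem:limitDistM}. Your explicit observation that $S\le 1/\Lambda_0$ bounds the number of factors in the product (so that uniformity over the class is genuine) is a useful clarification that the paper leaves implicit.
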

As detailed in Appendix \ref{sec:detailsImplementation}, $\cH(q)$ can be computed easily simultaneously with the MQS estimator $\hat{\vartheta}$ and from $\cH(q_n(\alpha))$ confidence intervals for the segment locations $\tau$ and for the quantile values $\theta$ can be constructed.

From Theorems \ref{cor:easyunder} and the fact that if $q_n \to \infty$, $\alpha_n \ZuWeis \Pp(M_n > q_n) \to 0$, it follows directly that for any fixed $\Lambda, \Xi > 0$ and some sequence $q_n \to \infty$ such that $\sfrac{q_n}{\sqrt{n}}\rightarrow 0$ MQS performs consistent model selection for the number of c.p.'s.
The following result goes beyond this and considers the situation of a sequence of regression function $\vartheta_\beta(n)$ in the QSR-model, where the minimal scale $\Lambda = \Lambda_n$ and the minimal quantile jump $\Xi = \Xi_n$ can vanish as $n \to \infty$.
\begin{theorem}[Model selection consistency]\label{theo:vanishingrates}
For a sequence $\vartheta_{\beta,n} \in \cpset$ with $\Lambda_n$ and $\Xi_n$ as in (\ref{eq:LambdaXi}), consider the QSR-model and $\hat{S}(q)$ as in (\ref{eq:hatK}). For some sequence $q_n \to \infty$, assume the following.
\begin{enumerate}
    \item For signals with $\liminf_n\Lambda_n>0$ and $\liminf_n\Xi_n>0$, that $\sfrac{\sqrt{n}}{q_n}\rightarrow \infty$.
    \item For signals with $\liminf_n\Lambda_n>0$ and $\Xi_n\rightarrow 0$, that $\sfrac{\sqrt{n}\Xi_n}{q_n}\rightarrow\infty$.
    \item For signals with $\Lambda_n\rightarrow 0$, that $\sqrt{n\Lambda_n}\Xi_n\geq(2+\epsilon_n)\sqrt{-\log(\Lambda_n)}$, for some $\epsilon_n$ with $\epsilon_n{{\sqrt{-\log\Lambda_n}}/{q_n}\rightarrow\infty}$.
\end{enumerate}
Then, for such sequences $\vartheta_{\beta,n}$, the MQS selector is consistent, that is,
$\mathbb{P} \left(\hat{S}(q_n) = S \right)\rightarrow 1$.
\end{theorem}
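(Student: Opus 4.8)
The plan is to bound the two one‑sided errors separately via $\mathbb{P}(\hat S(q_n)=S)\ge 1-\mathbb{P}(\hat S(q_n)>S)-\mathbb{P}(\hat S(q_n)<S)$ and show both tend to $0$. The overestimation part will use only $q_n\to\infty$, so it is handled uniformly across the three regimes; the three assumptions enter solely to kill the underestimation error in the respective scalings of $\Lambda_n$ and $\Xi_n$.

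For overestimation, from the construction of $\hat S$ in \eqref{eq:hatK} and the bound behind \eqref{eq:1statg} (equivalently Theorem~\ref{theo:overestimation} with $s=0$) one has $\mathbb{P}(\hat S(q_n)>S)\le\mathbb{P}(M_n>q_n)$, a bound that does not involve $\vartheta_{\beta,n}$. By Remark~\ref{rem:limitDistM}, $M_n$ converges in distribution to an almost surely finite $M$; picking, for $\varepsilon>0$, a continuity point $K$ of the law of $M$ with $\mathbb{P}(M>K)<\varepsilon$, weak convergence gives $\mathbb{P}(M_n>K)<\varepsilon$ for large $n$, and $q_n\to\infty$ gives $q_n>K$ eventually, so $\mathbb{P}(M_n>q_n)\le\mathbb{P}(M_n>K)<\varepsilon$. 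Hence $\mathbb{P}(\hat S(q_n)>S)\to0$.

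For underestimation I would use the per‑segment bound of Theorem~\ref{theo:underestimation} (rather than the coarser Theorem~\ref{cor:easyunder}) together with the union bound $\mathbb{P}(\hat S(q_n)<S)\le\sum_{s=1}^S(1-\gamma_{n,s}(q_n))$. Since $\lambda_s\ge\Lambda_n$ and $\xi_s\ge\Xi_n$, each summand is at most $4\exp(-(A_n)_+^2/2)+4\exp(-n\Lambda_n\Xi_n^2)$, where $A_n:=\sqrt{2n\Lambda_n}\,\Xi_n-q_n-\sqrt{2\log(2\e/\Lambda_n)}$, giving
\[\mathbb{P}(\hat S(q_n)<S)\le 4S\big(\exp(-(A_n)_+^2/2)+\exp(-n\Lambda_n\Xi_n^2)\big).\]
In the first two regimes one has $\liminf_n\Lambda_n>0$, so $S=O(1)$ and $\sqrt{2\log(2\e/\Lambda_n)}$ is bounded; using $\Xi_n\le\max(\beta,1-\beta)<1$ and the respective hypothesis, which in both cases forces $\sqrt n\,\Xi_n/q_n\to\infty$ (hence $n\Xi_n^2\to\infty$ and $\sqrt{2n\Lambda_n}\,\Xi_n\gg q_n$), both exponents diverge and the bound vanishes. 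In the third regime write $L_n:=-\log\Lambda_n\to\infty$, so $4S\le 8\Lambda_n^{-1}=8\e^{L_n}$ and $\sqrt{2\log(2\e/\Lambda_n)}=\sqrt{2L_n}+o(1)$; the third assumption gives $\sqrt{2n\Lambda_n}\,\Xi_n\ge(2+\epsilon_n)\sqrt{2L_n}$, hence $A_n\ge(1+\epsilon_n)\sqrt{2L_n}-q_n-o(1)>0$ eventually. A short computation, keeping the nonnegative remainder $\tfrac12(\sqrt2\,\epsilon_n\sqrt{L_n}-q_n)^2$, yields $(A_n)_+^2/2-L_n\ge\sqrt{L_n}\,(2\epsilon_n\sqrt{L_n}-\sqrt2\,q_n-o(1))\to\infty$ by the slack condition $\epsilon_n\sqrt{-\log\Lambda_n}/q_n\to\infty$, so $8\e^{L_n}\exp(-(A_n)_+^2/2)\to0$; and since $n\Lambda_n\Xi_n^2\ge(2+\epsilon_n)^2L_n\ge4L_n$, also $8\e^{L_n}\exp(-n\Lambda_n\Xi_n^2)\le8\e^{-3L_n}\to0$. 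Adding the two estimates gives $\mathbb{P}(\hat S(q_n)=S)\to1$.

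The main obstacle is the third regime. It is essential there to work with the sharp per‑segment bound of Theorem~\ref{theo:underestimation}, since the simplified bound in Theorem~\ref{cor:easyunder} carries an extra term of order $\log(2\e/\Lambda_n)\asymp L_n$ in the exponent which, when $S$ is allowed to grow like $\Lambda_n^{-1}$, would exactly cancel the gain coming from the constant $2$ in the third assumption. Beyond that, the proof is a careful balance of three competing scales — the penalty $\sqrt{2\log(2\e/\Lambda_n)}\asymp\sqrt{2L_n}$, the threshold $q_n$, and $\log S\lesssim L_n$ — in which the constant $2$ absorbs the penalty (leaving the margin $(1+\epsilon_n)\sqrt{2L_n}-q_n$) and the slack $\epsilon_n\sqrt{L_n}\gg q_n$ makes the residual diverge; one must verify $A_n>0$ before squaring and must not discard the $q_n^2$ term when completing the square, but otherwise the estimates are routine.
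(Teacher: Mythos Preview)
Your argument is correct and follows essentially the same route as the paper: split into over- and underestimation, control the former via $q_n\to\infty$ and tightness of $M_n$, and control the latter via the intermediate bound $4\Lambda_n^{-1}\big[\exp(-(\,\cdot\,)_+^2)+\exp(-n\Lambda_n\Xi_n^2)\big]$ (the paper cites this from the proof of Theorem~\ref{cor:easyunder}, you re-derive it from Theorem~\ref{theo:underestimation} by a union bound), then handle the three regimes with the same quadratic bookkeeping. Your observation that the \emph{final} form of Theorem~\ref{cor:easyunder} would not suffice in regime~3 is correct and matches the paper's implicit choice; one small caveat is that your ``$o(1)$'' inside $\sqrt{L_n}(2\epsilon_n\sqrt{L_n}-\sqrt2\,q_n-o(1))$ tacitly assumes the cross term $((1+\epsilon_n)\sqrt{2L_n}-q_n)\cdot O(1/\sqrt{L_n})$ is $o(\sqrt{L_n})$, which need not hold if $\epsilon_n$ is unbounded---but the conclusion is unaffected since the leading $2\epsilon_n L_n$ still dominates all error terms, and the paper's slightly cleaner device $(a+b)^2-a^2\ge b^2$ with $b=\epsilon_n\sqrt{L_n}-q_n/\sqrt2-o(1)$ avoids this bookkeeping altogether.
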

Theorem \ref{theo:vanishingrates} shows that for a sequence of signals $\vartheta_{\beta,n}$ the number of segments is estimated consistently as long as for the respective minimal scale $\Lambda_n$ and minimal quantile jump $\Xi_n$ it holds that
\begin{equation}\label{eq:minimaxDB}
\sqrt{n \Lambda_n} \Xi_n > 2 \sqrt{- \log(\Lambda_n)}.
\end{equation}
\cite{frick2014} showed that in the case of Gaussian observations with piecewise constant mean, no method can consistently estimate the number of segments for a sequence of signals with minimal scale $\Lambda_n$ and minimal jump height $\Delta_n$ whenever $\sqrt{n \Lambda_n} \Delta_n <  \sqrt{- 2 \log(\Lambda_n)}+o(1).$
Furthermore, for any continuous distribution $F$ with density $f$ we find that $\lim_{\delta \to 0} \frac{\xi_{F,\beta} (\delta)}{\delta} = f(\theta_{\beta})$.
For Gaussian observations, the mean and the median coincide, hence, one obtains that $\Xi_n \in \mathcal{O}(\Delta_n)$. 
Consequently, possibly up to the constants, (\ref{eq:minimaxDB}) cannot be improved in general. 
Besides model selection consistency and confidence statements for all quantities, MQS also yields minimax optimal estimation rates for the location of segments (up to a log-factor) as the following theorem shows.
\begin{theorem}[Estimation rates]\label{theo:cprate}
Consider the QSR-model, $\hat{\vartheta} \in \cH(q)$ as in (\ref{eq:H}), and $\Lambda,\Xi$ as in (\ref{eq:LambdaXi}). Then, for any $q>0$ and sequence $\epsilon_n\searrow 0$
\[\mathbb{P}\left(\max_{\tau\in J(\vartheta)}\min_{\hat{\tau}\in J(\hat{\vartheta}(q))}|\hat{\tau}-\tau|>\epsilon_n\right)\leq 4 (S-1) \textup{e}^{-n\epsilon_n\Xi^2}\left[\textup{e}^{2\sqrt{n\epsilon_n}\,\Xi\left(\sfrac{q}{\sqrt{2}}+\sqrt{\log(2\textup{e}/\epsilon_n)}\right)}+1\right].\]
\end{theorem}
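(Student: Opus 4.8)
The plan is to reduce the event that some true change-point $\tau \in J(\vartheta_\beta)$ is far (more than $\epsilon_n$) from every estimated change-point $\hat\tau \in J(\hat\vartheta(q))$ to a \emph{local} event that the multiscale statistic fails to reject a candidate segment function which is constant on an interval of length roughly $\epsilon_n$ straddling $\tau$. Concretely, if the nearest estimated change-point to $\tau$ lies at distance $>\epsilon_n$, then $\hat\vartheta$ is constant on the interval $I$ of length $2\epsilon_n$ (or at least $\epsilon_n$, after bookkeeping at the two ends) centred at $\tau$, which is entirely contained in the union $[\tau-\Lambda/2,\tau) \cup [\tau, \tau+\Lambda/2)$ of the two true segments adjacent to $\tau$ (using $\Lambda \geq 2\epsilon_n$, which holds for $n$ large since $\epsilon_n \searrow 0$; for small $n$ the bound is vacuous if the exponent is nonnegative). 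On the left half of $I$ the transformed data $W_k(Z_k, \hat\vartheta(x_k))$ have mean $\Pp(Z_k \le \theta_{s})$ where $\theta_s$ is the constant value of $\hat\vartheta$ on $I$, and on the right half they have mean $\Pp(Z_k \le \theta_s)$ under the other adjacent distribution; by the definition of $\Xi$ in (\ref{eq:LambdaXi}) and the quantile jump function (Definition \ref{def:quantijump}), at least one of these two means differs from $\beta$ by at least $\Xi$. Hence the empirical mean $\overline W$ over a subinterval of length $\approx n\epsilon_n$ inside that half is, with the stated exponentially small exceptional probability, bounded away from $\beta$, forcing $\sqrt{2 T_i^j(W(Z,\hat\vartheta))} - P_{\ell,n} > q$ and contradicting $T_n(W(Z,\hat\vartheta),\hat\vartheta) \le q$, i.e.\ $\hat\vartheta \in \cH(q)$.

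The quantitative heart of the argument is exactly the exponential inequality already developed for the underestimation bound. The plan is to invoke the estimate behind Theorem \ref{theo:underestimation} / Theorem \ref{cor:easyunder} verbatim, but with the fixed minimal scale $\Lambda$ replaced by the \emph{local} scale $\epsilon_n$: applying the bound on $\Pp\big(\sqrt{2 T_i^j(W)} - P_{\ell,n} \le q \ \text{on every length-}\epsilon_n\text{ window near } \tau\big)$ gives the term
\[
4\,\textup{e}^{-n\epsilon_n\Xi^2}\Big[\textup{e}^{2\sqrt{n\epsilon_n}\,\Xi\left(q/\sqrt 2 + \sqrt{\log(2\textup{e}/\epsilon_n)}\right)} + 1\Big]
\]
per change-point, where the $\log(2\textup{e}/\epsilon_n)$ comes from the penalty $P_{\ell,n} = \sqrt{2\log(\textup{e} n/\ell)}$ evaluated at $\ell \approx n\epsilon_n$. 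Summing over the $S-1$ interior change-points via a union bound yields the factor $4(S-1)$ and the claimed right-hand side. The reduction to a window of length exactly $\epsilon_n$ (rather than $2\epsilon_n$) and the handling of the two constant pieces is where one must be slightly careful, but it is the same two-sided split used in the proof of Theorem \ref{theo:underestimation}: one picks whichever of the two adjacent segments carries a quantile jump of size $\ge \Xi$ and localizes within it.

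I expect the main obstacle to be the careful geometric/combinatorial reduction in the first step — making precise that ``no estimated change-point within $\epsilon_n$ of $\tau$'' really does force $\hat\vartheta$ to be constant on a usable window that lies strictly inside the two true segments adjacent to $\tau$, including the edge cases where $\epsilon_n$ is comparable to $\Lambda$ or where $\tau$ is the first or last change-point, and ensuring the interval of constancy on which we run the local test is long enough ($\approx n\epsilon_n$ sampling points) for the exponential inequality to bite. Once this reduction is in place, the probabilistic estimate is a direct re-use of the novel exponential inequality underlying (\ref{eq:2statg}) with $\Lambda \rightsquigarrow \epsilon_n$, and the union bound over change-points finishes the proof. (The corollary (\ref{eq:intoEstRate}) then follows by taking $\epsilon_n = \log(n)/(n\Xi^2)$ and checking that, under $q_n = o(\sqrt{\log n})$ and $\Lambda_n^{-1} = o(n/\log n)$, the bracketed exponential is dominated by $\textup{e}^{-n\epsilon_n\Xi^2} = n^{-1}$.)
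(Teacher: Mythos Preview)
Your proposal is correct and follows essentially the same route as the paper: the paper defines $I_s=(\tau_s-\epsilon_n,\tau_s+\epsilon_n)$, splits each into $I_s^+,I_s^-$ exactly as in the proof of Theorem~\ref{theo:underestimation}, observes that the bad event forces $\hat\vartheta$ to be constant on some $I_s$, and then literally replaces $\lambda_s$ by $\epsilon_n$ in that earlier argument, concluding with a union bound over the $S-1$ change-points. One minor sharpening: your phrasing ``at least one of these two means differs from $\beta$ by at least $\Xi$'' is not quite the mechanism---the correct split (which you later invoke by reference) is that the constant value $\hat\theta$ on $I_s$ must lie at distance $\ge|\delta_s|/2$ from one of $\theta_s^+,\theta_s^-$, and on \emph{that} half the quantile-jump lower bound applies.
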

Note that for $\Lambda^{-1} = o\left(n/\log(n)\right)$ and $q = o\left(\sqrt{\log(n)}\right)$ a sufficient condition for the right hand side to vanish as $n\rightarrow\infty$ is $\epsilon_n\geq \log n/ (\Xi^2 n)$,
which, up to the log-term equals the minimax optimal sampling rate $1/n$ under a normal error assumption.

\begin{remark}[Consistency for increasing number of change-points]
We stress that for all our results, the number of change-points $S$ can depend on $n$, that is, we can consider a sequence $(S_n)_{n\in\mathbb{N}}$.
Note that, the number of change-points $S_n$  is upper bounded by the minimal scale $\Lambda_n$ via $\Lambda_n^{-1}\geq S_n-1$ for all $n\in\mathbb{N}$.
As argued above, our MQSE remains consistent when $\Lambda_n$ converges to $0$ at rate  $\Lambda^{-1} = o\left(n/\log(n)\right)$, i.e., even when the number of change-points $S_n$ goes to infinity as $n$ increases, as long as  $S_n \in o\left(n/\log(n)\right)$.
\end{remark}

\section{Implementation }\label{sec:implementation}

The MQSE and its associated confidence bands can be computed with dynamic programming, employing the double heap structure underlying the computation of quantiles as in \citep{astola1989}. 
The major idea of the dynamic programming scheme is analog to the one in \citep{frick2014}:
One successively computes the segmentation (including confidence statements) for the sub-problems consisting of the first $j$ data points, with $j = 1, \ldots, n$.
Thereby, one makes use of overlapping structure, such that the $j$th solution is updated efficiently from the solutions of the first $j - 1$ sub-problems.
The basic Bellman equation in this dynamic programming scheme is as follows:
For the $j$th sub-problem with observations $Z_1, \ldots, Z_j$, if one knew that the last change-point of the MQSE is at location $i$, then the MQSE for the $j$th sub-problem equals the MQSE for the $i$th sub-problem on the first $i$ data points.
There are $\mathcal{O}(n)$ possible locations for the  \textit{last change-point} and there are $n$ successive sub-problems.
Thus, overall this dynamic programming scheme has a worst case complexity of $\mathcal{O}(n^2)$ times the cost to compute the constant solution between the last change-point $i$ and the last data point $j$ on interval $[x_i, x_j]$.
Note that, in order to compute a constant MQS solution on some interval $[x_i, x_j]$, one has to intersect all confidence intervals associated with the level $\alpha$-tests $\phi_{k,l}$ in (\ref{eq:localTests}) for $i \leq k \leq l \leq j$ .
When the intersection is empty, this means that there is no constant function for which all local tests in (\ref{eq:localTests}) accept and hence, a new change-point has to be added. 
For sake of brevity, here we only outline the major differences to the algorithm in \citep{frick2014}.
More details on the implementation, including confidence statements and the specific cost function for the MQSE, are given in Appendix \ref{sec:detailsImplementation}.
For the MSB, which plots estimates for the $0.25$, $0.5$ and, $0.75$ quantiles simultaneously, in order to avoid crossing of different quantiles, we make some further minor modifications detailed in Appendix \ref{sec:mbp}.

The local level $\alpha$-tests in (\ref{eq:localTests}) can be inverted into $(1-\alpha)$-confidence statements for the underlying parameter $\theta_s$ in (\ref{eq:1loctest}) as follows.
For $q>0$ define $l(q)$ and $u(q)$ as the two unique solutions of 
\begin{equation}x\log\left(\frac{x}{\beta}\right)+(1-x)\log\left(\frac{1-x}{1-\beta}\right)=q\label{eq:solve}
\end{equation}
such that $0\leq l(q) < u(q) \leq 1$. 
Then, some straight forward calculations show that $T_n(Z,\vartheta) \leq q$ if and only if
\begin{align}\label{eq:boxBij}
\vartheta|_{[x_i,x_j]} \in \left[ Z^{i,j}_{[\underline{m}_{i,j} ]},Z^{i,j}_{[\overline{m}_{i,j}+1 ]}\right)\WeisZu \left[ \underline{b}_{ij}(q), \overline{b}_{ij}(q) \right), \forall i,j \text{ with } \vartheta|_{[x_i,x_j]}  \text{ constant,}
\end{align}
with $\underline{m}_{i,j}= \max \left( 1, \lceil (i-j+1)l(\tilde{q}) \rceil \right) $, $\overline{m}_{i,j}= \min \left( n, \lfloor (i-j+1)u(\tilde{q}) \rfloor \right)$ and $\tilde{q} = (q + \pen^2 / (2 (j-i+1))$. 
Note that $\underline{m}_{i,j} = \underline{m}_{1,j-i+1} \WeisZu \underline{m}_{j-i+1}$ and $\overline{m}_{i,j} = \overline{m}_{1,j-i+1} \WeisZu \overline{m}_{j-i+1} $ only depend on the length of interval $[x_i,x_j]$.
Just as in \citep{frick2014}, the computation of MQS is based on these confidence boxes $\left\{\left[ \underline{b}_{ij}, \overline{b}_{ij}\right) \; : \; 1\leq i\leq j \leq n\right\}$ in (\ref{eq:boxBij}).
In \citep{frick2014} the boxes $\left[ \underline{b}_{ij}, \overline{b}_{ij}\right)$ depend on the local sums of observations $\sum_{l = i}^j Z_l$ and its dynamic program explores that these local sums and hence, boxes of an interval $[i,j]$, can be updated in $\mathcal{O}(1)$ time from the boxes of intervals $[i+1,j]$ and $[i,j-1]$, respectively.
In contrast, for MQS the boxes depend on the particular quantiles of the observations $\{Z_i,\ldots,Z_j\}$. 
Thus, in order to adapt the programming scheme of \cite{frick2014}, one has to update the running $\overline{m}_1,\underline{m}_1,\ldots,\overline{m}_n,\underline{m}_n$-quantiles efficiently.
Here, we use double heap structures as in \citep{astola1989} to update the running quantile in $\mathcal{O}(\log(n))$ time.
In total, this increases the overall computation time by a log-factor, with worst case complexity of order $\mathcal{O}(n^2 \log(n))$.
However, depending on the reconstructed signal, pruning steps and the use of smaller interval systems often lead to a computation time which is almost linear in $n$, see \cite{frick2014} for details.

\section{Simulations}\label{sec:simulations}

In the following, we explore MQS in a simulation study.
Thereby, the choice of the threshold parameter $q$ is essential as it balances detection and overestimation of the number of segments and hence false positives.
Thus, $q$ can be seen as a tuning parameter of MQS.
Via the one-to-one correspondence of a confidence level $\alpha$ and $q_n(\alpha)$ in (\ref{eq:quantile}), in the following, we choose $\alpha=0.1$ and hence $q = q_n(0.1)$.
In this way, the probability that MQS overestimates the number of segments does not exceed $10\%$. Such a choice depends on the application. However, we see a great advantage of our methodology, to rely only on this parameter, which has an immediate statistical meaning.
For a more refined discussion on the choice of threshold parameter $q$ and possible data driven model selection procedures, we refer to \citep{frick2014}.
For example, another possible parameter choice for $q$ is via minimizing the right hand side of  (\ref{eq:conbound}) together with Monte-Carlo simulations of $M_n$. 

In the following, we consider five different competitors for MQS for which software is available online: 
SMUCE from \citep{frick2014}, HSMUCE from \citep{pein2017a}, wild binary segmentation (WBS) from \citep{fryzlewicz2014}, R-FPOP from \citep{fearnhead2017}, quantsmooth (QS) from \citep{eilers2005}, NOT from \citep{baranowski2019}, and NWBS \citep{padilla2019}.
SMUCE provides a multiscale methodology for normal observations with homogeneous variance.
HSMUCE is also designed for normally distributed data, but is robust against changes in variance.
WBS can be seen as a ``greedy" procedure which successively adds changes based on a localized CUMSUM statistic. The theoretical results implicitly assume normally distributed observations with change in mean. 
R-FPOP is a penalized cost approach that uses the biweight loss, which is designed to be particularly robust to extreme outliers. It considers arbitrary changes in the underlying distribution, but cannot be tuned to search for changes in some specific quantiles. In particular, it considers that observations in segments  are i.i.d..
QS is a smoothing method which performs minimization of the asymmetric absolute deviation loss (recall (\ref{eq:koenker})) together with $L_1$ penalization and hence, can compute arbitrary quantile curves. However, QS is not designed for change-point detection, but for data visualization through smoothing. This will be clear in simulations, where the number of segments is very often very large.
NOT choose random subsamples and then uses a tailor-made contrast function to find possible change-points. The contrast functions more related to our work are the (HT) for heavy-tailed noised and (VAR) for changes in variance.
NWBS is a non parametric method based on a CUSUM-like statistic for changes in the distribution function. It then uses wild binary segmentation ideas, as in WBS.
For all competitors, we choose tuning parameters as default in the available software.
%
%
%
%
We summarize some features of these methods in Table \ref{tab:summarytable}.

As measures of evaluation for the simulation study we use the number of estimated segments, the mean absolut squared error (MIAE) $\sum_{i = 1}^n |\hat{\vartheta}(x_i) - \vartheta_\beta(x_i)| / n$, and the entropy-based V-measure introduced in \citep{rosenberg2007}. The latter takes values in $[0, 1]$ and measures whether given clusters include the correct data points of the corresponding class. Larger values indicate higher accuracy with $1$ corresponding to a perfect segmentation.
All results were obtained from $1,000$ Monte Carlo runs.
We also performed simulations on the coverage properties for the confidence intervals and bands. 
Details can be found in Appendix \ref{sec:conf}. 
Similar to other multiscale approaches \citep{frick2014,behr2018}, we find that the level is typically exceeded, meaning that, in general, these confidence statements are conservative.
\subsection{Additive error}\label{subsec:simplesec}
First, we consider an additive model with i.i.d.\ error terms, that is,
\begin{equation}
Z_i=\vartheta(x_i)+\varepsilon_i \quad\mbox{for }i=1,\ldots,n\label{eq:simple}
\end{equation}
with $\vartheta\in\cpset$ and $\varepsilon_1,\ldots,\varepsilon_n$ i.i.d.\ according to some distribution. 
For observations as in (\ref{eq:simple}) in the QSR-model the quantile functions $\vartheta_\beta$ are shifted versions of $\vartheta$, namely,
$\vartheta_\beta=\vartheta+\theta_\beta$,
where $\theta_\beta$ is the $\beta$-quantile of $\epsilon_1$. 
In particular, for any quantile $\beta \in (0,1)$, $\vartheta_\beta$ has the same number and locations of segments.
Here, we consider $\vartheta$ as in Figure \ref{fig:bigex} (top row), which has $7$ segments and $n = 1,988$. 
For the error terms $\varepsilon_i$ we consider normal distribution $\varepsilon_i \sim \cN(0, \sigma^2)$, $t$-distribution with $3$ degrees of freedom $\epsilon_i \sim t_3\sigma / \sqrt{3}$ and variance $\sigma^2$, rescaled Cauchy distribution (heavy tails) $\varepsilon_i \sim 0.02\,\text{Cauchy}(0,1)$, and rescaled chi-square distribution (skewed) with $3$ degrees of freedom and median $0$, that is, $\varepsilon_i \sim (\chi_3^2-\beta_{0.5}(\chi_3^2))\sigma/\sqrt{6}$ with variance $\sigma^2$, with $\sigma^2 = 0.04$ and $\beta_{0.5}(\chi_3^2)$ the median of the distribution $\chi_3^2$. (see Figure \ref{fig:bigex}).
%
%
%

%
The results for MQS$(\beta)$, $\beta = 0.25, 0.5, 0.75$ are shown in Table \ref{tab:simpletable}. 
Note that, in general, the MQS selector of $S$ for $\beta = 0.5,0.75$ seems to have a higher detection power as for $\beta = 0.25$ in this example. 
This is due to the fact that the test signal $\vartheta$ in Figure \ref{fig:bigex} (top row) has 4 jumps upwards but just 2 jumps downwards.
It is easy to check that jumps upwards have a stronger influence on higher (overall) empirical quantiles.
MQS is a reasonable estimator in the four scenarios presented in this section. It is robust against outliers, as well as to skewness of the distributions. 
For normally distributed data is not surprising that methods tailored for this scenario outperform MQS, such as SMUCE, HSMUCE, WBS, and NOT(VAR), in particular in terms of MIAE and segmentation accuracy.
However, MQS(0.5) is comparable in terms of calculating the number of segments. 
On the other hand, for heavy tailed and skewed distributions SMUCE, WBS, and NOT(VAR) fail completely, while MQS(0.5) retains a very high segmentation accuracy, also outperforming HSUMCE for skewed data.
The nonparametric methods tailored for heavy-tails, R-FPOP, NOT(HT), and NWBS, perform comparably to MQS in this additive i.i.d.\ setting.
Although, NOT(HT) highly overestimated the number of segments for Cauchy noise.
%
%
However, as shown in Section \ref{subsec:variancesec}, these methods are highly sensitive to any distributional changes (not present in this set-up), in contrast to MQS which robustly estimates quantile segments among all scenarios.
QS performs comparably to MQS in terms of MIAE, however, it highly overestimates the number of segments and performs worse in terms of estimation accuracy and computational speed. 
This is partly due to the fact that it is designed as a smoothing method and not a segmentation method.

\subsection{Changes in variance}\label{subsec:variancesec}
The QSR-model further allows for changes in variance or other characteristics which are independent of changes in the respective $\beta$ quantile $\vartheta_\beta$. 
Here, we consider changes in variance for normally and $t$ distributed (with $3$ d.f.) observations, $n = 2,000$.
In Figure \ref{fig:varianceex} (top row) the underlying median (mean) (solid line) and variance (dashed line) functions are displayed.
The second and third row show the true  $0.25$, $0.5$, and $0.75$ quantile functions (black lines) and the MQS box plot (red lines).
Note that in this example, the $0.5$ quantile has $4$ and the $0.25$,$0.75$ quantiles have $6$ segments.

Simulation results are shown in Table \ref{tab:variance}. 
MQSE appears very robust against changes in variance within a segment even for heavy tailed distributions.
It estimates the correct number of change-points with high probability.
At the same time, the MQSE for the $0.25$ and $0.75$ quantiles depict changes in variance. 
Similar as before, methods designed for Gaussian (homoscedastic) distributions, such as SMUCE and WBS fail completely for the heavy tailed setting.
Note that, although, HSMUCE was particularly designed for Gaussian distributions with change in variance, MQS(0.5) outperforms it in terms of estimated number of change-points.
As expected, nonparametric methods, such as NWBS and R-FPOP, are sensitive to these changes in variance and perform poorly compared to MQS in terms of estimated number of segments.
Just as in the previous section, QS highly overestimates the number of segments.
NOT(HT) performs comparably to MQS(0.5) in both settings, being better in the normal case, but worse for t distribution.
NOT(VAR) performs bad in both scenarios.
In summary, we find that, whereas other methods only work well in specific situations, MQS is very robust to all scenarios, estimating quantile segments with high accuracy independent of any distributional assumptions.

\section{Real data examples}\label{sec:app}

\subsection{Copy Number Aberrations}\label{subsec:cnaapp}
Copy Number Aberrations (CNA's) are sections of DNA in the genome of cancer cells that are either multiplied or deleted, relative to the state present in normal tissue. 
CNA's are important factors of tumor progression, through the deletion of tumor suppressing genes and the multiplication of genes involved for example in cell division. 
The number of copies of DNA sections, depending on chromosomal loci, corresponds to a segment function, where a segment corresponds to a different copy number.
A common measurement technique is via whole genome sequencing (WGS).
Thereby, the tumor DNA is fragmented into several pieces. 
Then the single pieces are sequenced using short ``reads", and finally these reads are aligned to a reference genome by a computer.
Statistical modeling of WGS data is particularly difficult as random variations and systematic biases, such as mappability and CG bias, lead to violations of parametric model assumptions, such as normal or Poisson, see e.g.\ \citep{liu2013}.
Quantile segmentation with MQS does not require any such specific model assumptions and hence, is particularly suited for this setting.

Figure \ref{fig:cnacomp} shows (pre-processed\footnote{Sequencing produces spatial artifacts in the data and waviness, which can be pre-process using standard procedures of smoothing filter, baseline correction and binning, see, \citep{behr2018} for details.}) WGS data of cell line LS411 from colorectal cancer. Sequencing was performed by Complete Genomics in collaboration with the Wellcome Trust Centre for Human Genetics at the University of Oxford.  
For this particular data set, it is known that the underlying CNA's only take values in the natural numbers. This is because it was collected under special conditions, where cells come from a single homogeneous tumor-clone, see \citep{behr2018} for more details.
This allows certain validation of the estimated segments, something which is not feasible for most real patient tumors.

Figure \ref{fig:cnacomp} shows the MSB (MQS' estimated $0.25$, $0.5$, and $0.75$ quantiles) at confidence level $1- \alpha = 0.99$. 
%
MQSE recovers most of the signal structure correctly.
In particular, MQSE is way more robust than SMUCE \citep{frick2014}, WBS \citep{fryzlewicz2014}, R-FPOP \citep{fearnhead2017}, NOT(HT), and NOT(VAR) \citep{baranowski2019}, see Figure \ref{fig:wgsmulex}.
SMUCE, WBS, R-FPOP, NOT introduce many artificial changes, which cannot be present in the signal as, in this particular example, it is known to only take integer values.
HSMUCE \citep{pein2017a} and NWBS \citep{baranowski2019} 
(fifth and tenth rows in Figure \ref{fig:wgsmulex})
are more robust, but still adds artificial changes.
The sixth row of Figure \ref{fig:wgsmulex} shows the estimated $0.25$, $0.5$, and $0.75$ quantiles of QS \citep{eilers2005}. 
Similar to MQS, it correctly recovers most of the signal structure, but it misses some underlying changes, see, in particular, the last change at data point $7148$.
Moreover, QS has a much higher running time compared to MQS: while MQS took 31 seconds to run each quantile for this data set, QS took 54 minutes. 

\subsection{Ion channel data}\label{subsec:ion}
Ion channels are pore-forming proteins that allow ions to pass through a cell membrane. 
They are vital for several processes like excitation of neurons and muscle cells. 
The pores of an ion channel can open and close, a process called \textit{gating}, often as a result of external stimuli. 
Therefore, the amount of ions that can pass through a channel is not constant in time \citep{chung2007}. 
A major tool for a quantitative analysis of the gating dynamics is the \textit{patch clamp} technique, which allows to measure the conductance of a single ion channel in time \citep{sakmann1995b}. 
Roughly speaking, this kind of data is obtained by inserting a single ion channel in an (often artificial) membrane surrounded by an electrolyte with an electrode to measure the current while constant voltage is applied.
These recordings can be modeled as a segment function disturbed by an error, see e.g.\ \citep{pein2017a, gnanasambandam2017}.
The particular data set considered in Figure \ref{fig:sex} comes from a single channel of the bacterial porin PorB from the Steinam lab (Institute of Organic and Biomolecular Chemistry, University of G\"ottingen). 
%
The measurement protocol incorporates a lowpass filter which leads to local dependencies of the error terms, see \citep{pein2017a}. 
To remove these dependencies, which violate modeling assumptions of the QSR-model, we subsampled every 11th observation.
The MQS multiscale box plot is shown in the top row of Figure \ref{fig:sex}.
A common feature of ion channel data, called open channel noise, is that the noise variance in open states is often much higher than in closed states \citep[Section 3.4.4]{sakmann1995b}.
MQS is very robust against this heterogeneity while at the same time reliably detects most of gating events.
Figure \ref{fig:ic} shows that SMUCE \citep{frick2014}, R-FPOP \citep{fearnhead2017}, and WBS \citep{fryzlewicz2014} introduce a lot of artificial changes, because of increased variance for open channel noise.
QS \citep{eilers2005} (fifth row in Figure \ref{fig:ic}) misses most of the structural segment changes.
Although particularly tailored to this application, HSMUCE \citep{pein2017a} (and NOT(VAR) \citep{baranowski2019}), which assume heteroscedastic normal observations, do not seem to be superior to MQS here.
NOT(HT) shows a similar reconstruction to MQS(0.5).
NWBS misses considerable spikes that other methods can estimate.
Moreover, in contrast to the other methods, MQS explicitly quantifies the change in variance via the inter-quantile distance in the multiscale segment boxplot.
\section{Discussion}
In this work we proposed a new approach for quantile segmentation, that does not just provide an estimate with optimal detection rates, but also comes with honest confidence statements for the number of change-points, their locations, as well as confidence bands for the full underlying quantile function.
These results hold under the minimal assumption of independence between different observations and do not make any other distributional assumptions.
In simulations and real data examples we also observe empirically that the MQSE is very robust and consistently estimates changes in quantiles in various settings.
While other methods usually only work well in a particular setting (for which they are designed for), MQSE can be applied universally for any quantile segmentation task (under the independence assumption).
In the following we discuss a few possible extensions.

\paragraph{Multivariate change-point estimation}
One may wonder whether the MQS procedure can also be applied in a multivariate setting, i.e., for $Z_i\in\mathbb{R}^d$ for $d>1$.
However, it is well known that the univariate concept of quantiles as points of division of the mass of a probability distribution is not immediate to generalize for higher dimension. 
No single point in a multidimensional ($d>1$) space can divide the space in this way, what makes it difficult to extend the QSR-model (\ref{mod:model1}) in a straightforward manner to the multivariate setting (for a survey on multidimensional medians see \citep{small1990}). 
Some attempts have been made in terms of depth curves \citep{hallin2010}, and in this sense one could attempt to look for changes in these curves.
We believe this is an interesting venue to take, however outside of the scope of this paper. 
Note, that even for the more simple concept of component-wise quantiles, there is a severe computational challenge, since the inversion of the local tests leads to ellipsoids (instead of intervals as in the one-dimensional case), which are hard to intersect repetitively.

\paragraph{Change-point estimation for dependence structure}
The only assumption that we make for the theoretical analysis of the MQSE is independence of the sequential observations.
One might wonder whether this assumption can be further weakened.
Clearly, without any specific assumptions on the dependence structure there is no hope for consistent change-point estimation.
For the Gaussian case \cite{tecuapetla-gomez2017} provide some strategy to estimate the underlying covariance structure for piece-wise constant signals with m-dependent errors.
Taking such an estimated covariance structure, one might correct the local likelihood ratio tests for dependence (see \citep{frick2014} for the Gaussian mean case) and thus, preserve the confidence statements of MQS.
However, any consistent estimation of such covariance structure will unavoidably come at the cost of making additional assumptions on the underlying distribution.
The major strength of our approach is that such assumptions are not required (and instead we only rely on the independence assumption).
In practice, we still found that as long as the dependence structure is sufficiently weak, MQSE is robust to this.
Interestingly, a slightly negative autocorrelation can even be beneficial for the detection power of MQS, which is consistent with results on minimax detection boundaries in the autocorrelated case \cite{enikeeva2020}. 
We provide some simulation results in Appendix \ref{sec:ser_dep}.

\section*{Acknowledgments}
The authors acknowledge support of \textit{DFG-RTG 2088}, \textit{DFG-SFB 803 Z02}, and DFG Cluster of Excellence 2067 MBExC. MB was supported by DFG postdoctoral fellowship BE 6805/1-1. Helpful comments from Chris Holmes, Housen Li, and Florian Pein are gratefully acknowledged.

\begin{appendices}
\section{Proofs of Section \ref{sec:theory}}\label{sec:proofs}

The proofs of this section are similar in spirit to those in \citep{frick2014}. However, there are important diferences due to the discrete nature of the transformation (\ref{eq:transformtion}) and the corresponding convergence rates of the empirical quantiles. Before we prove the main results of Section \ref{sec:theory}, we require a couple of auxiliary results.

To this end, let $W_1,W_2,\ldots,W_n$ be i.i.d.\ Bernoulli distributed random variables with mean $\beta$ and let for $x\geq k\geq 0$
\[h_k(x)=x\log\frac{x}{k\beta}+(k-x)\log\frac{k-x}{k(1-\beta)}.\]
For fixed $n$ define the random variables
\[\xi(i,j)=\sqrt{2h_{j-i+1}\left(\sum_{k=i}^jW_k\right)}-\sqrt{2\log\left(\frac{\e \, n}{j-i+1}\right)}.\]
\begin{theorem}\label{theo:multiplecp}
Let $k\in\mathbb{N}$ with $k\geq 1$ and $q_n(\alpha)$ as in (\ref{eq:quantile}). Then
\[\mathbb{P}\left(\min_{1\leq s\leq k}\xi(i_s,j_s)> q_n(\alpha)\text{ for some }1\leq i_1\leq j_1<...< i_k\leq j_k\leq n\right)\leq \alpha^k.\]
\end{theorem}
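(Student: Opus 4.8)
The plan is to prove the bound by induction on the number $k$ of disjoint ``exceeding'' intervals, using a left-to-right greedy decomposition that turns the problem into a product over independent blocks of the $W_i$. The two ingredients are: a distributional identification of a single exceedance with the event $\{M_n>q_n(\alpha)\}$, and a stopping-time argument that peels off the first exceeding interval.

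First I would record the elementary algebraic identity that makes the statement computable. Writing $\overline{W}_i^j=(j-i+1)^{-1}\sum_{k=i}^j W_k$ and using the Bernoulli convention $0\log 0=0$, one checks directly from the definition of $h_{j-i+1}$ in the theorem statement that $h_{j-i+1}\!\left(\sum_{k=i}^j W_k\right)=T_i^j(W)$ with $T_i^j$ as in \eqref{eq:lrs}, hence $\xi(i,j)=\sqrt{2T_i^j(W)}-P_{j-i+1,n}$. Since the $W_i$ are i.i.d.\ Bernoulli with mean $\beta$, exactly as the $X_i$ in \eqref{eq:mn}, this gives $\max_{1\le i\le j\le n}\xi(i,j)\overset{d}{=}M_n$, and therefore, by the definition of $q_n(\alpha)$ in \eqref{eq:quantile} and right-continuity of the distribution function of $M_n$, for every $1\le a\le n$
\[
\mathbb{P}\!\left(\exists\, a\le i\le j\le n:\ \xi(i,j)>q_n(\alpha)\right)\;\le\;\mathbb{P}(M_n>q_n(\alpha))\;\le\;\alpha .
\]
This is the base case $k=1$, stated uniformly in the left endpoint $a$ because every interval in $\{a,\dots,n\}$ is also an interval in $\{1,\dots,n\}$.

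For the induction, for $m\ge 1$ and $1\le a\le n$ let $G_m(a)$ be the event that there exist $a\le i_1\le j_1<\cdots<i_m\le j_m\le n$ with $\xi(i_s,j_s)>q_n(\alpha)$ for all $s$, and define the greedy right endpoint $\rho(a):=\min\{\, t\in\{a,\dots,n\}:\ \exists\, a\le i\le t,\ \xi(i,t)>q_n(\alpha)\,\}$, with $\rho(a)=\infty$ if no such interval exists. Two structural facts drive the proof. First, $\{\rho(a)=t\}$ is measurable with respect to $\sigma(W_a,\dots,W_t)$, since deciding $\rho(a)=t$ only involves the quantities $\xi(i,j)$ for $a\le i\le j\le t$, each of which depends on $W_i,\dots,W_j$ only (the global $n$ enters just through the deterministic penalty $P_{j-i+1,n}$). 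Second, on $G_m(a)$ the first interval of any admissible configuration has $j_1\ge\rho(a)$, so $i_2\ge\rho(a)+1$ and the remaining $m-1$ intervals lie in $\{\rho(a)+1,\dots,n\}$; hence, for $m\ge 2$,
\[
G_m(a)\ \subseteq\ \bigcup_{t=a}^{n}\Big(\{\rho(a)=t\}\cap G_{m-1}(t+1)\Big).
\]
Now assume $\mathbb{P}(G_{m-1}(a'))\le\alpha^{m-1}$ for all $a'$ (true for $m-1=1$ by the base case). Since $G_{m-1}(t+1)$ is measurable with respect to $\sigma(W_{t+1},\dots,W_n)$, which is independent of $\sigma(W_a,\dots,W_t)$, and the events $\{\rho(a)=t\}$ are disjoint,
\[
\mathbb{P}(G_m(a))\ \le\ \sum_{t=a}^{n}\mathbb{P}(\rho(a)=t)\,\mathbb{P}(G_{m-1}(t+1))\ \le\ \alpha^{m-1}\sum_{t=a}^{n}\mathbb{P}(\rho(a)=t)\ =\ \alpha^{m-1}\,\mathbb{P}(\rho(a)<\infty)\ \le\ \alpha^{m},
\]
because $\{\rho(a)<\infty\}$ is exactly the base-case event, of probability at most $\alpha$. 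Taking $a=1$ and $m=k$ and noting that $\{\min_{1\le s\le k}\xi(i_s,j_s)>q_n(\alpha)\text{ for some }1\le i_1\le j_1<\cdots<i_k\le j_k\le n\}=G_k(1)$ yields the claim.

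The step I expect to require the most care is the second structural fact combined with its measurability underpinning: that $\rho(a)$ is genuinely a stopping time in the natural filtration of $(W_i)$, so that the ``past'' event $\{\rho(a)=t\}$ and the ``future'' event $G_{m-1}(t+1)$ really factorize, and, relatedly, the small but crucial bookkeeping point that $\sum_{t}\mathbb{P}(\rho(a)=t)=\mathbb{P}(\rho(a)<\infty)\le\alpha$ contributes one extra factor of $\alpha$ — this is precisely what upgrades the naive bound $\alpha^{m-1}$ to the sharp $\alpha^m$. The algebraic identity $h_\ell=T_i^j$, the degenerate-argument conventions in $h_\ell$, and the uniform-in-$a$ phrasing of the induction hypothesis are routine.
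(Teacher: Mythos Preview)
Your proof is correct and follows essentially the same strategy as the paper: both define the greedy first-exceedance right endpoint as a stopping time in the filtration of $(W_i)$ and exploit independence of disjoint blocks of the i.i.d.\ $W_i$ to peel off one factor of $\alpha$ at a time. The paper packages this via i.i.d.\ increments $\zeta_k-\zeta_{k-1}$ and the strong Markov property of the partial-sum process, while you phrase it as an induction on $k$ with explicit conditioning on $\{\rho(a)=t\}$; the two formulations are equivalent and your version is, if anything, slightly more self-contained.
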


\begin{proof}
Define the stopping times
\begin{align*}
\zeta_0(q)&=1,\\
\zeta_k(q)&= \min\left\{j>1:\max_{\zeta_{k-1}(q)< i\leq j}\xi(i,j)>q\right\}.
\end{align*}

Note that
\begin{align*}
&\zeta_{k+1}(q)-\zeta_k(q)+1= \min\left\{j-\zeta_k(q)>0:\max_{\zeta_{k}(q)< i\leq j}\xi(i,j)>q\right\}\\
&= \min\left\{j>1:\max_{\zeta_k(q)< i\leq j+\zeta_k(q)-1}\xi(i,j+\zeta_k(q)-1)>q\right\}\\
&= \min\left\{j>1:\max_{1< i\leq j}\sqrt{2h_{j-i+1}\left(\sum_{r=i}^{j}W_{r+\zeta_k(q)-1}\right)}-\sqrt{2\log\left(\frac{\e \, n}{j-i+1}\right)}>q\right\}.
\end{align*}
Consider the Markov process $\left(\sum_{i=1}^nW_i\right)_{n\in\mathbb{N}}$. 
By the strong Markov property, for any stopping time $\tau$ the process $(\sum_{i=1}^nW_{i+\tau})$ is independent of $W_1,\ldots,W_\tau$  conditioned on $\tau<\infty$. Note also that $(\sum_{i=1}^nW_{i+\tau})$ is identically distributed for all stoping times $\tau$, because it is the sum of $n$ i.i.d. random variables. This implies that 
\[\zeta_1(q)\,,\, \zeta_2(q)-\zeta_1(q)+1\,,\, \zeta_3(q) - \zeta_2(q) + 1 ,\, \zeta_4(q)-\zeta_3(q)+1\,, \ldots\]
are independent and identically distributed.
Therefore, for any $k\geq 1$ and $x>0$ it follows that
\[\mathbb{P}(\zeta_k(q)-1\leq x)=\mathbb{P}\left(\sum_{l=1}^k\zeta_l(q)-\zeta_{l-1}(q)\leq x\right)\leq \mathbb{P}(\zeta_1(q)-1\leq x)^k.\]
By definition $\zeta_1(q)\leq n$ implies that $M_n>q$, for $M_n$ as in (\ref{eq:mn}). Therefore, it follows from (\ref{eq:quantile}) that
\[\mathbb{P}(\zeta_1(q_n(\alpha))\leq n)\leq \mathbb{P}(M_n>q_n(\alpha))\leq \alpha.\]
\end{proof}

For the following two theorems let $Z_1,\ldots,Z_n$ be independent random variables with distribution functions $F_1,\ldots,F_n$ and $\beta\in(0,1)$ be given. Assume that the beta quantile is identical for all random variables, i.e.,
\[\theta_\beta \ZuWeis F_i^{-1}(\beta) \ZuWeis \inf\{x\in\mathbb{R}:F_i(x)\geq \beta\},\quad\mbox{for all }i=1,\ldots,n.\]
Define the empirical quantile and the minimal distribution function, the maximal distribution function, and the mean distribution function as:
\begin{align}
\begin{aligned}
\hat{\theta}_\beta &\ZuWeis \inf\left\{x\in\mathbb{R}: \sum_{i=1}^n \indE_{ \{Z_i \leq x\}} \geq n \beta\right\},\\
F_{\min}&\ZuWeis \min\{F_1,\ldots,F_n\},\\
F_{\max}&\ZuWeis \max\{F_1,\ldots,F_n\},\\
\overline{F}&\ZuWeis \frac{1}{n}\sum_{i=1}^nF_i.
\end{aligned}
\end{align}

Note that $F_\min$, $F_\max$, and $\overline{F}$ are again distribution functions with $\beta$-quantile $\theta_\beta$.

\begin{lemma}\label{lem:sumpbou}
Let $X_1,\ldots,X_n$ and $Y_1,\ldots,Y_n$ be independent Bernoulli random variables s.t. $X_i\sim B(p_i)$, $Y_i\sim B(p_{\min})$, and $Z_i\sim B(p_{\max})$ for all $i=1,\ldots,n$ and $p_{\min}=\min\{p_1,\ldots,p_n\}$, $p_{\max}=\max\{p_1,\ldots,p_n\}$. Then
\begin{align*}
\mathbb{P}\left(\sum_{i=1}^n X_i\leq c\right)&\leq\mathbb{P}\left(\sum_{i=1}^n Y_i\leq c\right),\\
\mathbb{P}\left(\sum_{i=1}^n X_i\leq c\right)&\geq\mathbb{P}\left(\sum_{i=1}^n Z_i\leq c\right) \quad\text{for all }c\in\mathbb{N}_0.
\end{align*}
\end{lemma}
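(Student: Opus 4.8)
The plan is to prove both inequalities simultaneously by a monotone coupling. Observe first that the three quantities $\mathbb{P}(\sum_{i=1}^n X_i \le c)$, $\mathbb{P}(\sum_{i=1}^n Y_i \le c)$ and $\mathbb{P}(\sum_{i=1}^n Z_i \le c)$ depend only on the marginal laws of the three sums; hence it suffices to construct, on a single probability space, random variables with the prescribed Bernoulli marginals whose partial sums are pathwise ordered, and then to read off the distributional inequalities.

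Concretely, I would take $U_1,\dots,U_n$ i.i.d.\ uniform on $[0,1]$ and set
\[
\tilde X_i = \indE_{\{U_i \le p_i\}},\qquad \tilde Y_i = \indE_{\{U_i \le p_{\min}\}},\qquad \tilde Z_i = \indE_{\{U_i \le p_{\max}\}},\qquad i=1,\dots,n.
\]
Since the $U_i$ are independent, each of the vectors $(\tilde X_i)_i$, $(\tilde Y_i)_i$, $(\tilde Z_i)_i$ has independent coordinates with the correct success probabilities, so $\sum_i \tilde X_i \overset{d}{=} \sum_i X_i$, and likewise for $Y$ and $Z$. Because $p_{\min}\le p_i\le p_{\max}$ for every $i$, the indicators satisfy $\tilde Y_i \le \tilde X_i \le \tilde Z_i$ everywhere on the probability space, and summing over $i$ gives the almost-sure chain $\sum_{i=1}^n \tilde Y_i \le \sum_{i=1}^n \tilde X_i \le \sum_{i=1}^n \tilde Z_i$.

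From this pathwise ordering, for every $c\in\mathbb{N}_0$ one has the event inclusions $\{\sum_i \tilde X_i \le c\}\subseteq\{\sum_i \tilde Y_i \le c\}$ and $\{\sum_i \tilde Z_i \le c\}\subseteq\{\sum_i \tilde X_i \le c\}$; taking probabilities and invoking the distributional identities of the previous paragraph yields exactly $\mathbb{P}(\sum_i X_i \le c)\le \mathbb{P}(\sum_i Y_i \le c)$ and $\mathbb{P}(\sum_i Z_i \le c)\le\mathbb{P}(\sum_i X_i \le c)$, which are the two assertions of the lemma. I do not expect any genuine obstacle here: the only point that needs a moment's care is the reduction to a coupled version, which is legitimate precisely because the claim concerns marginal laws alone. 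A coupling-free alternative would be an induction on $n$ using that convolution preserves the usual stochastic order, but the coupling argument above is shorter and entirely self-contained.
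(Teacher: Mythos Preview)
Your argument is correct and follows essentially the same approach as the paper: construct a coupling that yields a pathwise ordering of the Bernoulli variables, then read off the CDF inequalities for the sums. The only cosmetic difference is the choice of coupling---the paper thins $X_i$ by an independent Bernoulli $W_i\sim B(p_{\min}/p_i)$ to produce $\tilde Y_i\le X_i$, whereas your uniform-threshold construction handles both inequalities at once; either works.
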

\begin{proof}
Choose $W_i\sim B(p_\min/p_i)$ independently for all $i=1,\ldots, n$ and independently of $X_1,\ldots,X_n$. Let
\[\tilde{Y}_i=\begin{cases}
1&\text{if }X_i=1\text{ and }W_i=1\\
0&\text{else}
\end{cases}.\]
Then $\tilde{Y}_i\sim B(p_\min)$ and $\sum_{i=1}^n \tilde{Y}_i\leq \sum_{i=1}^n X_i$ a.s. Therefore,
\[\mathbb{P}\left(\sum_{i=1}^n X_i\leq c\right)\leq\mathbb{P}\left(\sum_{i=1}^n \tilde{Y}_i\leq c\right)=\mathbb{P}\left(\sum_{i=1}^n Y_i\leq c\right) \quad\text{for all }c\in\mathbb{N}.\]
The other inequality follows analogously.
\end{proof}

\begin{theorem}\label{theo:quantileaprox}
Let $\xi_{F,\beta}$ be as in Definition \ref{def:quantijump}. Then for any $\delta>0$
\begin{align*}
\mathbb{P}\left(\hat{\theta}_\beta-\theta_\beta>\delta\right)&\leq 2\exp\left(-2n\,\xi_{F_\min,\beta}(\delta)^2\right),\\
\mathbb{P}\left(\theta_\beta-\hat{\theta}_\beta>\delta\right)&\leq 2\exp\left(-2n\,\xi_{F_ \max,\beta}(-\delta)^2\right).
\end{align*}
\end{theorem}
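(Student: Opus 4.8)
The plan is to reduce both inequalities to a one‑sided deviation bound for a sum of independent Bernoulli random variables, and then apply Hoeffding's inequality. The key observation is that the event $\{\hat\theta_\beta - \theta_\beta > \delta\}$ can be rewritten as a statement about how many of the $Z_i$ fall at or below a fixed threshold. Indeed, by definition of the empirical quantile, $\hat\theta_\beta > \theta_\beta + \delta$ forces $\sum_{i=1}^n \indE_{\{Z_i \le \theta_\beta + \delta\}} < n\beta$ (up to the usual care with ties and the strictness of the inequalities, which is where the right‑continuity of $F$ and the infimum in the definition of $\hat\theta_\beta$ must be handled precisely). So the first step is to justify the inclusion
\[
\{\hat\theta_\beta - \theta_\beta > \delta\} \subseteq \Bigl\{ \textstyle\sum_{i=1}^n \indE_{\{Z_i \le \theta_\beta + \delta\}} < n\beta \Bigr\}.
\]

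Next I would set $X_i \ZuWeis \indE_{\{Z_i \le \theta_\beta + \delta\}}$, which is Bernoulli with parameter $p_i = F_i(\theta_\beta + \delta) \ge F_{\min}(\theta_\beta + \delta) \WeisZu p_{\min}$. Write $\mu = F_{\min}(\theta_\beta + \delta)$; since $\theta_\beta$ is also the $\beta$‑quantile of $F_{\min}$ and $\delta > 0$, we have $\mu \ge \beta$, and in fact $\mu - \beta = \xi_{F_{\min},\beta}(\delta)$ by Definition \ref{def:quantijump} (here $F_{\min}(\theta_\beta + \delta) \ge \beta$ removes the absolute value). Then apply Lemma \ref{lem:sumpbou} with $c = \lceil n\beta\rceil - 1$ to dominate $\mathbb{P}(\sum X_i < n\beta)$ by the corresponding probability for i.i.d.\ $B(p_{\min})$ variables, call them $Y_i$. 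Now $\mathbb{P}(\sum Y_i < n\beta) = \mathbb{P}\bigl(\frac1n\sum Y_i - p_{\min} < \beta - \mu\bigr) \le \mathbb{P}\bigl(\frac1n\sum Y_i - \mathbb{E}[\tfrac1n\sum Y_i] \le -\xi_{F_{\min},\beta}(\delta)\bigr)$, to which Hoeffding's inequality gives the bound $\exp(-2n\,\xi_{F_{\min},\beta}(\delta)^2)$. The factor $2$ in the statement is slack (or absorbs the edge case where $\xi_{F_{\min},\beta}(\delta) = 0$, in which case the claim is trivial since the bound exceeds $1$), so this is comfortably enough. The second inequality is handled symmetrically: $\{\theta_\beta - \hat\theta_\beta > \delta\} \subseteq \{\sum_i \indE_{\{Z_i \le \theta_\beta - \delta\}} \ge n\beta\}$ (again with attention to ties), the relevant Bernoulli parameter is $p_i = F_i(\theta_\beta - \delta) \le F_{\max}(\theta_\beta - \delta) \WeisZu p_{\max}$ with $\beta - p_{\max} = \xi_{F_{\max},\beta}(-\delta)$, use the second inequality of Lemma \ref{lem:sumpbou} to pass to i.i.d.\ $B(p_{\max})$ variables, and apply Hoeffding in the upper‑tail direction.

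I expect the main obstacle to be purely bookkeeping rather than conceptual: getting the discrete inequalities exactly right at the boundary. Because $\hat\theta_\beta$ is defined via an infimum with a non‑strict inequality $\sum \indE_{\{Z_i \le x\}} \ge n\beta$, and because $F_i$ is only right‑continuous, one has to be careful whether events like $\{\hat\theta_\beta > \theta_\beta + \delta\}$ translate into a strict or non‑strict inequality on $\sum \indE_{\{Z_i \le \theta_\beta + \delta\}}$, and whether the threshold is $n\beta$ or the next integer. These off‑by‑one issues are exactly the kind of thing the factor of $2$ (and the triviality of the bound when $\xi = 0$) is there to absorb, so none of them affects the final statement; but they do need to be stated carefully rather than waved through. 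The rest — Lemma \ref{lem:sumpbou} for the stochastic ordering, Hoeffding for the concentration — is then immediate.
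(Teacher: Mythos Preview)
Your proposal is correct and follows essentially the same route as the paper: rewrite the quantile deviation event as a tail event for $\sum_i \indE_{\{Z_i \le \theta_\beta \pm \delta\}}$, apply Lemma~\ref{lem:sumpbou} to pass to i.i.d.\ Bernoullis with parameter $F_{\min}(\theta_\beta+\delta)$ (resp.\ $F_{\max}(\theta_\beta-\delta)$), and then apply a concentration inequality. The only minor difference is that the paper invokes the Dvoretzky--Kiefer--Wolfowitz inequality at the last step (hence the factor $2$), whereas you use the one-sided Hoeffding bound directly on the Bernoulli sum; your version is in fact slightly sharper and the factor $2$ is indeed slack, as you note.
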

\begin{proof}
\begin{align*}
\mathbb{P}\left(\hat{\theta}_\beta-\theta_\beta>\delta\right)&=\mathbb{P}\left(\inf\left\{x\in\mathbb{R}: \sum_{i=1}^n \indE_{\{ Z_i \leq x\}} \geq n \beta\right\}>\theta_\beta+\delta\right)\\
&=\mathbb{P}\left(\sum_{i=1}^n \indE_{ \{Z_i \leq \theta_\beta+\delta\} } < n \beta\right)
\end{align*}
Now, let $X_1,\ldots,X_n$ be i.i.d. random variables with distribution function $F_{\min}$. Then it holds
\[\mathbb{P}(Z_i<\theta_\beta+\delta)\leq\mathbb{P}(X_i<\theta_\beta+\delta)\quad\text{for all }i=1,\ldots,n.\]
From Lemma \ref{lem:sumpbou} it follows that
\begin{align*}
\mathbb{P}\left(\hat{\theta}_\beta-\theta_\beta>\delta\right)
&=\mathbb{P}\left(\sum_{i=1}^n \indE_{\{ Z_i \leq \theta_\beta+\delta\}} < n \beta\right)\\
&\leq\mathbb{P}\left(\sum_{i=1}^n \indE_{\{ X_i \leq \theta_\beta+\delta\}} < n \beta\right)\\
&=\mathbb{P}\left(F_{\min}(\theta_\beta+\delta)-\frac{1}{n}\sum_{i=1}^n \indE_{\{ X_i \leq \theta_\beta+\delta\}}>F_{min}(\theta_\beta+\delta)-\beta\right)\\
&\leq 2\exp\left(-2n\,\xi_{F_{\min},\beta}(\delta)^2\right),
\end{align*}
where the last inequality follows from the Dvoretzky-Kiefer-Wolfowitz inequality, see e.g.\ \citep{massart1990}. The other inequality follows analog.
\end{proof}

\begin{theorem}\label{theo:falsemultiscale}
Let $T_1^n$ be as in (\ref{eq:lrs}) and $W$ as in (\ref{eq:transformtion}). Then, for any $\delta,\epsilon > 0$
\[\mathbb{P}\left(T_1^n\left(W\left(Z, \theta_\beta+\delta\right), \beta\right)\leq \epsilon\right)\leq 2\exp\left\{-\left(\sqrt{2n}\,\xi_{\overline{F},\beta}(\delta)-\sqrt{\epsilon}\right)_+^2\right\}.\]
\end{theorem}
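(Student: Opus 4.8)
The plan is to rewrite the event $\{T_1^n(W(Z,\theta_\beta+\delta),\beta)\le\epsilon\}$ as a one-sided lower deviation of an average of independent Bernoulli variables about its mean, and then to apply Hoeffding's inequality. First I would identify the relevant random variables: with the candidate value $\theta=\theta_\beta+\delta$, the transformation (\ref{eq:transformtion}) yields $W_i=W_i(Z_i,\theta_\beta+\delta)=\indE_{\{Z_i\le\theta_\beta+\delta\}}$, which are independent with $W_i\sim B(p_i)$ and $p_i=F_i(\theta_\beta+\delta)$. Introduce the Bernoulli Kullback--Leibler type function $\psi(x)=x\log(x/\beta)+(1-x)\log((1-x)/(1-\beta))$, so that by (\ref{eq:lrs}) one has $T_1^n(W,\beta)=n\,\psi(\overline{W})$ with $\overline{W}=n^{-1}\sum_{i=1}^n W_i$. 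Since $\overline{F}$ is a distribution function with $\beta$-quantile $\theta_\beta$ and $\delta>0$, its mean satisfies $\mathbb{E}\overline{W}=\overline{F}(\theta_\beta+\delta)=\beta+\xi$ with $\xi:=\xi_{\overline{F},\beta}(\delta)\ge 0$ (recall Definition \ref{def:quantijump}).

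Next I would turn the sub-level set of the likelihood statistic into a deviation event. The map $\psi$ is nonnegative and convex with unique minimum $0$ at $x=\beta$, hence strictly increasing on $[\beta,1]$, so $\{n\,\psi(\overline{W})\le\epsilon\}$ forces $\overline{W}\le u$, where $u\in[\beta,1]$ solves $\psi(u)=\epsilon/n$. If no such $u$ exists, i.e.\ $\epsilon/n>\psi(1)=-\log\beta$, then $\epsilon>-n\log\beta\ge 2n(1-\beta)^2\ge 2n\xi^2$, where the first inequality is the elementary bound $-\log\beta\ge 2(1-\beta)^2$ (which reduces to $(2\beta-1)^2\ge 0$) and the second uses $\xi\le 1-\beta$; hence $(\sqrt{2n}\,\xi-\sqrt{\epsilon})_+=0$ and the asserted bound, being at least $2$, is trivial. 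Otherwise Pinsker's inequality for Bernoulli laws gives $\epsilon/n=\psi(u)\ge 2(u-\beta)^2$, so $u\le\beta+\sqrt{\epsilon/(2n)}$, and therefore
\[
\bigl\{\,T_1^n(W(Z,\theta_\beta+\delta),\beta)\le\epsilon\,\bigr\}\ \subseteq\ \bigl\{\,\overline{W}-\mathbb{E}\overline{W}\le\sqrt{\epsilon/(2n)}-\xi\,\bigr\}.
\]

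Finally I would conclude with Hoeffding's inequality. If $\xi\le\sqrt{\epsilon/(2n)}$ the claim is again trivial; otherwise the right-hand event is a genuine lower deviation by $t:=\xi-\sqrt{\epsilon/(2n)}>0$, and since $\overline{W}$ is an average of independent $[0,1]$-valued variables (whose possibly unequal means enter only through $\mathbb{E}\overline{W}$), Hoeffding's inequality gives $\mathbb{P}(\overline{W}-\mathbb{E}\overline{W}\le-t)\le\exp(-2nt^2)$. Expanding the square, $2nt^2=2n\xi^2-2\sqrt{2n}\,\xi\sqrt{\epsilon}+\epsilon=(\sqrt{2n}\,\xi-\sqrt{\epsilon})^2$, so $\mathbb{P}(T_1^n\le\epsilon)\le\exp\{-(\sqrt{2n}\,\xi-\sqrt{\epsilon})^2\}\le 2\exp\{-(\sqrt{2n}\,\xi-\sqrt{\epsilon})_+^2\}$, which is the assertion.

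The step needing the most care is not an estimate but the bookkeeping around non-identical distributions and degenerate parameter ranges: one must verify that the $F_i$ being different does not spoil the concentration -- it does not, because the individual $p_i$ affect only $\mathbb{E}\overline{W}=\overline{F}(\theta_\beta+\delta)$, and Hoeffding's bound for bounded independent summands is insensitive to the individual means -- and one must dispose of the range of $\epsilon$ in which $\{\psi\le\epsilon/n\}$ fails to be bounded above within $[\beta,1]$, handled by the logarithm inequality above. Once Pinsker's inequality is in hand, no finer large-deviation analysis of the likelihood ratio is required. (Alternatively, the factor $2$ arises directly if one views $\overline{W}=\hat F_n(\theta_\beta+\delta)$ and invokes a two-sided concentration inequality for $\hat F_n(\theta_\beta+\delta)-\overline{F}(\theta_\beta+\delta)$, in the spirit of the proof of Theorem \ref{theo:quantileaprox}.)
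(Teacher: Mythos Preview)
Your proof is correct and follows essentially the same route as the paper: both use Pinsker's inequality to pass from the likelihood-ratio statistic to a deviation of $\overline{W}$ from $\beta$, and then Hoeffding's inequality for independent bounded summands. The only minor difference is cosmetic: the paper applies Pinsker directly as $T_1^n\ge 2n(\overline{W}-\beta)^2$, obtains the two-sided event $|\overline{W}-\beta|\le\sqrt{\epsilon/(2n)}$, uses the triangle inequality, and then invokes two-sided Hoeffding (hence the factor $2$); you instead invert on $[\beta,1]$ to get the one-sided inclusion $\overline{W}\le\beta+\sqrt{\epsilon/(2n)}$ and apply one-sided Hoeffding, which in fact yields the bound without the factor $2$ before you weaken it to match the statement. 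Your explicit handling of the degenerate range $\epsilon/n>\psi(1)$ is a nice touch that the paper subsumes into its ``otherwise trivial'' clause.
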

\begin{proof}
For any $p,q\in(0,1)$, Pinsker's inequality, see e.g.\ \citep[Lemma 2.5]{tsybakov2009}, implies
\begin{equation}\label{eq:pinsker}
    p\log\frac{p}{q}+(1-p)\log\frac{1-p}{1-q}\geq 2(p-q)^2.
\end{equation}
$W_1,\ldots,W_n$ are independent, but not identically, Bernoulli distributed random variables with mean $F_i(\theta_\beta+\delta)$. 
Define $\overline{W}=n^{-1}\sum_{i=1}^nW_i$, $\overline{F}(\theta_\beta+\delta)=n^{-1}\sum_{i=1}^nF_i(\theta_\beta+\delta)$, and assume that ${\xi_{\overline{F},\beta}(\delta)>\sqrt{\sfrac{\epsilon}{2n}}}$ (otherwise the assertion follows trivially). Then (\ref{eq:pinsker}) implies
\begin{align*}
\mathbb{P}\left(T_1^n\left(W\left(Z, \theta_\beta+\delta\right), \beta\right)\leq \epsilon\right)&\leq \mathbb{P}\left(2\left(\overline{W}-\beta\right)^2\leq \sfrac{\epsilon}{n}\right)\\
&=\mathbb{P}\left(\,\abs{\overline{W}-\beta}\leq \sqrt{\sfrac{\epsilon}{2n}}\right)\\
&=\mathbb{P}\left(\,\abs{\overline{W}-\overline{F}\left(\theta_\beta+\delta\right)-\beta+\overline{F}\left(\theta_\beta+\delta\right)}\leq \sqrt{\sfrac{\epsilon}{2n}}\right)\\
&\leq \mathbb{P}\left(\,\abs{\beta-\overline{F}\left(\theta_\beta+\delta\right)}-\abs{\overline{W}-\overline{F}\left(\theta_\beta+\delta\right)}\leq \sqrt{\sfrac{\epsilon}{2n}}\right)\\
&=\mathbb{P}\left(\,\abs{\overline{W}-\overline{F}\left(\theta_\beta+\delta\right)}\geq \xi_{\overline{F},\beta}(\delta)-\sqrt{\sfrac{\epsilon}{2n}}\right)\\
&\leq 2\exp\left(-2n\left(\xi_{\overline{F},\beta}(\delta)-\sqrt{\sfrac{\epsilon}{2n}}\right)^2\right)\\
&=2\exp\left\{-\left(\sqrt{2n}\,\xi_{\overline{F},\beta}(\delta)-\sqrt{\epsilon}\right)^2\right\},
\end{align*}
where the last inequality follows from Hoeffding's inequality.
\end{proof}

Let $\delta>0$ and note that $\xi_{\overline{F},\beta}(\delta)\geq  \xi_{F_{\min},\beta}(\delta)$, since $F_\min\leq\overline{F}$ implies
\begin{align*}
\overline{F}(\theta_\beta+\delta)-\beta\geq F_\min(\theta_\beta+\delta)-\beta\geq 0, 
\end{align*}
where the last inequality follows from the monotonictiy of $F_\min$ and $\delta>0$. Similarly, $\xi_{\overline{F},\beta}(-\delta)\geq  \xi_{F_{\max},\beta}(-\delta)$.

Now we are ready to prove the main results of Section \ref{sec:theory}.

\begin{proof}[Proof of Theorem \ref{theo:overestimation}]
Let $J(\vartheta)$ denote the set of change-points of $\vartheta$. First, note that
\begin{dmath*}
    {\mathbb{P}\left(\Sna> S+2s\right)=\mathbb{P}\left(T_n(Z,\hat{\vartheta})>q_n(\alpha),\,\forall \hat{\vartheta}\in \cpset \text{ with } \# J(\hat{\vartheta})\leq S+2s-1\right)}
    \leq {\mathbb{P}\left(T_n(Z,\hat{\vartheta})>q_n(\alpha),\,\forall \hat{\vartheta}\in \cpset \text{ with } J(\vartheta)\subseteq J(\hat{\vartheta}),\,\# J(\hat{\vartheta})\leq S+2s-1\right)}
    \leq {\mathbb{P}\left(T_n(Z-\vartheta,\hat{\vartheta}-\vartheta)>q_n(\alpha),\,\forall \hat{\vartheta}\in \cpset \text{ with } \# J(\hat{\vartheta}-\vartheta)\leq 2s\right)}
    \leq{\mathbb{P}\left(T_n(\tilde{Z})>q_n(\alpha),\,\forall \tilde{\vartheta}\in \cpset \text{ with } \#J(\tilde{\vartheta})\leq 2s )\right)
    =\mathbb{P}\left(\tilde{S}_{n,\alpha}>1+2s\right)},
\end{dmath*}
with $\tilde{Z}=Z-\vartheta$ and $\tilde{S}_{n,\alpha}$ as in (\ref{eq:hatK}) with $Z$ replaced by $\tilde{Z}$.
Thus, we can assume w.l.o.g.\ that $\vartheta\equiv\theta_0$ and $S=1$. 

Observe that $\Sna\geq 2s + 2$ implies that the multiscale constraint for the true regression function $\vartheta$ is violated on at least $s+1$ disjoint intervals $[\sfrac{i_1}{n},\sfrac{j_1}{n}],\ldots,$ $[\sfrac{i_{s+1}}{n},\sfrac{j_{s+1}}{n}]$ $\subseteq[0,1]$, that is
\[\sqrt{2T_{i_k}^{j_k}(W(Z,\vartheta_\beta),\beta)}-\sqrt{2\log\left(\frac{\e \, n}{j_k-i_k+1}\right)}> q(\alpha)\quad\text{ for all }\; 1\leq k\leq s+1\]
and it follows from Theorem \ref{theo:multiplecp} that
\begin{dmath*}
\mathbb{P}\left(\exists(1\leq i_1\leq j_1\leq...\leq j_{s+1}\leq n):\min_{1\leq k\leq {s+1}}\sqrt{2T_{i_k}^{j_k}(W,\beta)}-\sqrt{2\log\left(\frac{\e \, n}{j_k-i_k+1}\right)}\geq q_n(\alpha)\right)
\leq{\mathbb{P}\left(\exists(1\leq i_1\leq j_1\leq...\leq i_{s+1}\leq j_{s+1}\leq n):\min_{1\leq k\leq {s+1}}\xi(i_k,j_k)\geq q(\alpha)\right)}
\leq \alpha^{s+1}.
\end{dmath*}
\end{proof}

%

\begin{proof}[Proof of Theorem \ref{theo:underestimation}]
%
Define for $s=1,\ldots,S-1$ the intervals 
\[I_s=\left(\tau_{s}-\lambda_s/2,\tau_{s}+\lambda_s/2\right].\]

Note that these intervals are pairwise disjoint because $\lambda_s\leq \min\{\tau_s-\tau_{s-1},\tau_{s+1}-\tau_s\}$.

Let $\theta_s^+=\max\{\theta_s,\theta_{s+1}\}$ and $\theta_s^-=\min\{\theta_s,\theta_{s+1}\}$ and split the interval $I_s$ accordingly, i.e.
\[I_s^+=\{t\in I_s:\vartheta(t)=\theta_s^+\}\quad\text{ and }\quad I_s^-=\{t\in I_s:\vartheta(t)=\theta_s^-\}.\]
We are interested in the event that a function exists which is constant on $I_s$ and fulfills the multiscale constraints in both $I_s^-$ and $I_s^+$, i.e.
\begin{multline}\Omega_s=\left\{\exists\,\hat{\theta}\in\mathbb{R}:\sqrt{2T_{I_s^+}(W(Z,\hat{\theta}),\beta)}-\sqrt{2\log\frac{\e \, n}{\#I_s^+}}\leq q \text{ and}\right.\\\left.\sqrt{2T_{I_s^-}(W(Z,\hat{\theta}),\beta)}-\sqrt{2\log\frac{\e \, n}{\#I_s^-}}\leq q\right\}.\end{multline}
Observe that either $\hat{\theta}\leq\theta_s^+-\sfrac{\delta_s}{2}$ or $\hat{\theta}\geq\theta_s^-+\sfrac{\delta_s}{2}$ and define
\[\Omega_s^+=\left\{\exists\,\hat{\theta}\leq\theta_s^+-\delta_s/2:\sqrt{2T_{I_s^+}(W(Z,\hat{\theta}),\beta)}-\sqrt{2\log\frac{\e \, n}{\#I_s^+}}\leq q\right\},\]
\[\Omega_s^-=\left\{\exists\,\hat{\theta}\geq\theta_s^-+\delta_s/2:\sqrt{2T_{I_s^-}(W(Z,\hat{\theta}),\beta)}-\sqrt{2\log\frac{\e \, n}{\#I_s^-}}\leq q\right\}.\]
Due to the independence of $\Omega_s^+$ and $\Omega_s^-$ and the fact that $\Omega_s\subseteq \Omega_s^+\cup\,\Omega_s^-$, we get that
\[\mathbb{P}(\Omega_s)\leq 1-(1-\mathbb{P}(\Omega_s^+))(1-\mathbb{P}(\Omega_s^-)).\]
Next, we prove an upper bound for $\mathbb{P}(\Omega_s^-)$; a bound for $\mathbb{P}(\Omega_s^+)$ follows from symmetry.

Let $F_1,\ldots,F_{\abs{I_s^-}}$ be the distribution functions of the random variables in $I_s^-$, then $F_i^{-1}(\beta)=\theta_s^-$ for all $i=1,\ldots,\abs{I_s^-}$ . Moreover, let $F_{min}$ and $\overline{F}$ denote the minimum and mean distribution function of the random variables in $I_s^-$. Let 
\begin{equation}\zeta_{I_s^-}:=\inf\left\{x\in\mathbb{R}:\sum_{i \in I_s^-}\mathbbm{1}_{Z_i\leq x}\geq \abs{I_s^-}\beta\right\}\label{eq:empquant}
\end{equation}
be the empirical quantile of the observations on the interval $I_s^-$. Thus, for all $\hat{\theta}\geq \theta_s^-+\sfrac{\delta_s}{2}$, if $\zeta_{I_s^-}\leq \theta_s^-+\sfrac{\delta_s}{2}$, we get that 
\begin{align}\label{eq:proofHelp1}
\beta\leq \overline{W}\left(Z,\zeta_{I_s^-}\right)\leq \overline{W}\left(Z,\theta_s^-+\sfrac{\delta_s}{2}\right)\leq \overline{W}\left(Z,\hat{\theta}\right).
\end{align}
Moreover, the function 
\[f:x\mapsto \abs{I_s^-}\left(x\log\left(\frac{x}{\beta}\right)+(1-x)\log\left(\frac{1-x}{1-\beta}\right)\right)\]
is strictly convex with minimum $\beta$ and hence, $f\big|_{[\beta,1]}$ is strictly increasing. Thus, (\ref{eq:proofHelp1}) implies that for all $\hat{\theta}\geq \theta_s^-+\sfrac{\delta_s}{2}$, if $\zeta_{I_s^-}\leq \theta_s^-+\sfrac{\delta_s}{2}$
\[T_{I_s^-}\left(W\left(Z,\hat{\theta}\right),\beta\right)=f(W(Z,\hat{\theta}))\geq f\left(W\left(Z,\theta_s^-+\sfrac{\delta_s}{2}\right)\right)=T_{I_s^-}\left(W\left(Z,\theta_s^-+\sfrac{\delta_s}{2}\right),\beta\right)\]
and hence,
\begin{dmath*}
\mathbb{P}(\Omega_s^-)\leq {\mathbb{P}\left(\Omega_s^-\cap \left\{\zeta_{I_s^-}\leq \theta_s^-+\frac{\delta_s}{2}\right\}\right)+\mathbb{P}\left(\zeta_{I_s^-}>\theta_s^-+\frac{\delta_s}{2}\right)}
\leq {\mathbb{P}\left(T_{I_s^-}(W(Z, \theta_s^-+\sfrac{\delta_s}{2}), \beta)\leq \frac{\left(q+\sqrt{2\log(\sfrac{2\e}{\lambda_s})}\right)^2}{2}\right)+\mathbb{P}\left(\zeta_{I_s^-}>\theta_k^-+\frac{\delta_s}{2}\right)}
\leq {2\exp\left(-\frac{\left(\sqrt{2n\lambda_s}\,\xi_{\overline{F},\beta}(\sfrac{\delta_s}{2})-q-\sqrt{2\log(\sfrac{2\e}{\lambda_s})}\right)_+^2}{2}\right)}+{2\exp(-n\lambda_s\,\xi_{F_{\min},\beta}(\sfrac{\delta_s}{2})^2)},
\end{dmath*}
where the last inequality follows from Theorems \ref{theo:quantileaprox} and \ref{theo:falsemultiscale}. Moreover, note that $\xi_{\overline{F},\beta}(\sfrac{\delta_s}{2})\geq \xi_s(\sfrac{\delta_s}{2})$. Hence,
\[\mathbb{P}(\Omega_s)\leq 1-(1-\mathbb{P}(\Omega_s^+))(1-\mathbb{P}(\Omega_s^-))\leq1-\gamma_{n,s},\]
with $\gamma_{n,s}$ as in (\ref{eq:gamma}). For $s=1,\ldots,S-1$  define the random variables
\[X_s(\omega)=\begin{cases}
0& \text{if }\omega\in\Omega_s,\\
1&\text{otherwise}.
\end{cases}\]

Observe that $X_s=1$ implies that any function $\hat{\vartheta}\in \cpset$ with $T_n(W(Z,\vartheta),\beta)\leq q$ has at least one change-point on $I_s$. Since $I_1,\ldots,I_{S-1}$  are pairwise disjoint this implies $\hat{S}(q)-1\geq \sum_{s=1}^{S-1}X_s$ . Therefore 
\[\mathbb{P}\left(\hat{S}(q)\geq S\right)\geq \mathbb{P}\left(\sum_{s=1}^{S-1}X_s\geq S-1\right)=\prod_{s=1}^{S-1}(1-\mathbb{P}(\Omega_s))=\prod_{s=1}^{S-1}\gamma_{n,s}.\]
\end{proof}
\begin{proof}[Proof of Corollary \ref{cor:easyunder}]
Let $\gamma_{n,s}(q)$ be as in (\ref{eq:gamma}), $\Xi,\Lambda$ as in (\ref{eq:LambdaXi}), and $\gamma_n(q)$ be defined as $\gamma_n(q)=\min_{1\leq s\leq S-1}\gamma_{n,s}(q)$. Then
\begin{align}\begin{split}\gamma_n(q)&\geq \left[1-2\exp\left(-\frac{\left(2\sqrt{\sfrac{n\Lambda}{2}}\,\Xi-q-\sqrt{2\log(\sfrac{2\textup{e}}{\Lambda})}\right)_+^2}{2}\right)-2\exp(-n\Lambda\,\Xi^2)\right]^2\\
&=\left[1-2\exp\left(-\left(\sqrt{n\Lambda}\,\Xi-q/\sqrt{2}-\sqrt{\log(\sfrac{2\textup{e}}{\Lambda})}\right)_+^2\right)-2\exp(-n\Lambda\,\Xi^2)\right]^2.\end{split}
\label{eq:helpbound}
\end{align}
From Theorem \ref{theo:underestimation} it follows that
\[\mathbb{P}\left(\hat{S}(q)< S\right)\leq 1-\prod_{s=1}^{S-1}\gamma_{n,s}(q)\leq 1-\gamma_n(q)^{S-1}.\]
Using the inequality $(1-x)^m\geq 1-mx$ for all $x\in(0,1)$ and $m\in\mathbb{N}$ and (\ref{eq:helpbound}) it follows that
\begin{align*}\mathbb{P}\left(\hat{S}(q)< S\right)&\leq 4(S-1)\left[\exp\left(-\left(\sqrt{n\Lambda}\,\Xi- \sfrac{q}{\sqrt{2}} - \sqrt{\log(\sfrac{2\textup{e}}{\Lambda})}\right)_+^2\right)+\exp(-n\Lambda\,\Xi^2)\right]\\
&\leq 4(S-1)\textup{e}^{-n\Lambda\Xi^2}\left[\textup{e}^{2\sqrt{n\Lambda}\Xi\left(\sfrac{q}{\sqrt{2}}+\sqrt{\log(2\textup{e}/\Lambda)}\right)}+1\right],
\end{align*}
where the last inequality comes from the fact that $a^2-(a-b)^2\leq2ab$, with $a=\sqrt{n\Lambda}\,\Xi$ and $b=\sfrac{q}{\sqrt{2}}+\sqrt{\log(\sfrac{2\e}{\Lambda})}$. The result follows from the fact that $S\leq\Lambda^{-1}$
\end{proof}

\begin{proof}[Proof of Theorem \ref{theo:vanishingrates}]
From the proof of Corollary \ref{cor:easyunder} we get the following bound
\begin{dmath*}{\mathbb{P}\left(\hat{S}(q)< S\right)}\leq {4\Lambda_n^{-1}\left[\exp\left(-\left(\sqrt{n\Lambda_n}\,\Xi_n- \sfrac{q_n}{\sqrt{2}} - \sqrt{\log(\sfrac{2\textup{e}}{\Lambda_n})}\right)_+^2\right)+\exp(-n\Lambda_n\,\Xi_n^2)\right]}
={4\left[\exp({-\Gamma_{1,n}})+\exp({-\Gamma_{2,n}})\right]}
\end{dmath*}
with $\Gamma_{1,n}=\left(\sqrt{n\Lambda_n}\,\Xi_n- \sfrac{q_n}{\sqrt{2}} - \sqrt{\log(\sfrac{2\textup{e}}{\Lambda_n})}\right)_+^2+\log\Lambda_n\,$ and $\,\Gamma_{2,n}=n\Lambda_n\,\Xi_n^2+\log\Lambda_n$. Then a sufficient condition for $\mathbb{P}\left(\hat{S}(q)< S \right)\rightarrow 0$ is that $\Gamma_{1,n}\rightarrow\infty$ and $\Gamma_{2,n}\rightarrow\infty$ as $n\rightarrow\infty$. 

\textit{Cases 1 and 2:} If $\liminf\Lambda_n>0$, then a sufficient condition for $\Gamma_{1,n}\rightarrow\infty$ is that \[\sqrt{n\Lambda_n}\,\Xi_n-\sfrac{q_n}{\sqrt{2}}\rightarrow\infty.\]
This holds if $\sfrac{\sqrt{n}\Xi_n}{q_n}\rightarrow\infty$. Moreover, if this is the case then $\Gamma_{2,n}\rightarrow\infty$ as $n\rightarrow\infty$ and the proof is finished.

\textit{Case 3:} If $\Lambda_n\rightarrow 0$, assume that $\sqrt{n\Lambda_n}\Xi_n\geq(2+\epsilon_n)\sqrt{-\log\Lambda_n)}$ for a sequence $\epsilon_n$ such that $\epsilon_n\sqrt{-\log\Lambda_n}/q_n\rightarrow\infty$. Using the inequality $\sqrt{x+y}-\sqrt{x}\leq \sfrac{y}{(2\sqrt{x})}$ for $x,y\geq 0$ we obtain
\begin{align*}
\Gamma_{1,n}&\geq \left((2+\epsilon_n)\sqrt{-\log\Lambda_n}-\frac{q_n}{\sqrt{2}}-\sqrt{1+\log 2-\log\Lambda_n}\right)_+^2+\log(\Lambda_n)\\
&\geq \left((1+\epsilon_n)\sqrt{-\log\Lambda_n}-\frac{q_n}{\sqrt{2}}-\frac{1+\log2}{2\sqrt{-\log\Lambda_n}}\right)_+^2+\log(\Lambda_n)\\
&\geq \left(\epsilon_n\sqrt{-\log\Lambda_n}-\frac{q_n}{\sqrt{2}}-\frac{1+\log2}{2\sqrt{-\log\Lambda_n}}\right)_+^2
\end{align*}
where the last inequality comes from the fact that $(a+b)^2-a^2\geq b^2$, for $a,b\geq 0$. A sufficient condition for $\Gamma_{1,n}\rightarrow\infty$ is then that $\epsilon_n\sqrt{-\log\Lambda_n}/q_n\rightarrow\infty$, as it was assumed. Note that $\sqrt{n\Lambda_n}\Xi_n\geq(2+\epsilon_n)\sqrt{-\log\Lambda_n}$ implies 
\begin{align*}
    \Gamma_{2,n}&\geq \left[1-(2+\epsilon_n)^2\right]\log\Lambda_n\\
    &=(-3-4\epsilon_n-\epsilon_n^2)\log\Lambda_n
\end{align*}
which goes to infinity as by definition $\liminf_n\epsilon_n\geq 0$.
\end{proof}
%

\begin{proof}[Proof of Theorem \ref{theo:cprate}]
As in the proof of Theorem \ref{theo:underestimation}, define $S$ disjoint intervals 
\[I_s=(\tau_s-\epsilon_n,\tau_s+\epsilon_n)\subseteq[0,1),\]
and define $I_s^+$, $I_s^-$, $\theta_s^+$ and $\theta_s^-$ accordingly.
Now assume an estimator $\hat{\vartheta}$ of $\vartheta$, with estimated number of segments $\hat{S}$, such that $T_n(Z,\hat{\vartheta})\leq q$ and 
\[\max_{0\leq s\leq S-1}\min_{0\leq l\leq \hat{S}-1}|\hat{\tau}_l-\tau_s|>\epsilon_n.\]
In other words, there exists $0\leq s\leq S-1$ such that $|\hat{\tau}_l-\tau_s|>\epsilon_n$ for all $0\leq l\leq \hat{S}-1$, i.e. $\hat{\vartheta}$ does not have a change-point in the interval $I_s$. Then, as in the proof of Theorem \ref{theo:underestimation},
\begin{align*}
    &{\phantom{\leq}}\mathbb{P}\left(\exists\hat{\vartheta}\in \cpset:\, T_n(Z,\hat{\vartheta})\leq q \,\text{ and } \max_{0\leq s\leq S-1}\min_{0\leq l\leq \hat{S}-1}|\hat{\tau}_l-\tau_s|>\epsilon_n\right)\\
    \begin{split}\leq\mathbb{P}\left(\exists\hat{\theta}\in\mathbb{R}\text{ and some }s : \, \sqrt{2T_{I_s^+} (W(Z,\hat{\theta}),\beta)}-\sqrt{2\log\frac{\e \, n}{\epsilon_n}}\leq q \quad\text{ and }\right.\qquad\\\left.\sqrt{2T_{I_s^-}(W(Z,\hat{\theta}),\beta)}-\sqrt{2\log\frac{\e \, n}{\epsilon_n}}\leq q\right)\end{split}
\end{align*}
Finally, the assertion follows by replacing $\lambda_s$ by $\epsilon_n$ in the proof of Theorem \ref{theo:underestimation}.
\end{proof}
\section{Illustrative example with simulated data}\label{sec:example2}
\begin{figure}[th!]
\centering
\includegraphics[width=0.95\textwidth]{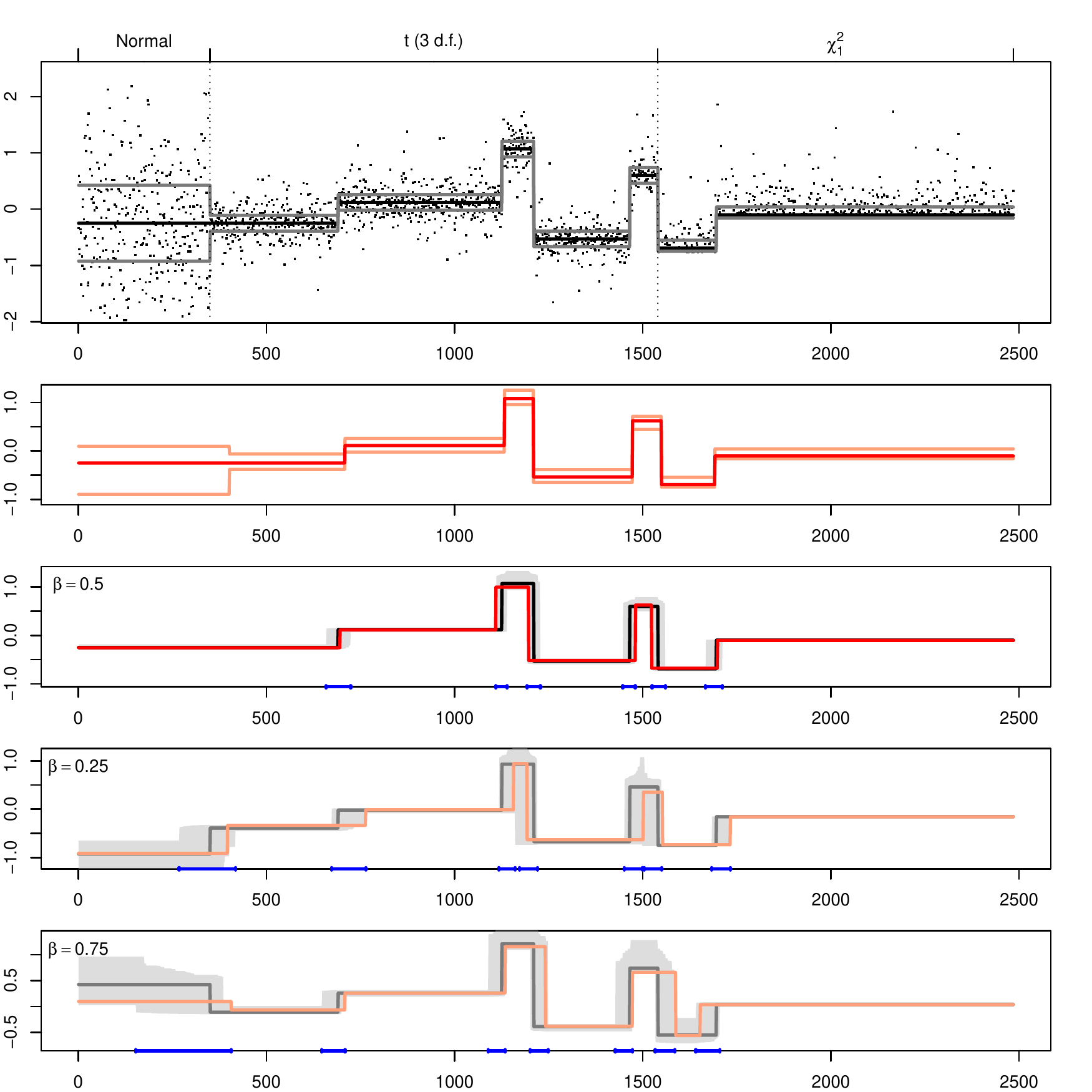}
\caption{\small First row: Observations $Z_1,\ldots,Z_n$, $n=2485$ as in the QSR-model with median function $\vartheta_{0.5}$ (black line) and 0.25- and 0.75-quantile functions $\vartheta_{0.25}, \vartheta_{0.75}$ (grey lines). Data comes from three different distributional regimes: Normal, $t_3$, and $\chi_1^2$.
Second row: The multiscale segment boxplot (MSB), with estimates (MQSE) for the median (red line), the 0.25- and 0.75-quantiles (light red lines), at nominal level $\alpha = 0.1$, see (\ref{eq:1statg}). 
Subsequent rows: True quantiles (black and gray solid lines) and MQSE (red and light red), together with  $90\%$ simultaneous confidence bands (gray area) and simultaneous confidence intervals for the change-point locations (blue intervals), for $\beta=0.5,0.75,0.25$, respectively. 
}
\label{fig:dif_reg}
\end{figure}

In the following we provide another example with synthetic simulated data which highlights MQS robustness to arbitrary distributional changes.
The data, shown as black dots in the top row of Figure \ref{fig:dif_reg}, comes from 3 different distributional regimes. 
The first 350 observations are drawn from a normal distribution with variance 1, the next 1190 observations from a t distribution with 3 d.f. and variance 0.05 and the last 945 from a $\chi^2$ distribution with 1 d.f. and variance 0.05. 
This data setting is covered by the QRS-model, which only assumes independence of the observations, but does not make any other distributional assumptions.
The black line in the top row of Figure \ref{fig:dif_reg} shows the true median of the underlying signal and the gray lines show the $0.25$ and $0.75$ quantiles.
For example, the median is a segment function $\vartheta_{0.5}\in\cpset$ (black line) with $S=7$ segments. 
Note that the QSR-model allows that different quantile curves may have entirely different segments, for example, the $0.25$- and $0.75$-quantile functions $\vartheta_{0.25}, \vartheta_{0.75}$ (gray lines) have an additional segment in the first regime ($S=8$).
The second row in Figure \ref{fig:dif_reg} shows the MQS estimates (MQSE) for the median (red line) and the 0.25- and 0.75-quantiles (light red lines) for the data as in the first row. 
The following three rows show the three estimates together with the corresponding confidence bands.
Note that, MQS accurately estimates change-point locations even when changes in the distribution occur, either at a change-point location or within a constant segment.
\section{MQSE with Wald-Wolfowitz runs statistic}\label{subsec:theoChoice}
In the following, we present an alternative approach to (\ref{eq:koenker}) for MQSE as a particular element in (\ref{eq:H}), which uses the transformed data in (\ref{eq:transformtion}) and is based on the Wald-Wolfowitz runs statistic \citep{wald1940}. 
Recall that, for $\vartheta = \vartheta_\beta$ the true underlying regression function in (\ref{eq:transformtion}), the transformed observations $W_1,\ldots,W_n$ are i.i.d.\ Bernoulli distributed with mean $\beta$.
Let $R$ be the number of runs of the sequence $W_1,\ldots,W_n$, i.e.,
\[R(W) \ZuWeis \# \{1 \leq i \leq n-1 \; : \; W_i \neq W_{i+1}\} + 1,\]
and $N_0(W) \ZuWeis \sum_{i=1}^n W_i$. Then, conditioned on $N_0 = n_0$ and $n - N_0 = n_1$, it holds that $R$ is asymptotically normally distributed with mean $\mu= {2n_1n_0}/{n}-1$ and variance $\sigma^2={2n_1n_0(2n_1n_0-n)}/({n^2(n-1)})$
see \citep{wald1940}.
Hence, for $\vartheta = \vartheta_\beta$ in (\ref{eq:transformtion}) it follows that $\Pp(R(W) = r, \; N_0(W) = k) = \Pp(R(W) = r, \;|\; N_0(W) = k)\Pp(N_0(W) = k)$ is asymptotically well approximated by
\begin{align}\begin{split} D(r,k)
\ZuWeis \binom{n}{k}\frac{\beta^k(1-\beta)^{n-k}}{\sqrt{2\pi \sigma^2}}\exp{\left(-\frac{(r-\mu(n,k))^2}{2\sigma(n,k)^2}\right)}.\end{split}\label{eq:testst}
\end{align}

\begin{definition}\label{def:mqse}
For a threshold $q > 0$, $\cH(q)$ as in (\ref{eq:H}), the MQS estimator (MQSE) is defined as
\begin{equation}\label{eq:mqse}
\hat{\vartheta}  \in  \argmax_{\vartheta\in\mathcal{H}(q)}D\left(R(W(Z,\vartheta)), N_0(W(Z,\vartheta))\right).
\end{equation}
\end{definition}

\begin{remark}\label{rem:uniqueness}
Note that the estimator MQSE may not be unique, as for two estimators $\hat{\vartheta},\tilde{\vartheta} \in \mathcal{H}(q)$ it is possible that $R(\hat{W})=R(\tilde{W})$ and $N_0(\hat{W})=N_0(\tilde{W})$, for $\hat{W}\ZuWeis W(Z,\hat{\vartheta})$ and $\tilde{W}\ZuWeis W(Z,\tilde{\vartheta})$ as in (\ref{eq:transformtion}).  For $n$ sufficiently large, we found it to be unique in most of the cases. When it is not unique we choose the first feasible change-point location.
\end{remark}

In the following, we provide a small simulation study to compare the segmentation performance of MQSE for the Koenker loss from equation (\ref{eq:koenker}) in the main text and the runs based loss from equation (\ref{eq:mqse}).
To this end, we considered a simple bump function with $n = 700$ equally spaced observations in an additive regression model for Gaussian and Cauchy noise, respectively.
The underlying median signal is $0$ for the first $300$ and last $300$ observations and $1$ elsewhere.
Table \ref{tab:comp} shows the MISE and the V-measure for $\beta = 0.25, 0.5, 0.75$.
For Gaussian noise, the estimate from Definition \ref{def:mqse} sightly outperforms the Koenker based estimate from equation (\ref{eq:koenker}) in the main text, in particular for the higher quantile $\beta = 0.75$.
For the heavy tailed Cauchy noise, it also outperforms the Koenker based estimate for the $0.75$ quantile, but not for the other two quantiles.
In summary, we find that both estimates perform comparable, with the estimate from Definition \ref{def:mqse} slightly outperforming the Koenker based estimate for higher quantiles.
Therefore, we conclude that this choice is not of major concern in practice.
Both estimates are available in the online implementation at \url{https://github.com/ljvanegas/mqs}.

\begin{table}[h]
\centering
\begin{tabular}{c|c|c|cc}
 Setting & $\beta$ & Estimator & MISE & V-m.\\
 \hline
\multirow{6}{*}{Normal} &\multirow{2}{*}{0.5} & \textbf{Runs}  & 0.09 &  0.74\\
  & &Koenker & 0.09 &  0.72\\
 \cline{2-5}
&\multirow{2}{*}{0.25} &Runs & 0.43 &  0.69\\
& &Koenker & 0.42 &  0.68\\
 \cline{2-5}
&\multirow{2}{*}{0.75} & \textbf{Runs}  & 0.43 &  0.72\\
& &Koenker & 0.49 &  0.66\\
 \hline
 \multirow{6}{*}{Cauchy} &\multirow{2}{*}{0.5} & Runs  & 0.03 &  0.90\\
  & & \textbf{Koenker} & 0.01 &  0.98\\
 \cline{2-5}
&\multirow{2}{*}{0.25} &Runs & 0.05 &  0.85\\
& & \textbf{Koenker} & 0.02 &  0.95\\
 \cline{2-5}
&\multirow{2}{*}{0.75} &\textbf{Runs}  & 0.05 &  0.84\\
& &Koenker & 0.08 &  0.78\\
 \hline
\end{tabular}
\caption{\label{tab:comp} \small MISE and V-measure of the two proposed estimators in equation (\ref{eq:koenker}) in the main text (Koenker) and Definition \ref{def:mqse} (Runs), respectively, for a simple bump function as described in the text, for additive Gaussian and Cauchy noise and quantiles $\beta = 0.25, 0.5, 0.75$. The outperforming method is marked bold in each case.
\smallskip
}
\end{table}
\section{Details of the Implementation}\label{sec:detailsImplementation}
Consider some given observations $Z_1,\ldots, Z_n \in \R$, quantile $\beta\in(0,1)$, and threshold $q\in\mathbb{R}$ (corresponding to some confidence level $\alpha$ as in (\ref{eq:mn})). 
In this section we give more details on the implementation of MQS.
Pseudocode for MQS is given in Algorithm \ref{estimator}, where the general dynamic programming structure is similar to the algorithm for SMUCE in \citep{frick2014}. 

\paragraph*{Bellman equation:}

%
%

Following a dynamic programming approach, for $j=1,\ldots,n$ we successively solve the estimation problem for $Z_1,\ldots,Z_j$ adding additional change-points when necessary.
When solving sub-problem $j$, that is, calculating the respective MQSE $\hat{\vartheta}_{1,j}$ for observations $Z_1, \ldots, Z_{j}$ we have already solved sub-problems $l$ with estimates $\hat{\vartheta}_{1,l}$ for $l = 1, \ldots, j-1$, where  $\hat{\vartheta}_{1,j-1}$ had $k$ change-points.
If we knew, that the last change-point of $\hat{\vartheta}_{1,j}$ was at location $i < j$, then 
\begin{equation}\label{eq:din_prog}
\hat{\vartheta}_{1,j}=\hat{\vartheta}_{1,i}\mathbbm{1}_{\{1,\ldots,i\}}+\hat{\theta}_{i+1,j}\mathbbm{1}_{\{i+1,\ldots,j\}}.
\end{equation}
Note that (\ref{eq:din_prog}) corresponds to a dynamic programming Bellman equation, which connects the solution of a previous sub-problem $i$ to the solution of the current sub-problem $j$.
For an illustration see Figure \ref{fig:implementation}, where the red line represents $\hat{\vartheta}_{1,i}$ and the blue line $\hat{\theta}_{i+1,j}$.
Thus, in order to find $\hat{\vartheta}_{1,j}$ we need to find its last change-point $i$.

\begin{figure}[th!]
\centering
\includegraphics[width=0.95\textwidth]{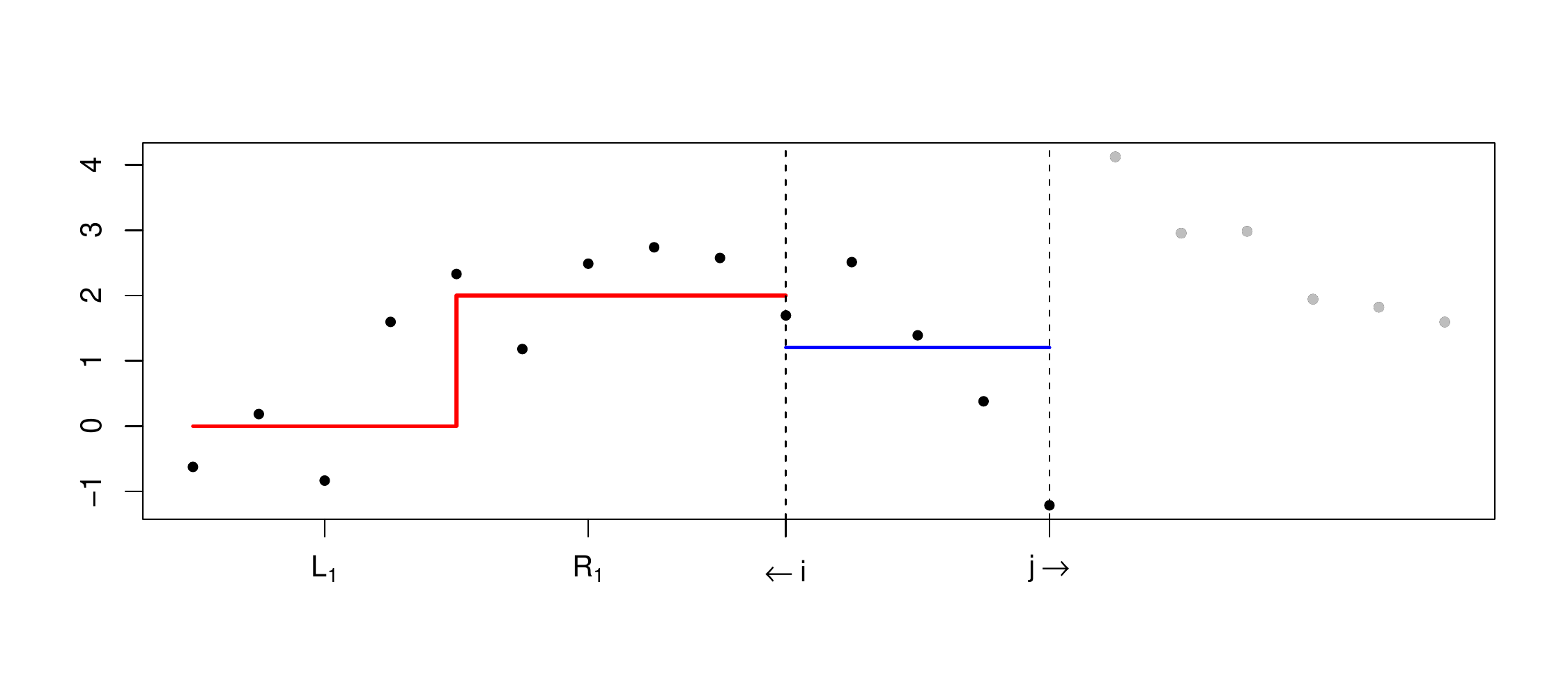}
\caption{\small Illustration of the dynamic programming scheme. Data $Z_1,\ldots,Z_n$ is given as dots. Black dots correspond to the first $j$ sub-problems that have already been solved. The remaining $n - j$ gray dots correspond to sub-problems that still need to be solved. The current candidate index for the last change-point of $\hat{\vartheta}_{1,j}$ is denoted by $i < j$. The algorithm has already calculated $\hat{\vartheta}_{1,i}$ (red line), and is currently calculating the last constant segment $\hat{\theta}_{i+1,j}$ (blue line). $L_1$ and $R_1$ represent the left and right bounds for the first change-point.
}
\label{fig:implementation}
\end{figure}

\paragraph*{Adding a change-point:} 


Recall that MQSE has a minimal number of change-points among all valid solutions in the set $\cH$ from equation  (\ref{eq:H}) (meaning that all local confidence statements are satisfied).
Therefore, when solving sub-problem $j$, in a first step, the algorithm checks whether there is a valid solution for $\hat{\vartheta}_{1,j}$ with $k$ change-points. 
Note that, as $\hat{\vartheta}_{1,j-1}$ has $k$ change-points, this implies that any valid solution for  $\hat{\vartheta}_{1,j}$ must have at least $k$ change-points, too.
Further, note that there is a valid solution for $\hat{\vartheta}_{1,j}$ with $k$ change-points if and only if there exists some $i < j$ with valid constant estimator on the interval $[i+1,j]$ such that $\hat{\vartheta}_{1,i}$ has $k - 1$ change-points.
When no such $i$ exists, this implies that $\hat{\vartheta}_{1,j}$ has exactly $k+1$ change-points.
Searching through all indexes $i$ where $\hat{\vartheta}_{1,i}$ has $k$ change-points, the algorithm finds the optimal location, in terms of the statistic $D$ in (\ref{eq:testst}), for $\hat{\vartheta}_{1,j}$'s last change-point.
The indexes $R_k$ (see pseudocode) keep track of those candidate locations that need to be considered when there are $k+1$ change-points, see next paragraph.

\paragraph*{Confidence intervals for change-points:}
While iterating over $j$ (and thereby dynamically increasing $k$), we keep track of indices $R_k$ and $L_k$.
$R_k\in\{1,\ldots,n\}$ is such that $\hat{\vartheta}_{1,R_k}$ has $k$ change-points and $\hat{\vartheta}_{1,R_k+1}$ has $k+1$ change-points, i.e., $R_k$ is the greatest index $j$ such that $\hat{\vartheta}_{1,j}$ has $k$ changes.
Similarly, $L_k\in\{1,\ldots,n\}$ is such that $\hat{\vartheta}_{L_k,R_k}$ has no change-point and $\hat{\vartheta}_{L_k+1,R_k}$ has one change-point, i.e., $L_k$ is the smallest index such that $\hat{\vartheta}_{L_k,R_k}$ is constant. 
Note that this implies that for any $\hat{\vartheta} \in \cH$ the location $\hat{\tau}_k$ of its $k$-th change-point is contained in the interval $[L_k,R_k]$. 
As the true underlying change-point function is contained in $ \cH(q)$ with probability at least $1-\alpha(q) + {o}(1)$ (recall Theorem \ref{theo:confidenceset}), this implies that $[L_k,R_k]$ is a valid confidence interval for its $k$th change-point location.

\paragraph*{Double-heap structures:}
In order to check whether a valid constant solution exists on a given interval, one has to intersection the confidence boxes from equation (\ref{eq:boxBij}) of the main text.
This can be done efficiently by updating the $\overline{m}_{1, j - i + 1}$ and $\underline{m}_{1, j - i + 1}$  quantiles for observations $Z_i, \ldots, Z_j$ from the respective quantiles for $Z_{i-1}, \ldots, Z_{j-1}$.
To implement this dynamic quantile update, we use double-heaps ($DH$) as described in \citep{astola1989}. 
Double-heaps are graphs that consist of two ordered binary trees, one decreasingly and one increasingly, with a common root (see Figure \ref{fig:doubleheap} for an example).
For a given quantile $\beta\in(0,1)$ and observations $Z_1,\ldots,Z_n$, whenever we have the observations stored in a double-heap with the right proportion of nodes stored below and above the root, we immediately get that the root is a $\beta$-quantile. 
For example, if we have the same number of data points stored above and below the root (as in Figure \ref{fig:doubleheap}), the root is always the median.
The benefit of working with double-heaps, is that they can be updated in $\log(n)$ time, since to maintain the ordering structure by adding (or replacing) one data point, one only has to go through a single branch.  

\begin{figure}
\centering
\begin {tikzpicture}[-latex ,auto ,node distance =1.5 cm and 1.5cm, on grid  ,
semithick ,
state/.style ={ circle ,top color =white ,
draw, text=blue , minimum width =1 cm}]
\node[state] (0) {4};
\node[state] (1) [above left= of 0] {8};
\node[state] (2) [above right =of 0] {6};
\node[state] (-1) [below left =of 0] {2};
\node[state] (-2) [below right =of 0] {-1};
\node[state] (3) [above left =of 1] {12};
\node[state] (4) [right = 2cm of 3] {9};
\node[state] (6) [above right =of 2] {11};
\node[state] (5) [left =2cm of 6] {15};
\node[state] (-3) [below left =of -1] {1};
\node[state] (-4) [right = 2cm of -3] {-3};
\node[state] (-6) [below right =of -2] {-2};
\node[state] (-5) [left =2cm of -6] {-5};

\path[style={sloped,anchor=south,auto=false}] (0) edge  node {$\geq$} (1);
\path (0) edge  node {}(2);
\path[style={sloped,anchor=south,auto=false}] (-1) edge node {$\leq$} (0);
\path (-2) edge node {} (0);
\path (1) edge node {} (3);
\path (1) edge node {} (4);
\path (2) edge node {} (5);
\path (2) edge node {} (6);
\path (-3) edge node{} (-1);
\path (-4) edge node{} (-1);
\path (-5) edge node{} (-2);
\path (-6) edge node{} (-2);

\draw[thick,dotted]     ($(6.north west)+(1,1)$) rectangle ($(3.south east)+(-1,-3.2)$);
\draw[red, thick,dotted] ($(-6.north west)+(1,3.2)$) rectangle ($(-3.south east)+(-1,-1)$);

\end{tikzpicture} 
\caption{Example of a double-heap for the median with 13 nodes.}
\label{fig:doubleheap}
\end{figure}
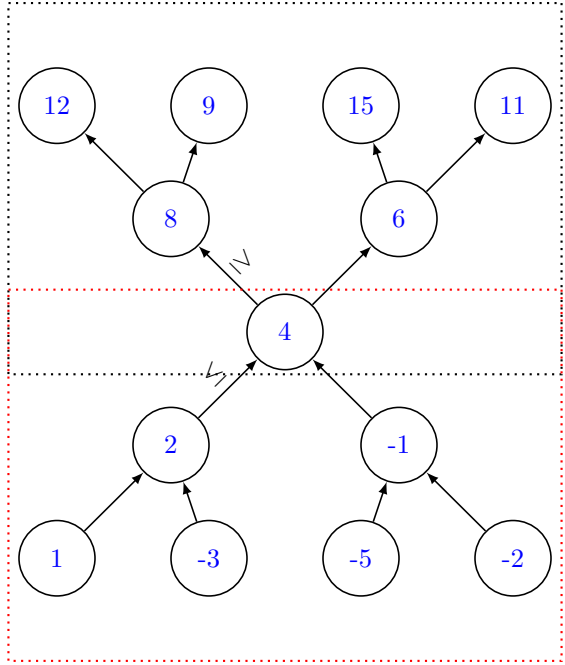

\paragraph*{Intersecting confidence boxes:} As detailed in Section \ref{sec:implementation} of the main text, the local tests can be inverted into intervals as in (\ref{eq:boxBij}).
Recall that their bounds correspond to the $\overline{m}_{j - i + 1}$ and $\underline{m}_{j - i + 1}$ quantiles of $Z_i, \ldots, Z_j$.
Therefore, we can store $2n$ double heaps (denotes as $\overline{DH}_{l}$ and $\underline{DH}_{l}$, $l = 1, \ldots, n$) for the upper and lower bounds of the intervals, updating them in each step by replacing one element.
In other words, to find the bounds for the interval $[i,j]$, we use the $(j-i+1)$ double-heap ($DH_{j-i+1}$), that has previously stored the bounds for $[i-1,j-1]$ and update it accordingly. 
Then, we can intersect this bounds recursively to check whether a constant valid solution exists.

\paragraph*{Updating the candidates and cost:} An additional double-heap structure $DH_\beta$ allows us to efficiently update the $\beta$-quantile for each interval $[i,j]$, and therefore, calculate the candidate function with $k$-th change-point $i$ (see Figure \ref{fig:implementation} blue line). 
If the $\beta$-quantile is inside the multiscale bound (i.e., if it is in the intersection of all local intervals) we choose it as $\hat{\theta}_{i+1,j}$, otherwise we choose either the upper or lower bound of the intersection.
The cost of (\ref{eq:din_prog}) for $i$ can also be calculated dynamically. 
The two quantities needed to calculate the statistic $D$ in (\ref{eq:testst}) are the sum of transformed data and the number of runs.
Both can be updated efficiently in this dynamic programming scheme.

\paragraph*{Confidence bands:}
For the constant segments $[R_k+1,L_{k+1}-1]$, the confidence bands are given by the intersection of all its local confidence boxes in (\ref{eq:boxBij}). 
For the confidence intervals $[R_k, L_k]$, the band is calculated as the union over all possible change-points $i$ of the corresponding intersected local boxes in (\ref{eq:boxBij}). 

\paragraph*{Computation time and space:}
The worst case computation time of this algorithm is $\mathcal{O}(n^2\log n)$, and this time is achieved with a constant estimator. 
In many cases the computation time is smaller, in particular when there are $O(n)$ jumps, in which case the computation time is of order $\mathcal{O}(n \log n)$. 
Compared to the SMUCE algorithm in \cite{frick2014} this is an increase of order $\log(n)$, which is due to the additional $\log(n)$ cost of quantile updates.
We also note that the required space of MQS is bigger than for SUMCE, since here we need to store double heap structures for each of the $2n + 1$ quantiles that need to be calculated. 
SMUCE needs $\mathcal{O}(n)$ space, while MQS requires $\mathcal{O}(n^2)$.

\begin{spacing}{0.8}
\begin{algorithm}
\footnotesize
\caption{Estimator MQS}\label{estimator}
\begin{algorithmic}[1]
\INPUT $Z_1,\ldots,Z_n$, $\beta\in(0,1)$ and $q\in\mathbb{R}$
\For{$j=1,\ldots,n$}
\State calculate $\underline{b}_{j}:=\underline{b}_{1,j}$ and $\overline{b}_{j}:=\overline{b}_{1,j}$ using (\ref{eq:boxBij}) in the main text\Comment{transformed data bounds}
\EndFor
\State $k=-1$
\State $R_0=1$
\State $R_1 = 1$
\State $j = 0$
\State Inititalize double-heaps $\underline{DH}_{\ell}$, $\overline{DH}_{\ell}$, $DH_{\beta}$, $l = 1, \ldots, n$ \Comment{see paragraph Double-heap structures}
\While{  $j < n$} 
\State $k=k+1$\Comment{add a change-point}
\For{$j=R_{k},\ldots, n$}
\For{$i=j,\ldots, R_{k-1}$}
\State replace $Z_{i-1}$ by $Z_j$ on $\underline{DH}_{j-i+1}$ 
\State replace $Z_{i-1}$ by $Z_j$ on $\overline{DH}_{j-i+1}$ \Comment{see paragraph Intersecting confidence boxes}
\State extract and intersect the bounds
\If{the intersection is not empty} \Comment{calculate candidate and cost for (\ref{eq:din_prog}) }
\State add $Z_i$ to $DH_\beta$ 
\State calculate $\hat{\theta}_{i,j}$
\State calculate $D$ \Comment{see paragraph Updating the candidates and cost}
\If{$D_i<D_{1,j}$}\Comment{new $D$ is lower$\rightarrow$ update optimal $i$}
\State $\hat{\vartheta}_{1,j}=\hat{\vartheta}_{1,i-1}\mathbbm{1}_{\{1,\ldots,i-1\}}+\hat{\theta}_{i,j}\mathbbm{1}_{\{i,\ldots,j\}}$
\EndIf
\Else
\State \textbf{break}\Comment{add a change-point}
\EndIf
\EndFor
\If{$i=R_k$}
\State$L_k=j+1$\Comment{update the left bound}
\EndIf
\EndFor
\State $R_{k+1}=i$ \Comment{update the right bound}
\State $L_{k+1}=R_k$
\EndWhile
\RETURN $\hat{\vartheta}$
\end{algorithmic}
\end{algorithm}
\end{spacing}
\subsection{The Multiscale Segment Boxplot}\label{sec:mbp}
When calculating the Multiscale Segment Boxplot (MSB)  (0.25-, 0.5-, and 0.75-quantiles) and two or more of the quantiles have an estimated change-point at a similar location, we might prefer an estimator that chooses the same location, instead of slightly different locations.
This will also avoid, with high probability, intersections in the three different quantiles  which eases interpretation (see e.g. \citep{he1997, chernozhukov2010}). 
Therefore, for the MSB we will modify the estimated MQSE change-point locations whenever the changes in different quantiles are sufficiently close.
Thereby, we ensure that the resulting estimator is still an element of the set $\cH$ in (\ref{eq:H}) for each quantile, thus all theoretical guarantees given in Section \ref{sec:theory} in the main text still hold.
Essentially, whenever the confidence intervals for change-points of two quantiles overlap, we select a common change-point for the MSB.

To do this, we first run the MQS algorithm to obtain the confidence statements, recall Section \ref{sec:detailsImplementation}.
To ensure locally simultaneous changes, if the confidence intervals of two or three quantiles intersect, we choose a common change-point inside the intersection.
The particular choice we make is the mean of the corresponding MQS change-point estimators if it is inside the intersection, otherwise we choose the upper or lower limit of the interval.
Then, after choosing the change-point, the estimated signal value is the empirical quantile of the delimited segment whenever this is contained in the confidence band. 
Otherwise we choose the upper or lower limit of the band.

By restricting to locally simultaneous changes, one typically also avoids the problem of quantile-crossing, that is, the estimated quantile function for some quantile $\beta$ exceeds in some region the estimate for some quantile $\beta^\prime > \beta$.
More generally, when estimating $k$ quantile functions simultaneously, using a Bonferroni correction, we obtain simultaneous coverage of the underlying quantile functions by our confidence bands with probability at least $1 - k \alpha$.
As the underlying quantile functions do not cross, by making an appropriate choice within the confidence band, quantile crossing happens with probability at most $k \alpha$.
In practice, we found that by making these adjustment quantile crossing for the MSB has a very low probability of happening. 
In 1,000 Monte Carlo simulations of MSB with $\alpha = 0.05$ for the Gaussian additive model as in Figure \ref{fig:bigex}, we did not find any quantile crossing.
\section{Simulation results on confidence statements}\label{sec:conf}

MQS also provides confidence statements for the number of segments, the change-point locations, and finally the underlying segment function itself (see Theorem \ref{theo:confidenceset}). 
In this section, we investigate the behavior of such confidence statements in the additive error models as considered in Section \ref{subsec:simplesec} in the main text, see Figure \ref{fig:bigex}, with $n=2485$. 

Simulation results are shown in Table \ref{tab:confidencetable}. 
As demonstrated in Section \ref{subsec:simplesec} in the main text, the number of segments is estimated correctly with high probability for $\beta=0.5,0.75$ and slightly underestimated for $\beta=0.25$. Our method gives the theoretical guarantee that $P(\hat{S} \leq S) \geq 1 - \alpha$ (see (\ref{eq:1statg})). In Table \ref{tab:confidencetable}, we give the frequency of $\hat{S}\leq S$. For all considered examples, (\ref{eq:1statg}) is fulfilled for a much larger level than required. 
Moreover, given that the number of segments is estimated correctly, we construct simultaneous confidence intervals for the true change-point locations. 
In Table \ref{tab:confidencetable} column "CI cov." (short for Confidence Interval coverage), we give the frequency that all the change-points $\tau_s$ were inside the respective confidence intervals (given that $\hat{S}=S$). We find that with very high probability, all the change-points are covered by the confidence intervals. 
We also provide confidence bands that cover the true function with probability at least $1-\alpha$ (see Theorem \ref{theo:confidenceset}), given that the number of change-points is estimated correctly. In Table \ref{tab:confidencetable} column "CB cov." (short for Confidence Band coverage), we give the frequency that the true function $\vartheta$ was fully inside the confidence band (given that $\hat{S}=S$). Here, we see that the nominal level is kept, but the confidence bands are, in general, too conservative.
%
%
\begin{table}
\centering
\resizebox{\textwidth}{!}{%
\begin{tabular}{ c|c|c|c|c|c|c|c|c|c } 
& $\beta$& $1-\alpha$& $\mathbb{P}(\hat{S}\leq S)$ & CI cov. &CB cov.&&$\mathbb{P}(\hat{S}\leq S)$ & CI cov. &CB cov.\\
\hline
\multirow{9}{*}{\shortstack{Normal\\$\sigma^2=0.04$}} &\multirow{3}{*} {0.5}& 0.99& 1.000 & 1.000 & 0.998&\multirow{9}{*}{\shortstack{t (3 d.f.)\\$\sigma^2=0.04$}}&1.000 & 1.000 & 0.996\\
& & 0.95& 1.000 & 1.000 & 0.977& &1.000 & 1.000 & 0.974\\
& & 0.90& 0.997 & 1.000 & 0.936& & 0.998 & 1.000 & 0.947\\
\cline{2-6}
\cline{8-10}
&\multirow{3}{*} {0.25}& 0.99& 1.000 & 1.000 & 1.000& &1.000 & 1.000 & 1.000\\
& & 0.95& 1.000 & 1.000 & 0.994& &1.000 & 1.000 & 0.996\\
& & 0.90& 0.999 & 1.000 & 0.951& &0.998 & 1.000 & 0.948\\
\cline{2-6}
\cline{8-10}
&\multirow{3}{*} {0.75}& 0.99& 1.000 & 1.000 & 0.998& &1.000 & 1.000 & 0.998\\
& & 0.95& 0.999 & 1.000 & 0.962& & 0.999 & 1.000 & 0.972\\
& & 0.90& 0.998 & 1.000 & 0.953& & 0.998 & 0.999 & 0.954\\
\hline
\multirow{9}{*}{Cauchy} &\multirow{3}{*} {0.5}& 0.99& 1.000 & 1.000 & 0.998&\multirow{9}{*}{\shortstack{$\chi^2_3$\\$\sigma^2=0.04$}}&1.000 & 1.000 & 0.993\\
& & 0.95& 0.999 & 1.000 & 0.965& &0.999 & 1.000 & 0.966\\
& & 0.90& 0.995 & 0.998 & 0.928& &1.000 & 0.998 & 0.936\\
\cline{2-6}
\cline{8-10}
&\multirow{3}{*} {0.25}& 0.99& 1.000 & 1.000 & 1.000& &1.000 & 1.000 & 1.000\\
& & 0.95& 1.000 & 1.000 & 0.997& &1.000 & 1.000 & 0.994\\
& & 0.90& 1.000 & 1.000 & 0.943& &0.998 & 1.000 & 0.955\\
\cline{2-6}
\cline{8-10}
&\multirow{3}{*} {0.75}& 0.99& 1.000 & 1.000 & 0.995& &1.000 & 1.000 & 0.992\\
& & 0.95& 1.000 & 1.000 & 0.971& &1.000 & 1.000 & 0.977\\
& & 0.90& 0.999 & 0.999 & 0.932& &0.998 & 0.999 & 0.952\\ 

\end{tabular}
}
\caption{\label{tab:confidencetable}Probability of estimating correctly the number of segments, frequency of confidence interval  coverage of the true c.p's (CI cov.), and frecuency of confidence band coverage of the true function (CB cov.), for data as in Figure \ref{fig:bigex} with different levels $\alpha$ and $n=2485$. 
}
\end{table}
\section{MQS under serial dependence}\label{sec:ser_dep}
The only assumption that is required for our theoretical results to be valid is independence of the observations as specified in the QSR model (\ref{eq:model1}).
In this section we explore empirically the MQS performance when this assumption is violated.
To this end, we consider an additive regression model with autoregressive noise.
More specifically, for $n = 500$ we consider observations of the form $Y_i = \mu_i+\sqrt{1-\theta^2}X_i$, where $\mu$ is a simple bump function with its three segments at $[1,125]$, $[126,375]$ and $[376,500]$ taking values $0$, $1$, $0$, respectively, see Figure \ref{fig:ar_ex}.
For $X_i$ we consider an AR(1) process, that is $X_{i+1} = \theta X_i +  \epsilon_t$ with $\epsilon_t\sim\cN(0,1)$, for different values of $\theta$ with $|\theta|<1$.
The autocorrelation function of the process $X_i$ is then given by $\rho_h = \theta^h$, where $h$ represents the lag.
Thereby, positive values of $\theta$ correspond to a positive autocorrelation, negative values of $\theta$ to negative autocorrelation, and $\theta = 0 $ corresponds to the independent case.
Note that since $X_i$ has variance $1/(1-\theta^2)$ we multiply the error by the variance stabilizing scaling $\sqrt{1-\theta^2}$.

Figure \ref{fig:ar_ex} shows examples for three different values of $\theta$. 
Figure \ref{fig:ar_comp} illustrates estimation performance as the parameter $\theta$ deviates from 0 for the example described above. 
For small $\theta$ values (i.e., small auto-correlation), MQS performs comparably to the independent case (red line) in terms of number of segments, MISE, and V-measure. 
We highlight that for negative $\theta$ the probability of estimating the correct number of segments can be even higher than in the independent case (see upper-left corner of Figure \ref{fig:ar_comp}).
This is consistent with recent general results on minimax detection boundaries for bump detection with stationary Gaussian error \cite{enikeeva2020}.
When $\theta$ gets very large the strong postive autocorrelation results in overestimation of the number of segments for the MQS (see upper-right corner of Figure \ref{fig:ar_comp}). 

\begin{figure}[th!]
\centering
\includegraphics[width=0.95\textwidth]{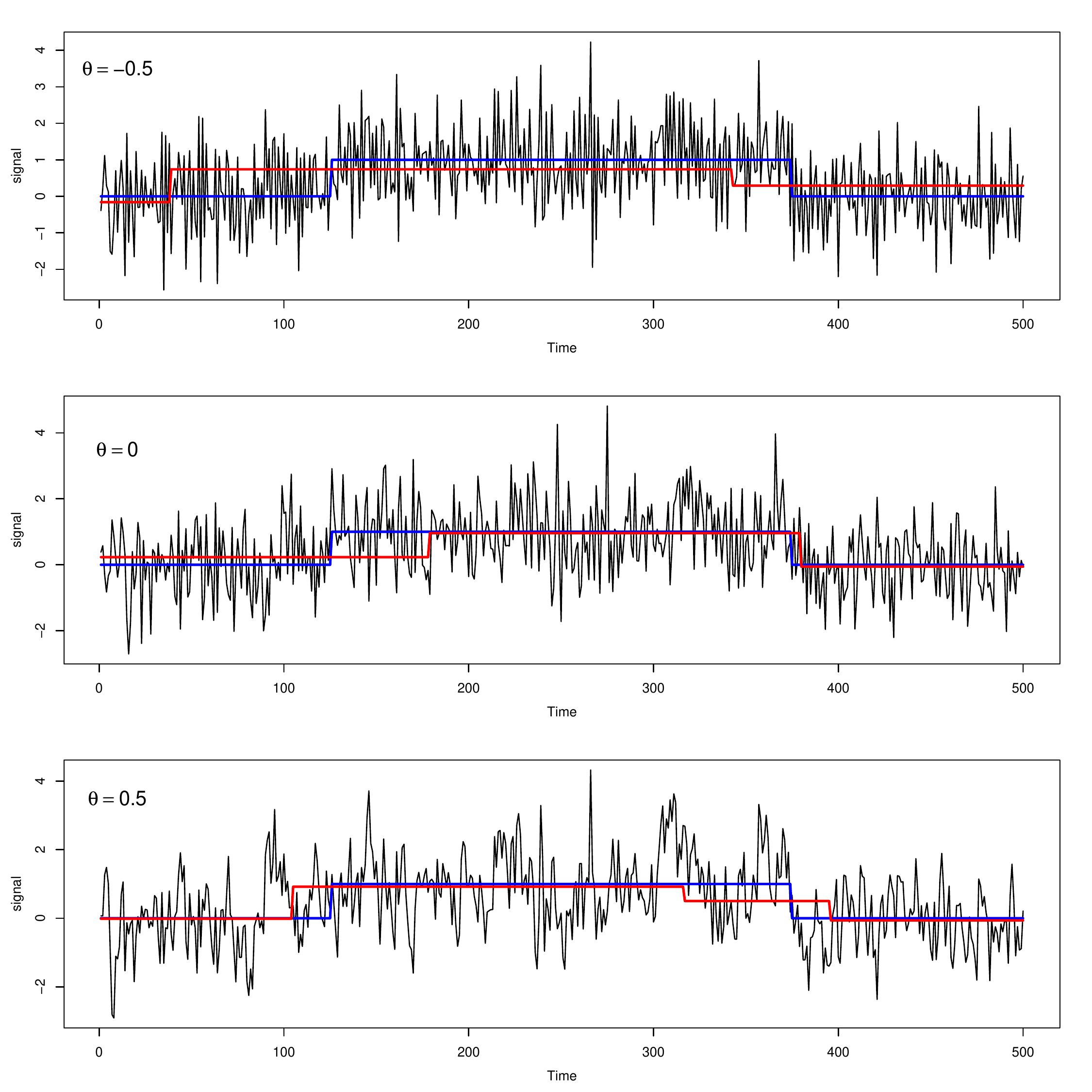}
\caption{\small 500 data points from a simple additive model $Y_i = \mu_i+\sqrt{1-\theta^2}X_i$, where $\mu_i$ is a simple bump function (blue line), and $X_i$ follows an AR(1) model with parameter $\theta$, together with MQSE (red line) for $\alpha = 0.05$.
}
\label{fig:ar_ex}
\end{figure}

\begin{figure}[th!]
\centering
\includegraphics[width=0.95\textwidth]{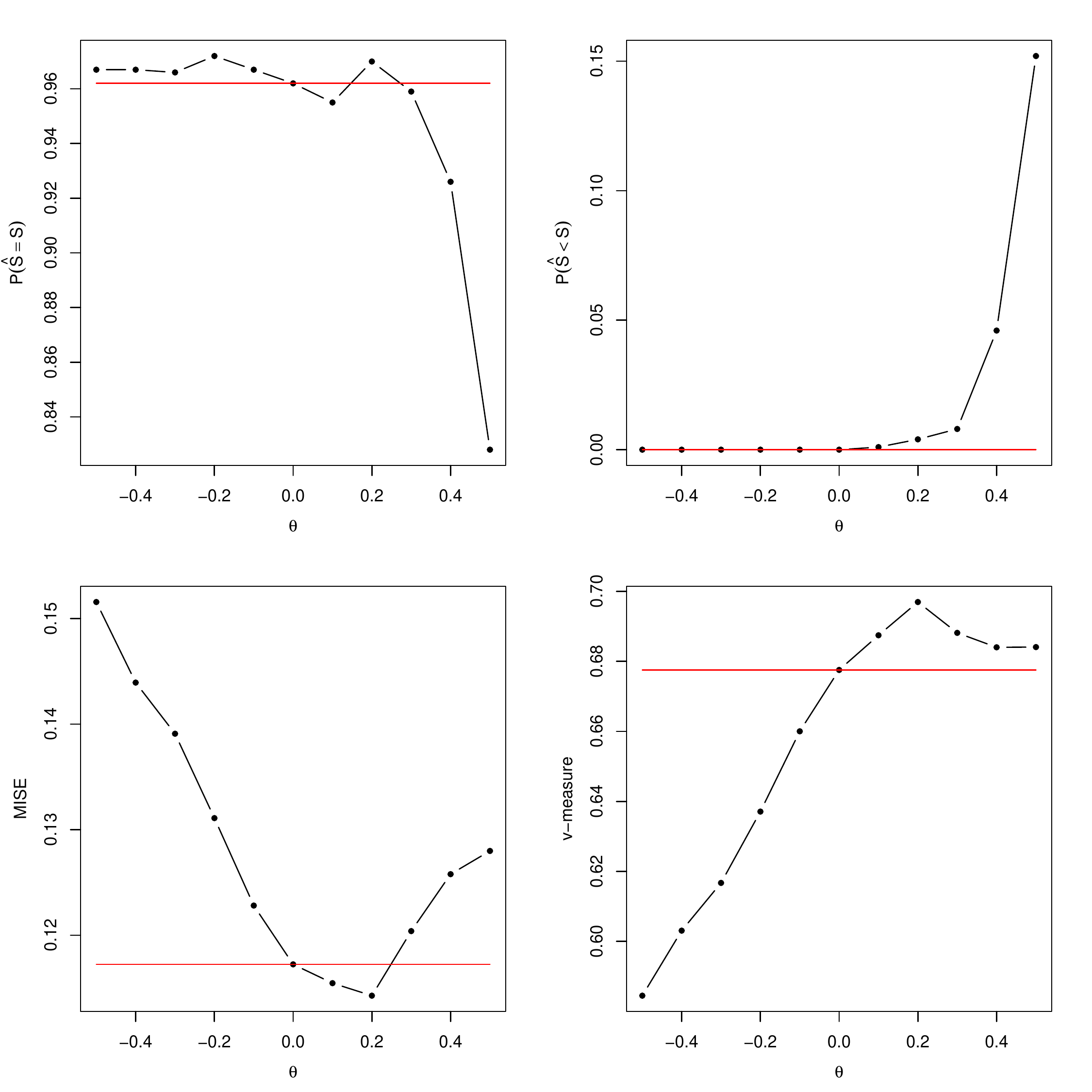}
\caption{\small Comparison between different values of the parameter $\theta$ as in Figure \ref{fig:ar_ex}. All values where obtained from 1,000 Monte-Carlo runs. The upper plots show changes in $P(\hat{S}=S)$ and $P(\hat{S}>S)$, respectively, where $\hat{S}$ is MQS' estimated number of segments. The lower plots show the MISE and V-measure for the MQSE for $\alpha = 0.05$.
}
\label{fig:ar_comp}
\end{figure}
\section{Additional Tables and Figures}

\begin{table}
\centering
\small
\resizebox{\textwidth}{!}{%
\begin{tabular}{ ccccccccc } 
&MQS&SMUCE& HSMUCE&WBS&R-FPOP&QS&NOT&NWBS\\
\shortstack{Confidence\\statements}&\checkmark & \checkmark & \checkmark & $\times$ & $\times$& $\times$&$\times$&$\times$\\
\shortstack{Consistency results}&\checkmark & \checkmark & \checkmark & \checkmark & \checkmark & $\times$&\checkmark & \checkmark\\
\shortstack{No distributional\\ assumptions}&\checkmark & $\times$ & $\times$ & $\times$ & \checkmark & \checkmark &$\times$& \checkmark\\
Target a specific quantile&\checkmark & $\times$ & $\times$ & $\times$& $(\times)$ & \checkmark & $\times$ & ($\times$)\\
\shortstack{Robust to outliers} &\checkmark & $\times$ & \checkmark & $\times$ & \checkmark & \checkmark & \checkmark & \checkmark \\
\shortstack{Robust to heterogeneity} &\checkmark & $\times$ & \checkmark & \checkmark & $\times$&\checkmark & \checkmark & $\times$\\
\shortstack{Computation time [s] \\(see caption)}&0.65&0.01 &0.02&0.07&0.01 &38&0.35&63.8
\end{tabular}
}
\caption{\label{tab:summarytable} \small Summary table of the characteristics of MQS versus SMUCE \citep{frick2014}, HSMUCE \citep{pein2017a}, WBS \citep{fryzlewicz2014}, R-FPOP \citep{fearnhead2017}, QS \citep{eilers2005}, NOT (HT contrast) \citep{baranowski2019}, and NWBS \citep{padilla2019}. The computation time comparison is based on normal noise ($n=1988$), mean as in Figure \ref{fig:bigex} and variance 0.04. With "robust to heterogeneity" we mean that the median/mean estimator is robust again changes in variance.
}
\end{table}

\begin{figure}[b!]
\centering
\includegraphics[width=0.9\textwidth]{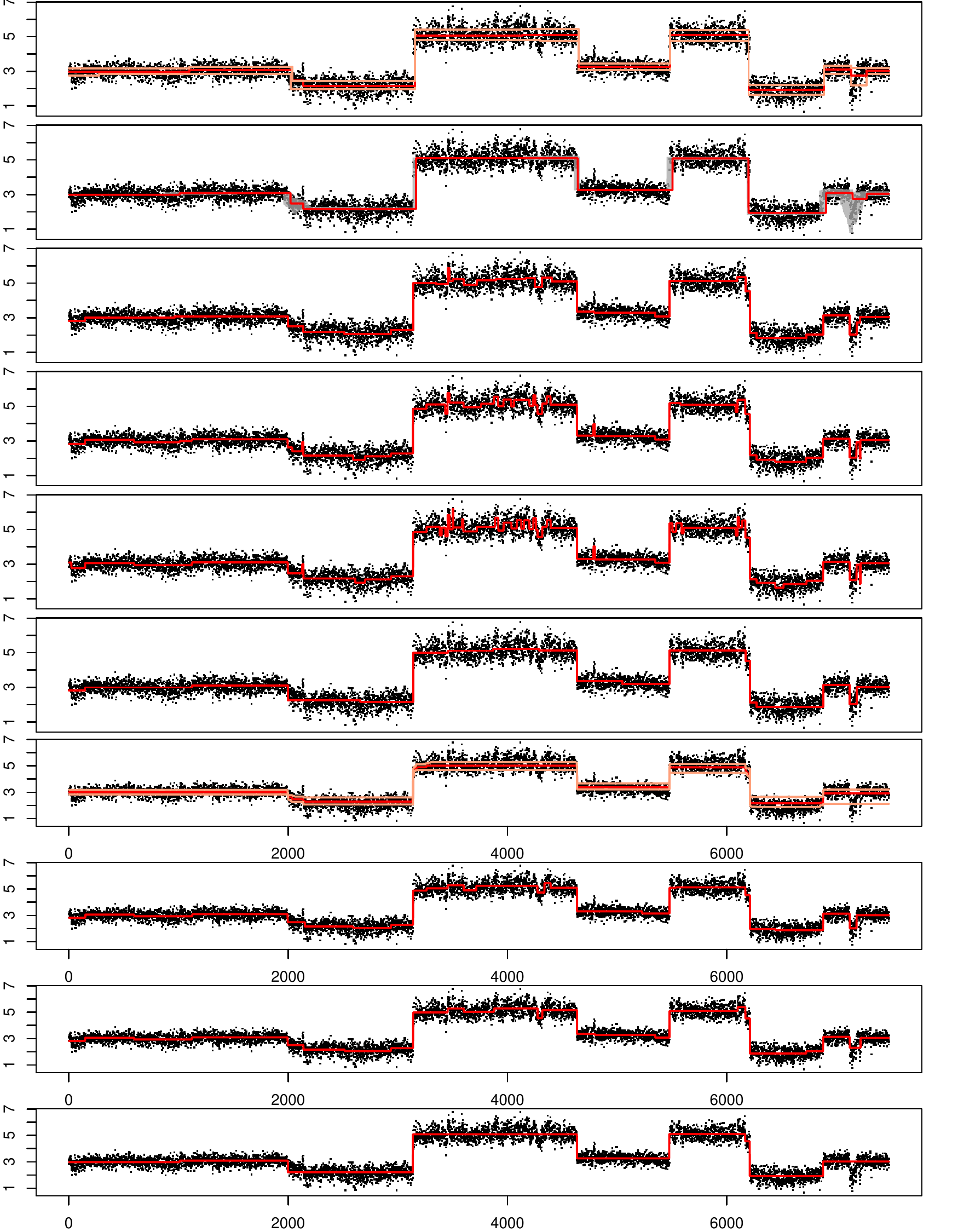}
  \caption{  Preprocessed WGS data (black dots) of cell line LS411 from colorectal cancer and different estimators for the underlying CNA's (red lines). Sequencing was performed by Complete Genomics in collaboration with the Wellcome Centre for Human Genetics at the University of Oxford.
From top to bottom: MSB with $\alpha = 0.01$, MQSE for the median with confidence band for $\alpha = 0.01$, SMUCE \citep{frick2014}, WBS \citep{fryzlewicz2014}, R-FPOP \citep{fearnhead2017}, HSMUCE \citep{pein2017a}, QS \citep{eilers2005} (estimated $0.25$, $0.5$, and $0.75$ quantiles), NOT (HT) and (VAR) \citep{baranowski2019}, and NWBS \citep{padilla2019}.}
\label{fig:wgsmulex}
\end{figure}

%
\begin{figure}
\centering
\includegraphics[width=\textwidth]{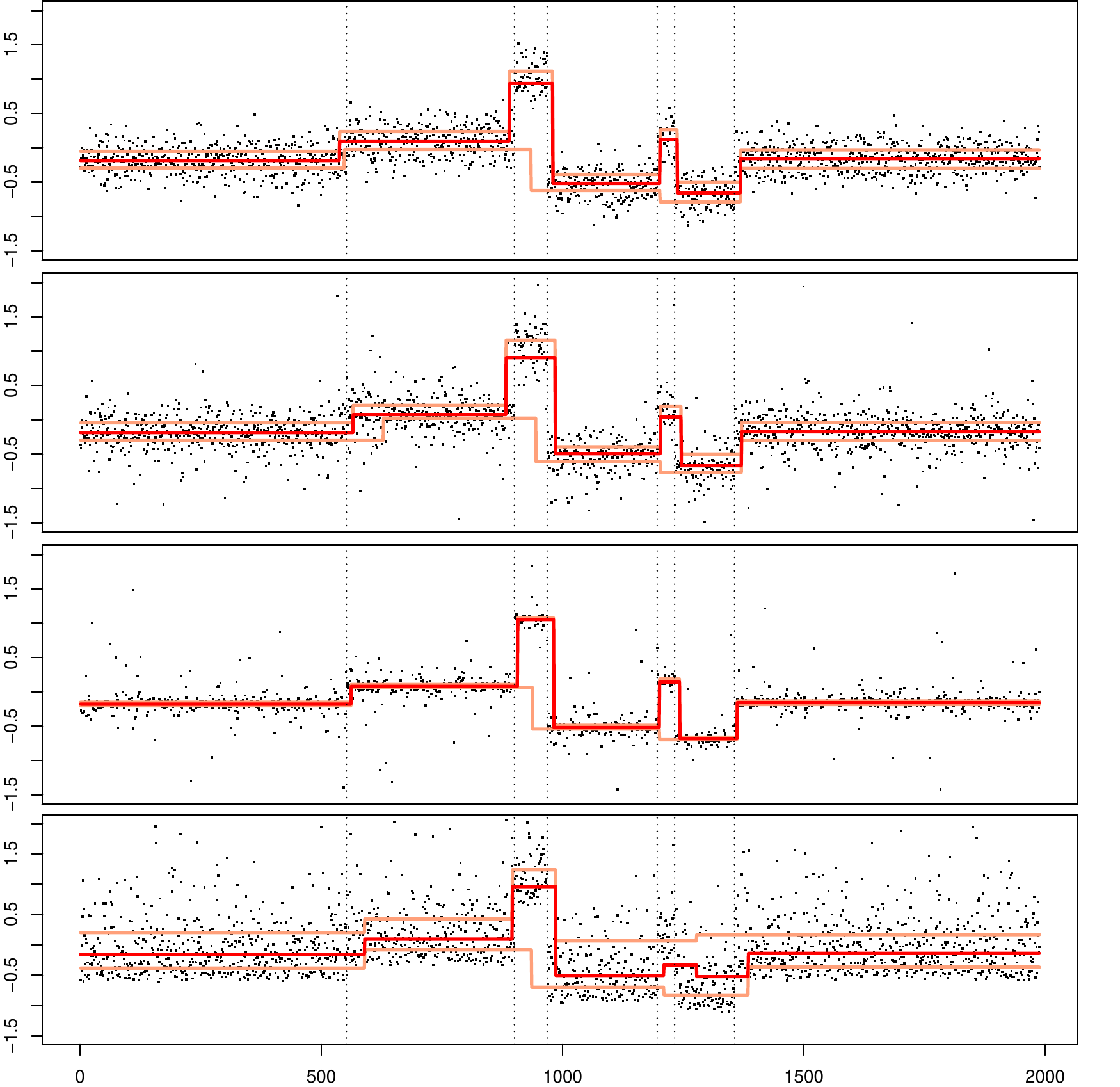}
\caption{Data as in (\ref{eq:simple}) (black dots) for $n = 1,988$ and different error terms. The true underlying segment changes are shown as dotted vertical lines.
The corresponding MSB with threshold parameter $\alpha = 0.1$ is shown in red (median) and light red lines (0.25- and 0.75 quantiles).
From top to bottom: normally distributed with variance $\sigma^2 = 0.04$, $t$ distributed with $3$ degrees of freedom and variance $\sigma^2 = 0.04$, Cauchy distributed, $\chi^2$ distributed with $3$ degrees of freedom and variance $\sigma^2 = 0.04$.
}
\label{fig:bigex}
\end{figure}

\begin{table}
\centering
\resizebox{0.92\textwidth}{!}{%
\begin{tabular}{ c|c|cc >{\columncolor[gray]{0.8}}c cc|c|c|c|c|c| >{\columncolor[gray]{0.8}}c cc|c|c } 
 &Method&$\leq 5$&6&$7$&$8$&$\geq 9$&MIAE&V-m.& &$\leq 5$&6&$7$&$8$&$\geq 9$&MIAE&V-m.\\
 \hline
 \multirow{6}{*}{\shortstack{Normal\\$\sigma^2=0.04$}}&MQS(0.5)& 0.00 & 0.70 & 99.20 & 0.10 & 0.00 & 3.83 & 9.29& \multirow{6}{*}{\shortstack{t (3 d.f.)\\$\sigma^2=0.04$}}& 0.00 & 10.30 & 89.50 & 0.20 & 0.00 & 3.38 & 9.33\\
 &SMUCE & 0.00 & 0.00 & 99.70 & 0.30 & 0.00 & 0.90 & 9.96& & 0.00 & 0.00 & 0.00 & 0.00 & 100.00 & 2.41 & 7.72\\
  &HSMUCE & 0.00 & 0.00 & 98.40 & 1.60 & 0.00 & 0.91 & 9.95& & 0.00 & 0.00 & 99.50 & 0.50 & 0.00 & 0.91 & 9.96\\
 &R-FPOP& 0.00 & 0.00 & 96.90 & 3.10 & 0.00 & 0.92 & 9.95& & 0.00 & 0.00 & 95.30 & 4.20 & 0.50 & 0.72 & 9.97\\
 &WBS& 0.00 & 0.00 & 98.00 & 1.90 & 0.10 & 0.97 & 9.95&  & 0.00 & 0.00 & 1.00 & 0.00 & 99.00 & 1.56 & 8.40\\
 &QS(0.5) & 0.00 & 0.00 & 0.10 & 0.10 & 99.80 & 7.50 & 3.54& & 0.00 & 0.00 & 0.10 & 0.10 & 99.80 & 7.49 & 3.54\\
  &NOT(HT)& 0.00 & 0.00 & 98.80 & 1.20 & 0.00 & 0.91 & 9.95& & 0.00 & 0.00 & 99.10 & 0.80 & 0.10 & 0.89 & 9.97\\
   &NOT(VAR)& 0.00 & 0.00 & 99.60 & 0.40 & 0.00 & 20.96 & 9.97& & 0.00 & 0.00 & 7.90 & 8.70 & 83.40 & 20.50 & 9.64\\
    &NWBS& 0.00 & 0.00 & 96.60 & 1.30 & 2.10 & 1.26 & 9.86& & 0.00 & 0.00 & 98.70 & 0.50 & 0.80 & 0.80 & 9.93\\
 \hline
  \multirow{5}{*}{Cauchy}&MQS(0.5)& 0.00 & 17.50 & 82.30 & 0.20 & 0.00 & 2.73 & 9.34 & \multirow{5}{*}{\shortstack{$\chi^2_3$\\$\sigma^2=0.04$}}& 0.00 & 0.00 & 99.70 & 0.30 & 0.00 & 3.53 & 9.34\\
  &SMUCE & 0.00 & 0.00 & 0.00 & 0.00 & 100.00 & 58.52 & 5.38& & 0.00 & 0.00 & 0.10 & 0.70 & 99.20 & 30.03 & 8.32\\
  &HSMUCE& 0.00 & 0.20 & 99.60 & 0.20 & 0.00 & 1.93 & 9.81& & 0.00 & 0.00 & 86.20 & 13.30 & 0.50 & 30.49 & 9.89\\
 &R-FPOP& 0.00 & 0.00 & 91.00 & 7.00 & 2.00 & 0.22 & 9.98& & 0.00 & 0.00 & 84.40 & 7.80 & 7.80 & 32.53 & 9.92\\
 &WBS& 0.00 & 0.00 & 0.00 & 0.00 & 100.00 & 81.71 & 6.94&& 0.00 & 0.00 & 26.20 & 0.30 & 73.50 & 30.30 & 9.42\\
 &QS(0.5)& 0.10 & 0.00 & 0.00 & 0.00 & 99.90 & 5.07 & 3.54& & 0.00 & 0.10 & 0.20 & 0.40 & 99.30 & 7.40 & 3.54\\
   &NOT (HT)& 56.20 & 1.80 & 41.60 & 0.20 & 0.20 & 56.25 & 6.62& & 0.00 & 0.00 & 99.40 & 0.30 & 0.30 & 30.29 & 9.96\\
   &NOT (VAR)& 0.00 & 0.00 & 0.00 & 0.10 & 99.90 & 121.06 & 8.95& & 0.00 & 0.00 & 61.20 & 15.20 & 23.60 & 29.26 & 9.89\\
    &NWBS& 0.00 & 0.10 & 99.10 & 0.40 & 0.40 & 0.19 & 9.97& & 0.00 & 0.00 & 98.20 & 0.90 & 0.90 & 1.03 & 9.93\\
\hline
\multirow{4}{*}{\shortstack{Normal\\$\sigma^2=0.04$}} &MQS(0.25)& 81.40 & 18.60 & 0.00 & 0.00 & 0.00 & 8.19 & 8.64& \multirow{4}{*}{\shortstack{t (3 d.f.)\\$\sigma^2=0.04$}}& 82.80 & 17.20 & 0.00 & 0.00 & 0.00 & 7.76 & 8.64\\
 &MQS(0.75)& 0.00 & 0.00 & 99.90 & 0.10 & 0.00 & 6.06 & 9.02& & 0.00 & 0.00 & 100.00 & 0.00 & 0.00 & 5.68 & 9.06\\
  &QS(0.25)& 0.30 & 0.10 & 1.80 & 3.60 & 94.20 & 8.75 & 3.55& & 0.00 & 0.10 & 1.00 & 2.30 & 96.60 & 7.58 & 3.55\\
 &QS(0.75)& 0.00 & 0.00 & 0.10 & 0.30 & 99.60 & 6.83 & 3.54& & 0.00 & 0.00 & 0.10 & 0.10 & 99.80 & 6.02 & 3.54\\
 \hline
 \multirow{4}{*}{Cauchy}&MQS(0.25)& 82.80 & 17.20 & 0.00 & 0.00 & 0.00 & 7.76 & 8.64& \multirow{4}{*}{\shortstack{$\chi^2_3$\\$\sigma^2=0.04$}} & 82.50 & 17.50 & 0.00 & 0.00 & 0.00 & 7.36 & 8.74\\
 &MQS(0.75)& 0.00 & 0.00 & 100.00 & 0.00 & 0.00 & 5.68 & 9.06& & 0.00 & 0.50 & 99.50 & 0.00 & 0.00 & 6.84 & 8.94\\
  &QS(0.25)& 0.00 & 0.10 & 1.00 & 2.30 & 96.60 & 7.58 & 3.55 & & 0.00 & 0.10 & 1.00 & 1.80 & 97.10 & 7.07 & 3.54\\
 &QS(0.75)& 0.00 & 0.00 & 0.10 & 0.10 & 99.80 & 6.02 & 3.54 & & 0.00 & 0.00 & 0.00 & 0.30 & 99.70 & 8.23 & 3.54\\
\end{tabular}
}
\caption{\label{tab:simpletable} \small Frequencies of estimated number of segments in $[\%]$, MIAE $(\times100)$, and V-measure $(\times10)$ for data as in Figure \ref{fig:bigex}. Here, the true number of segments equals 7. The proposed MQS estimator is compared with SMUCE \citep{frick2014}, HSMUCE  \citep{pein2017a}, WBS \citep{fryzlewicz2014}, R-FPOP \citep{fearnhead2017}, QS \citep{eilers2005}, NOT \citep{baranowski2019}, and NWBS \citep{padilla2019}. 
\smallskip
}
\end{table}


\begin{figure}
\includegraphics[width=\textwidth]{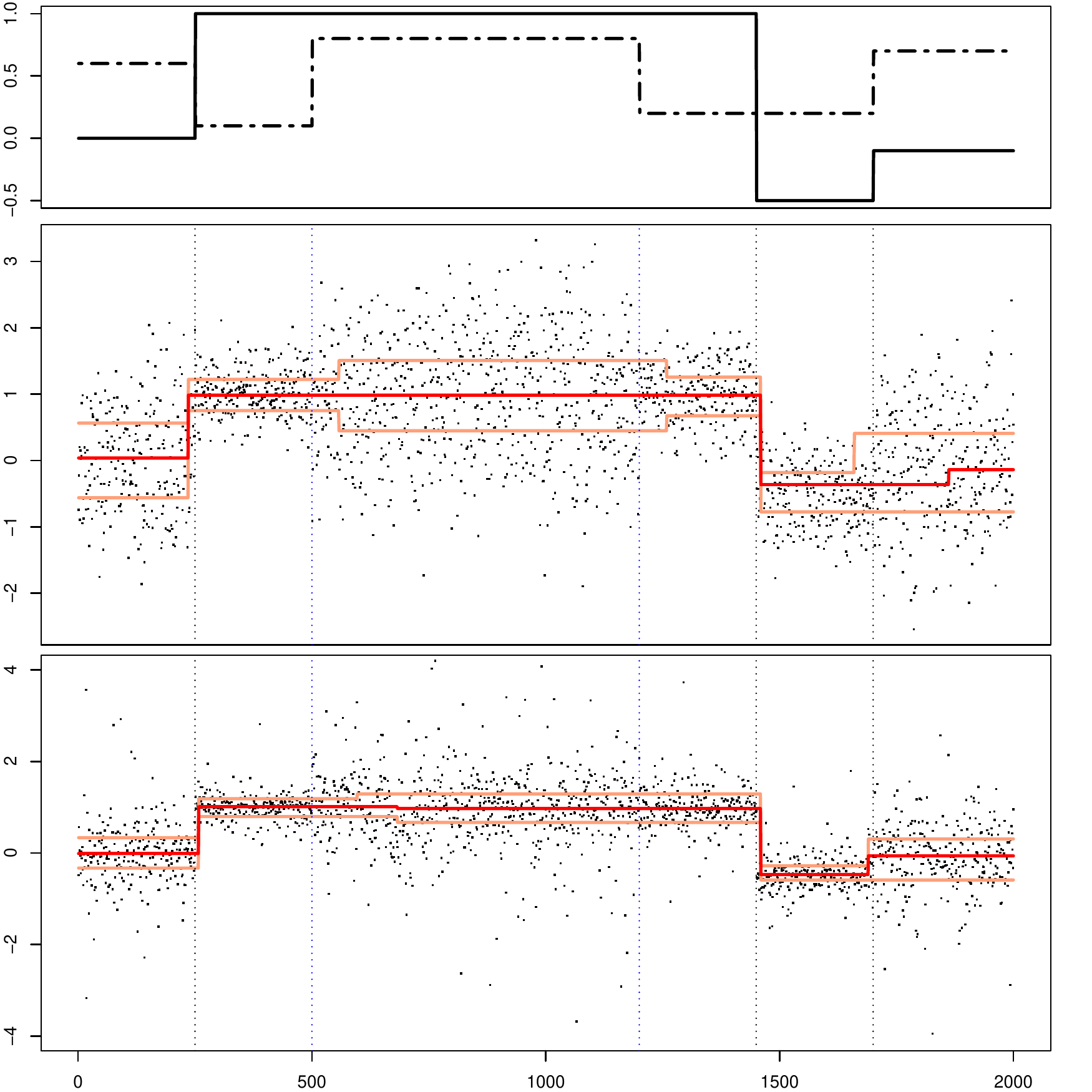}
\caption{ Top row: underlying mean (solid line) and variance (dashed line) functions. Subsequent rows: normally and $t$ distributed (from top to bottom) observations (black dots) with $n=2,000$ and mean and variance as in top row, together with the MQS box plot (red lines). The true location of segment changes is shown as a vertical dotted lines, the blue dotted lines correspond to changes only in the 0.25- and 0.75-quantiles.
}
\label{fig:varianceex}
\end{figure}

\begin{table}
\centering
\resizebox{\textwidth}{!}{%
\begin{tabular}{ c|c|c >{\columncolor[gray]{0.8}}c c >{\columncolor[gray]{0.8}}cc|c|c |c|c >{\columncolor[gray]{0.8}}c c >{\columncolor[gray]{0.8}}cc|c|c} 
 &Method&$\leq3$&4&5&6&$\geq7$&MIAE&V-m.&&$\leq3$&4&5&6&$\geq7$&MIAE&V-m.\\
 \hline
\multirow{6}{*}{Normal}&MQS(0.5)&16.08 & 83.11 & 0.81 & 0.00 & 0.00 & 6.13 & 8.67& \multirow{6}{*}{t (3 d.f.)} & 0.00 & 99.39 & 0.61 & 0.00 & 0.00 & 3.92 & 8.88\\
&SMUCE& 0.30 & 33.06 & 37.41 & 19.82 & 9.40 & 1.97 & 9.77& & 0.00 & 0.00 & 0.00 & 0.00 & 100.00 & 13.38 & 5.41\\
&HSMUCE& 24.57 & 73.61 & 1.82 & 0.00 & 0.00 & 1.96 & 9.97& & 13.09 & 85.99 & 0.92 & 0.00 & 0.00 & 2.19 & 9.95\\
&R-FPOP& 0.10 & 65.52 & 8.70 & 14.36 & 11.32 & 1.31 & 9.77& & 0.00 & 80.16 & 7.67 & 8.79 & 3.37 & 1.80 & 8.45 \\
&WBS& 0.61 & 90.60 & 1.62 & 6.17 & 1.01 & 2.02 & 9.77& & 0.00 & 0.00 & 0.00 & 0.72 & 99.28 & 2.72 & 7.75\\
&QS(0.5)& 0.00 & 0.00 & 0.10 & 0.51 & 99.39 & 9.93 & 2.55& & 0.00 & 0.00 & 0.00 & 0.51 & 99.49 & 6.49 & 2.55\\
   &NOT (HT)& 0.61 & 97.37 & 0.81 & 1.11 & 0.10 & 2.05 & 9.63& & 1.53 & 97.34 & 0.10 & 0.61 & 0.41 & 2.19 & 9.95\\
   &NOT (VAR)& 0.00 & 0.00 & 0.00 & 99.70 & 0.30 & 28.40 & 9.31& & 0.00 & 0.00 & 0.00 & 8.18 & 91.82 & 29.20 & 9.11\\
    &NWBS& 2.83 & 29.52 & 36.50 & 20.32 & 10.82 & 3.04 & 7.58& & 0.20 & 64.93 & 16.36 & 10.12 & 8.38 & 1.76 & 7.83\\
\hline
\multirow{4}{*}{Normal}&MQS(0.25)& 0.51 & 65.82 & 33.47 & 0.20 & 0.00 & 11.98 & 7.33& \multirow{4}{*}{t (3 d.f.)}& 2.04 & 57.98 & 35.79 & 4.19 & 0.00 & 12.78 & 6.81\\
&MQS(0.75)& 0.00 & 0.30 & 57.53 & 42.16 & 0.00 & 19.19 & 6.95& & 0.00 & 2.76 & 70.25 & 26.99 & 0.00 & 11.84 & 7.33\\
&QS(0.25)& 0.20 & 0.71 & 2.63 & 7.28 & 89.18 & 13.82 & 3.64& & 0.31 & 3.07 & 6.54 & 9.10 & 80.98 & 11.42 & 3.64\\
&QS(0.75)& 0.00 & 0.00 & 0.00 & 0.00 & 100.00 & 10.98 & 3.66& & 0.00 & 0.00 & 0.00 & 0.00 & 100.00 & 9.73 & 3.64\\
\end{tabular}
}
\caption{\label{tab:variance} Frequencies of estimated number of segments (in percentage), MIAE$(\times100)$ and V-measure$(\times10)$ for simultaneous changes in mean and variance, for data as in Figure \ref{fig:varianceex}. The true number of segments for the median is 4 and for the other quantiles 6.
\smallskip}
\end{table}

\begin{figure}[b!]
\centering
\includegraphics[width=0.95\textwidth]{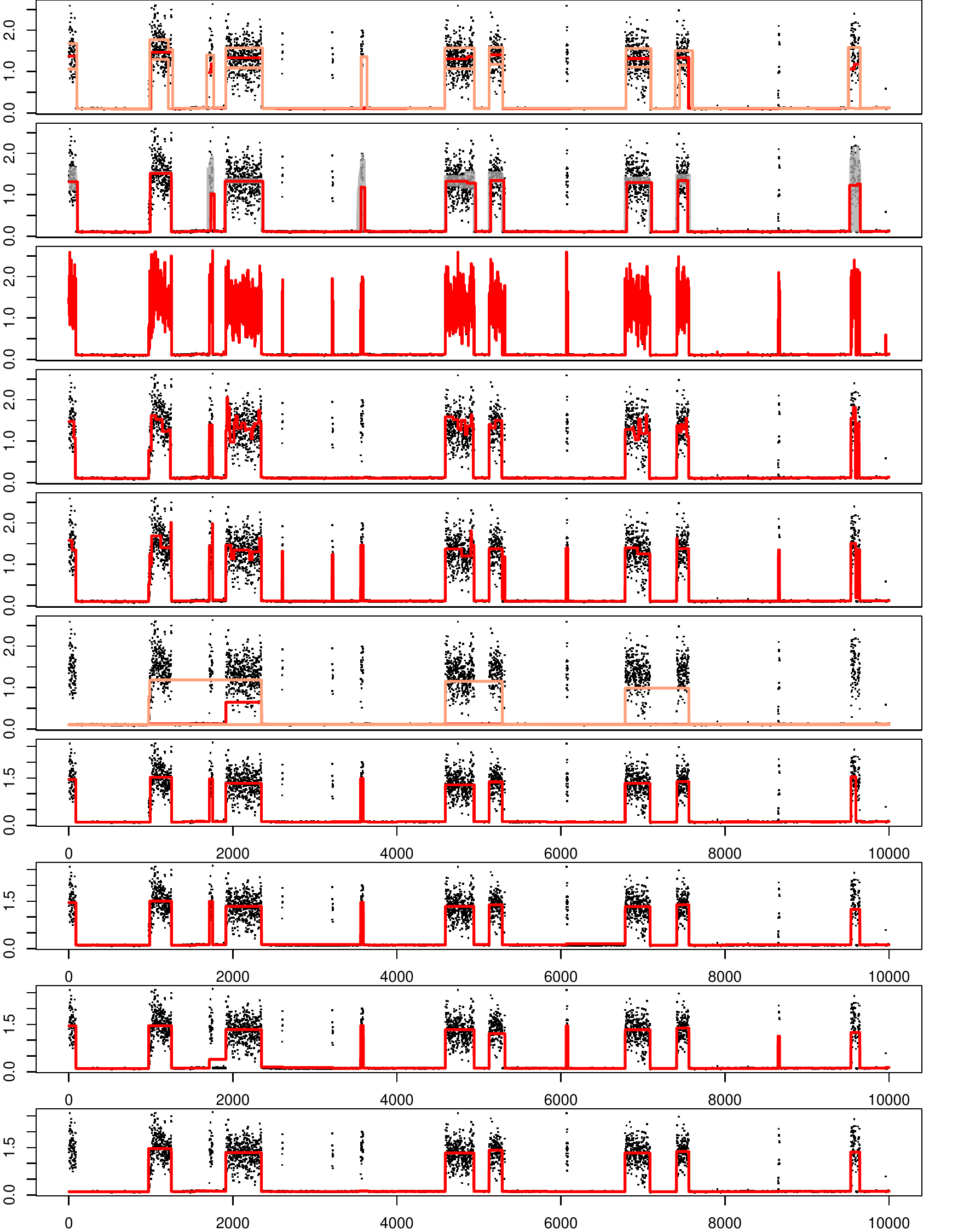}
  \caption{
Ion channel data (black dots) from a single channel of the bacterial porin PorB from the Steinam lab (Institute of Organic and Biomolecular Chemistry, University of G\"ottingen).
From top to bottom: MSB with $\alpha = 0.1$, MQSE for the median with confidence bands,
SMUCE \citep{frick2014}, R-FPOP \citep{fearnhead2017}, WBS \citep{fryzlewicz2014}, QS \citep{eilers2005} (together with 0.25- and 0.75-quantiles), HSMUCE \citep{pein2017a}, NOT(HT) and NOT (VAR) \citep{baranowski2019}, and NWBS \citep{padilla2019}.
}
\label{fig:ic}
\end{figure}
\end{appendices}
\bibliographystyle{chicago}  
\bibliography{MQR_Paper}

\begin{thebibliography}{}

\bibitem[\protect\citeauthoryear{Astola and Campbell}{Astola and
  Campbell}{1989}]{astola1989}
Astola, J.~T. and T.~G. Campbell (1989).
\newblock On computation of the running median.
\newblock {\em IEEE Trans. Acoust.\/}~{\em 37\/}(4), 572--574.

\bibitem[\protect\citeauthoryear{Aue, Cheung, Lee, and Zhong}{Aue
  et~al.}{2014}]{aue2014}
Aue, A., R.~C.~Y. Cheung, T.~C.~M. Lee, and M.~Zhong (2014).
\newblock Segmented model selection in quantile regression using the minimum
  description length principle.
\newblock {\em J. Am. Stat. Assoc.\/}~{\em 109\/}(507), 1241--1256.

\bibitem[\protect\citeauthoryear{Aue, Cheung, Lee, and Zhong}{Aue
  et~al.}{2017}]{aue2017}
Aue, A., R.~C.~Y. Cheung, T.~C.~M. Lee, and M.~Zhong (2017).
\newblock Piecewise quantile autoregressive modeling for nonstationary time
  series.
\newblock {\em Bernoulli\/}~{\em 23\/}(1), 1--22.

\bibitem[\protect\citeauthoryear{Baranowski, Chen, and Fryzlewicz}{Baranowski
  et~al.}{2019}]{baranowski2019}
Baranowski, R., Y.~Chen, and P.~Fryzlewicz (2019).
\newblock Narrowest-over-threshold detection of multiple change points and
  change-point-like features.
\newblock {\em J. R. Stat. Soc. Series B Stat. Methodol.\/}~{\em 81\/}(3),
  649--672.

\bibitem[\protect\citeauthoryear{Behr, Holmes, and Munk}{Behr
  et~al.}{2018}]{behr2018}
Behr, M., C.~Holmes, and A.~Munk (2018).
\newblock Multiscale blind source separation.
\newblock {\em Ann. Stat.\/}~{\em 46\/}(2), 711--744.

\bibitem[\protect\citeauthoryear{Bellman}{Bellman}{1954}]{bellman1954}
Bellman, R. (1954).
\newblock The theory of dynamic programming.
\newblock {\em Bull. Amer. Math. Soc.\/}~{\em 60\/}(6), 503--515.

\bibitem[\protect\citeauthoryear{Belloni and Chernozhukov}{Belloni and
  Chernozhukov}{2011}]{belloni2011}
Belloni, A. and V.~Chernozhukov (2011).
\newblock L1-penalized quantile regression in high-dimensional sparse models.
\newblock {\em Ann. Stat.\/}~{\em 39\/}(1), 82--130.

\bibitem[\protect\citeauthoryear{Boysen, Kempe, Liebscher, Munk, and
  Wittich}{Boysen et~al.}{2009}]{boysen2009}
Boysen, L., A.~Kempe, V.~Liebscher, A.~Munk, and O.~Wittich (2009).
\newblock Consistencies and rates of convergence of jump-penalized least
  squares estimators.
\newblock {\em Ann. Stat.\/}~{\em 37\/}(1), 157--183.

\bibitem[\protect\citeauthoryear{Cai and Xiong}{Cai and Xiong}{2012}]{cai2012}
Cai, Z. and H.~Xiong (2012).
\newblock Partially varying coefficient instrumental variables models.
\newblock {\em Stat. Neerl.\/}~{\em 66\/}(2), 85--110.

\bibitem[\protect\citeauthoryear{Celisse, Marot, {Pierre-Jean}, and
  Rigaill}{Celisse et~al.}{2018}]{celisse2018}
Celisse, A., G.~Marot, M.~{Pierre-Jean}, and G.~J. Rigaill (2018).
\newblock New efficient algorithms for multiple change-point detection with
  reproducing kernels.
\newblock {\em Comput. Stat. Data Anal.\/}~{\em 128}, 200--220.

\bibitem[\protect\citeauthoryear{Chernozhukov, Fern{\'a}ndez-Val, and
  Galichon}{Chernozhukov et~al.}{2010}]{chernozhukov2010}
Chernozhukov, V., I.~Fern{\'a}ndez-Val, and A.~Galichon (2010).
\newblock Quantile and {{Probability Curves Without Crossing}}.
\newblock {\em Econometrica\/}~{\em 78\/}(3), 1093--1125.

\bibitem[\protect\citeauthoryear{Chu and Chen}{Chu and Chen}{2019}]{chu2019}
Chu, L. and H.~Chen (2019).
\newblock Asymptotic distribution-free change-point detection for multivariate
  and non-{{Euclidean}} data.
\newblock {\em Ann. Stat.\/}~{\em 47\/}(1), 382--414.

\bibitem[\protect\citeauthoryear{Chung, Anderson, and Krishnamurthy}{Chung
  et~al.}{2007}]{chung2007}
Chung, S.-H., O.~S. Anderson, and V.~V. Krishnamurthy (Eds.) (2007).
\newblock {\em Biological Membrane Ion Channels: Dynamics, Structure, and
  Applications}.
\newblock Biological and {{Medical Physics}}, {{Biomedical Engineering}}. {New
  York}: {Springer-Verlag}.

\bibitem[\protect\citeauthoryear{Cribben and Yu}{Cribben and
  Yu}{2017}]{cribben2017}
Cribben, I. and Y.~Yu (2017).
\newblock Estimating whole-brain dynamics by using spectral clustering.
\newblock {\em J. R. Stat. Soc. Ser. C Appl. Stat.\/}~{\em 66\/}(3), 607--627.

\bibitem[\protect\citeauthoryear{Davies, H{\"o}henrieder, and
  Kr{\"a}mer}{Davies et~al.}{2012}]{davies2012}
Davies, L., C.~H{\"o}henrieder, and W.~Kr{\"a}mer (2012).
\newblock Recursive computation of piecewise constant volatilities.
\newblock {\em Comput. Stat. Data Anal.\/}~{\em 56\/}(11), 3623--3631.

\bibitem[\protect\citeauthoryear{Davies and Kovac}{Davies and
  Kovac}{2001}]{davies2001}
Davies, P.~L. and A.~Kovac (2001).
\newblock Local extremes, runs, strings and multiresolution.
\newblock {\em Ann. Stat.\/}~{\em 29\/}(1), 1--48.

\bibitem[\protect\citeauthoryear{Du, Kao, and Kou}{Du et~al.}{2016}]{du2016}
Du, C., C.-L.~M. Kao, and S.~C. Kou (2016).
\newblock Stepwise signal extraction via marginal likelihood.
\newblock {\em J. Am. Stat. Assoc.\/}~{\em 111\/}(513), 314--330.

\bibitem[\protect\citeauthoryear{D{\"u}mbgen}{D{\"u}mbgen}{1998}]{dumbgen1998}
D{\"u}mbgen, L. (1998).
\newblock New goodness-of-fit tests and their application to nonparametric
  confidence sets.
\newblock {\em Ann. Stat.\/}~{\em 26\/}(1), 288--314.

\bibitem[\protect\citeauthoryear{D{\"u}mbgen and Kovac}{D{\"u}mbgen and
  Kovac}{2009}]{dumbgen2009}
D{\"u}mbgen, L. and A.~Kovac (2009).
\newblock Extensions of smoothing via taut strings.
\newblock {\em Electron. J. Stat.\/}~{\em 3\/}(0), 41--75.

\bibitem[\protect\citeauthoryear{D{\"u}mbgen and Spokoiny}{D{\"u}mbgen and
  Spokoiny}{2001}]{dumbgen2001}
D{\"u}mbgen, L. and V.~G. Spokoiny (2001).
\newblock Multiscale testing of qualitative hypotheses.
\newblock {\em Ann. Stat.\/}~{\em 29\/}(1), 124--152.

\bibitem[\protect\citeauthoryear{D{\"u}mbgen and Walther}{D{\"u}mbgen and
  Walther}{2008}]{dumbgen2008}
D{\"u}mbgen, L. and G.~Walther (2008).
\newblock Multiscale inference about a density.
\newblock {\em Ann. Stat.\/}~{\em 36\/}(4), 1758--1785.

\bibitem[\protect\citeauthoryear{Eilers, Menezes, and X}{Eilers
  et~al.}{2005}]{eilers2005}
Eilers, P. H.~C., D.~Menezes, and R.~X (2005).
\newblock Quantile smoothing of array {{CGH}} data.
\newblock {\em Bioinformatics\/}~{\em 21\/}(7), 1146--1153.

\bibitem[\protect\citeauthoryear{Enikeeva, Munk, Pohlmann, and Werner}{Enikeeva
  et~al.}{2020}]{enikeeva2020}
Enikeeva, F., A.~Munk, M.~Pohlmann, and F.~Werner (2020).
\newblock Bump detection in the presence of dependency: {{Does}} it ease or
  does it load?
\newblock {\em arXiv:1906.08017 [math, stat]\/}.

\bibitem[\protect\citeauthoryear{Fearnhead}{Fearnhead}{2006}]{fearnhead2006}
Fearnhead, P. (2006).
\newblock Exact and efficient {{Bayesian}} inference for multiple changepoint
  problems.
\newblock {\em Stat. Comput.\/}~{\em 16\/}(2), 203--213.

\bibitem[\protect\citeauthoryear{Fearnhead and Rigaill}{Fearnhead and
  Rigaill}{2017}]{fearnhead2017}
Fearnhead, P. and G.~Rigaill (2017).
\newblock Changepoint detection in the presence of outliers.
\newblock {\em J. Am. Stat. Assoc.\/}, 1--15.

\bibitem[\protect\citeauthoryear{Frick, Munk, and Sieling}{Frick
  et~al.}{2014}]{frick2014}
Frick, K., A.~Munk, and H.~Sieling (2014).
\newblock Multiscale change point inference.
\newblock {\em J. R. Stat. Soc. Series B Stat. Methodol.\/}~{\em 76\/}(3),
  495--580.

\bibitem[\protect\citeauthoryear{Friedrich, Kempe, Liebscher, and
  Winkler}{Friedrich et~al.}{2008}]{friedrich2008}
Friedrich, F., A.~Kempe, V.~Liebscher, and G.~Winkler (2008).
\newblock Complexity penalized {{M}}-estimation: Fast computation.
\newblock {\em J. Comput. Graph. Stat.\/}~{\em 17\/}(1), 201--224.

\bibitem[\protect\citeauthoryear{Fryzlewicz}{Fryzlewicz}{2014}]{fryzlewicz2014}
Fryzlewicz, P. (2014).
\newblock Wild binary segmentation for multiple change-point detection.
\newblock {\em Ann. Stat.\/}~{\em 42\/}(6), 2243--2281.

\bibitem[\protect\citeauthoryear{Fryzlewicz}{Fryzlewicz}{2018}]{fryzlewicz2018}
Fryzlewicz, P. (2018).
\newblock Tail-greedy bottom-up data decompositions and fast multiple
  change-point detection.
\newblock {\em Ann. Stat.\/}~{\em 46\/}(6B), 3390--3421.

\bibitem[\protect\citeauthoryear{Gao, Han, and Zhang}{Gao
  et~al.}{2019}]{gao2019}
Gao, C., F.~Han, and C.-H. Zhang (2019).
\newblock On {{Estimation}} of {{Isotonic Piecewise Constant Signals}}.
\newblock {\em Ann. Stat. To appear\/}.

\bibitem[\protect\citeauthoryear{Gnanasambandam, Nielsen, Nicolai, Sachs,
  Hofgaard, and Dreyer}{Gnanasambandam et~al.}{2017}]{gnanasambandam2017}
Gnanasambandam, R., M.~S. Nielsen, C.~Nicolai, F.~Sachs, J.~P. Hofgaard, and
  J.~K. Dreyer (2017).
\newblock Unsupervised idealization of ion channel recordings by minimum
  description length: Application to human {{PIEZO1}}-channels.
\newblock {\em Front. Neuroinform.\/}~{\em 11}.

\bibitem[\protect\citeauthoryear{Hallin, Paindaveine, and {\v S}iman}{Hallin
  et~al.}{2010}]{hallin2010}
Hallin, M., D.~Paindaveine, and M.~{\v S}iman (2010).
\newblock Multivariate quantiles and multiple-output regression quantiles:
  {{From L1}} optimization to halfspace depth.
\newblock {\em Ann. Stat.\/}~{\em 38\/}(2), 635--669.

\bibitem[\protect\citeauthoryear{Harchaoui and {L{\'e}vy-Leduc}}{Harchaoui and
  {L{\'e}vy-Leduc}}{2010}]{harchaoui2010}
Harchaoui, Z. and C.~{L{\'e}vy-Leduc} (2010).
\newblock Multiple change-point estimation with a total variation penalty.
\newblock {\em J. Am. Stat. Assoc.\/}~{\em 105\/}(492), 1480--1493.

\bibitem[\protect\citeauthoryear{Haynes, Fearnhead, and Eckley}{Haynes
  et~al.}{2017}]{haynes2017}
Haynes, K., P.~Fearnhead, and I.~A. Eckley (2017).
\newblock A computationally efficient nonparametric approach for changepoint
  detection.
\newblock {\em Stat. Comput.\/}~{\em 27\/}(5), 1293--1305.

\bibitem[\protect\citeauthoryear{He}{He}{1997}]{he1997}
He, X. (1997).
\newblock Quantile {{Curves}} without {{Crossing}}.
\newblock {\em Am. Stat.\/}~{\em 51\/}(2), 186--192.

\bibitem[\protect\citeauthoryear{Jeng, Cai, and Li}{Jeng
  et~al.}{2010}]{jeng2010}
Jeng, X.~J., T.~T. Cai, and H.~Li (2010).
\newblock Optimal sparse segment identification with application in copy number
  variation analysis.
\newblock {\em J. Am. Stat. Assoc.\/}~{\em 105\/}(491), 1156--1166.

\bibitem[\protect\citeauthoryear{J{\'o}n{\'a}s, Taus, Kosiol, Schl{\"o}tterer,
  and Futschik}{J{\'o}n{\'a}s et~al.}{2016}]{jonas2016}
J{\'o}n{\'a}s, {\'A}., T.~Taus, C.~Kosiol, C.~Schl{\"o}tterer, and A.~Futschik
  (2016).
\newblock Estimating the effective population size from temporal allele
  frequency changes in experimental evolution.
\newblock {\em Genetics\/}~{\em 204\/}(2), 723--735.

\bibitem[\protect\citeauthoryear{Killick, Fearnhead, and Eckley}{Killick
  et~al.}{2012}]{killick2012}
Killick, R., P.~Fearnhead, and I.~A. Eckley (2012).
\newblock Optimal detection of changepoints with a linear computational cost.
\newblock {\em J. Am. Stat. Assoc.\/}~{\em 107\/}(500), 1590--1598.

\bibitem[\protect\citeauthoryear{Koenker}{Koenker}{2005}]{koenker2005}
Koenker, R. (2005).
\newblock {\em Quantile Regression}.
\newblock {Cambridge University Press}.

\bibitem[\protect\citeauthoryear{Lee, Liao, Seo, and Shin}{Lee
  et~al.}{2018}]{lee2018}
Lee, S., Y.~Liao, M.~H. Seo, and Y.~Shin (2018).
\newblock Oracle {{Estimation}} of a {{Change Point}} in {{High}}-{{Dimensional
  Quantile Regression}}.
\newblock {\em J. Am. Stat. Assoc.\/}~{\em 113\/}(523), 1184--1194.

\bibitem[\protect\citeauthoryear{Li, Munk, and Sieling}{Li
  et~al.}{2016}]{li2016}
Li, H., A.~Munk, and H.~Sieling (2016).
\newblock {{FDR}}-control in multiscale change-point segmentation.
\newblock {\em Electron. J. Stat.\/}~{\em 10\/}(1), 918--959.

\bibitem[\protect\citeauthoryear{Li and Zhu}{Li and Zhu}{2007}]{li2007}
Li, Y. and J.~Zhu (2007).
\newblock Analysis of array {{CGH}} data for cancer studies using fused
  quantile regression.
\newblock {\em Bioinformatics\/}~{\em 23\/}(18), 2470--2476.

\bibitem[\protect\citeauthoryear{Liu, Morrison, Johnson, Trump, Qin, Conroy,
  Wang, and Liu}{Liu et~al.}{2013}]{liu2013}
Liu, B., C.~D. Morrison, C.~S. Johnson, D.~L. Trump, M.~Qin, J.~C. Conroy,
  J.~Wang, and S.~Liu (2013).
\newblock Computational methods for detecting copy number variations in cancer
  genome using next generation sequencing: Principles and challenges.
\newblock {\em Oncotarget\/}~{\em 4\/}(11), 1868--1881.

\bibitem[\protect\citeauthoryear{Machado and Silva}{Machado and
  Silva}{2005}]{machado2005}
Machado, J. A.~F. and J.~M. C.~S. Silva (2005).
\newblock Quantiles for counts.
\newblock {\em J. Am. Stat. Assoc.\/}~{\em 100\/}(472), 1226--1237.

\bibitem[\protect\citeauthoryear{Massart}{Massart}{1990}]{massart1990}
Massart, P. (1990).
\newblock The tight constant in the {{Dvoretzky}}-{{Kiefer}}-{{Wolfowitz}}
  inequality.
\newblock {\em Ann. Probab.\/}~{\em 18\/}(3), 1269--1283.

\bibitem[\protect\citeauthoryear{Matteson and James}{Matteson and
  James}{2014}]{matteson2014}
Matteson, D.~S. and N.~A. James (2014).
\newblock A nonparametric approach for multiple change point analysis of
  multivariate data.
\newblock {\em J. Am. Stat. Assoc.\/}~{\em 109\/}(505), 334--345.

\bibitem[\protect\citeauthoryear{Niu and Zhang}{Niu and Zhang}{2012}]{niu2012}
Niu, Y.~S. and H.~Zhang (2012).
\newblock The screening and ranking algorithm to detect {{DNA}} copy number
  variations.
\newblock {\em Ann. Appl. Stat.\/}~{\em 6\/}(3), 1306--1326.

\bibitem[\protect\citeauthoryear{Padilla, Yu, Wang, and Rinaldo}{Padilla
  et~al.}{2019}]{padilla2019}
Padilla, O. H.~M., Y.~Yu, D.~Wang, and A.~Rinaldo (2019).
\newblock Optimal nonparametric change point detection and localization.
\newblock {\em ArXiv preprint arXiv:1905.10019\/}.

\bibitem[\protect\citeauthoryear{Pein, Sieling, and Munk}{Pein
  et~al.}{2017}]{pein2017a}
Pein, F., H.~Sieling, and A.~Munk (2017).
\newblock Heterogeneuous change point inference.
\newblock {\em J. R. Stat. Soc. Series B Stat. Methodol.\/}~{\em 79\/}(4),
  1207--1227.

\bibitem[\protect\citeauthoryear{Preuss, Puchstein, and Dette}{Preuss
  et~al.}{2015}]{preuss2015}
Preuss, P., R.~Puchstein, and H.~Dette (2015).
\newblock Detection of multiple structural breaks in multivariate time series.
\newblock {\em J. Am. Stat. Assoc.\/}~{\em 110\/}(510), 654--668.

\bibitem[\protect\citeauthoryear{Rosenberg and Hirschberg}{Rosenberg and
  Hirschberg}{2007}]{rosenberg2007}
Rosenberg, A. and J.~Hirschberg (2007).
\newblock V-measure: A conditional entropy-based external cluster evaluation
  measure.
\newblock In {\em Proceedings of the 2007 {{Joint Conference}} on {{Empirical
  Methods}} in {{Natural Language Processing}} and {{Computational Natural
  Language Learning}}}, {Prague}, pp.\  410--420.

\bibitem[\protect\citeauthoryear{Russell and Rambaccussing}{Russell and
  Rambaccussing}{2019}]{russell2019}
Russell, B. and D.~Rambaccussing (2019).
\newblock Breaks and the statistical process of inflation: The case of
  estimating the `modern' long-run {{Phillips}} curve.
\newblock {\em Empir. Econ.\/}~{\em 56\/}(5), 1455--1475.

\bibitem[\protect\citeauthoryear{Sakmann and Neher}{Sakmann and
  Neher}{1995}]{sakmann1995b}
Sakmann, B. and E.~Neher (Eds.) (1995).
\newblock {\em Single-{{Channel Recording}}\/} (Second ed.).
\newblock {Springer US}.

\bibitem[\protect\citeauthoryear{Shapiro and Wilk}{Shapiro and
  Wilk}{1965}]{shapiro1965}
Shapiro, S.~S. and M.~B. Wilk (1965).
\newblock An analysis of variance test for normality (complete samples).
\newblock {\em Biometrika\/}~{\em 52\/}(3-4), 591--611.

\bibitem[\protect\citeauthoryear{Shen}{Shen}{2016}]{shen2016}
Shen, H. (2016).
\newblock The detection and empirical study of variance change points on
  housing prices-taking {{Wuhan City}} commodity prices as an example.
\newblock {\em Math. Financ.\/}~{\em 06}, 699.

\bibitem[\protect\citeauthoryear{Siegmund}{Siegmund}{2013}]{siegmund2013}
Siegmund, D. (2013).
\newblock Change-points: From sequential detection to biology and back.
\newblock {\em Seq. Anal.\/}~{\em 32\/}(1), 2--14.

\bibitem[\protect\citeauthoryear{Small}{Small}{1990}]{small1990}
Small, C.~G. (1990).
\newblock A {{Survey}} of {{Multidimensional Medians}}.
\newblock {\em Int. Stat. Rev.\/}~{\em 58\/}(3), 263--277.

\bibitem[\protect\citeauthoryear{Spokoiny}{Spokoiny}{2009}]{spokoiny2009}
Spokoiny, V. (2009).
\newblock Multiscale local change point detection with applications to
  value-at-risk.
\newblock {\em Ann. Stat.\/}~{\em 37\/}(3), 1405--1436.

\bibitem[\protect\citeauthoryear{Tecuapetla-G{\'o}mez and
  Munk}{Tecuapetla-G{\'o}mez and Munk}{2017}]{tecuapetla-gomez2017}
Tecuapetla-G{\'o}mez, I. and A.~Munk (2017).
\newblock Autocovariance {{Estimation}} in {{Regression}} with a
  {{Discontinuous Signal}} and m-{{Dependent Errors}}: {{A Difference}}-{{Based
  Approach}}.
\newblock {\em Scand. J. Stat.\/}~{\em 44\/}(2), 346--368.

\bibitem[\protect\citeauthoryear{Tsybakov}{Tsybakov}{2009}]{tsybakov2009}
Tsybakov, A.~B. (2009).
\newblock {\em Introduction to Nonparametric Estimation}.
\newblock Springer {{Series}} in {{Statistics}}. {New York}: {Springer-Verlag}.

\bibitem[\protect\citeauthoryear{Tukey}{Tukey}{1961}]{tukey1961}
Tukey, J.~W. (1961).
\newblock Curves as parameters, and touch estimation.
\newblock In {\em Proceedings of the {{Fourth Berkeley Symposium}} on
  {{Mathematical Statistics}} and {{Probability}}, {{Volume}} 1:
  {{Contributions}} to the {{Theory}} of {{Statistics}}}, pp.\  681--694.
  {University of California Press, Berkeley, Calif}.

\bibitem[\protect\citeauthoryear{Virji}{Virji}{2009}]{virji2009}
Virji, M. (2009).
\newblock Pathogenic neisseriae: Surface modulation, pathogenesis and infection
  control.
\newblock {\em Nat. Rev. Microbiol.\/}~{\em 7\/}(4), 274--286.

\bibitem[\protect\citeauthoryear{Wald and Wolfowitz}{Wald and
  Wolfowitz}{1940}]{wald1940}
Wald, A. and J.~Wolfowitz (1940).
\newblock On a test whether two samples are from the same population.
\newblock {\em Ann. Math. Stat.\/}~{\em 11\/}(2), 147--162.

\bibitem[\protect\citeauthoryear{Wang, Zou, and Yin}{Wang
  et~al.}{2018}]{wang2018}
Wang, G., C.~Zou, and G.~Yin (2018).
\newblock Change-point detection in multinomial data with a large number of
  categories.
\newblock {\em Ann. Stat.\/}~{\em 46\/}(5), 2020--2044.

\bibitem[\protect\citeauthoryear{Zhang and Siegmund}{Zhang and
  Siegmund}{2007}]{zhang2007}
Zhang, N.~R. and D.~O. Siegmund (2007).
\newblock A modified {{Bayes}} information criterion with applications to the
  analysis of comparative genomic hybridization data.
\newblock {\em Biometrics\/}~{\em 63\/}(1), 22--32.

\bibitem[\protect\citeauthoryear{Zou, Yin, Feng, and Wang}{Zou
  et~al.}{2014}]{zou2014}
Zou, C., G.~Yin, L.~Feng, and Z.~Wang (2014).
\newblock Nonparametric maximum likelihood approach to multiple change-point
  problems.
\newblock {\em Ann. Stat.\/}~{\em 42\/}(3), 970--1002.

\end{thebibliography}
\end{document}